\title{Untangling a Planar Graph\thanks{This paper is based on two
    preliminary versions; (a)~X.~Goaoc, J.~Kratochv\'{\i}l,
    Y.~Okamoto, Ch.-S.~Shin, and A.~Wolff: ``Moving Vertices to Make
    Drawings Plane'', Proc.\ 15th~International Symposium on Graph
    Drawing (GD'07), volume 4875 of \emph{Lecture Notes in Computer
      Science}, pages 101--112, Springer-Verlag 2008, and
    (b)~A.~Spillner and A.~Wolff: ``Untangling a Planar Graph'',
    Proc.\ 34th~International Conference on Current Trends in Theory
    and Practice of Computer Science (SOFSEM'08), volume 4910 of
    \emph{Lecture Notes in Computer Science}, pages 473--484,
    Springer-Verlag 2008.
    Our work was started at the 9th ``Korean'' Workshop on Computational
    Geometry and Geometric Networks organized by A.~Wolff and X.~Goaoc,
    July~30--August~4, 2006 in Schloss Dagstuhl, Germany.  Further
    contributions were made at the 2nd Workshop on Graph Drawing and
    Computational Geometry organized by W.~Didimo and G.~Liotta, March
    11--16, 2007 in Bertinoro, Italy.}}
\author{%
  Xavier~Goaoc\thanks{LORIA~-- INRIA Grand Est, Nancy, France. Email:
    \email{goaoc@loria.fr}}
  \and
  Jan~Kratochv{\'i}l\thanks{Department of Applied Mathematics and
    Institute of Theoretical Computer Science, Charles University,
    Czech~Republic. Email: \email{honza@kam.mff.cuni.cz}.  Supported
    by research project 1M0545 of the Czech Ministery of Education.}
  \and
  Yoshio~Okamoto\thanks{Graduate School of Information Science and
    Engineering, Tokyo Institute of Technology, Japan.
    Email: \email{okamoto@is.titech.ac.jp}.  
    Supported by
    GCOE Program ``Computationism as a Foundation for the Sciences''
    and Grant-in-Aid for Scientific Research from Ministry of Education,
    Science and Culture, Japan, and Japan Society for the Promotion of
    Science.}
  \and
  Chan-Su~Shin\thanks{School of Electronics and Information Engineering,
    Hankuk University of Foreign Studies, Yongin, Korea. Email:
    \email{cssin@hufs.ac.kr}.  Supported by Research Grant 2008 of the
    Hankuk University of Foreign Studies.}
  \and
  Andreas Spillner\thanks{School of Computing Sciences,
  University of East Anglia, Norwich, United Kingdom.  Email:
  \email{aspillner@cmp.uea.ac.uk}}
  \and
  Alexander~Wolff\thanks{Faculteit Wiskunde en Informatica, Technische
    Universiteit Eindhoven, The~Netherlands. WWW:
    \email{http://www.win.tue.nl/\~{}awolff}} }
\date{}
\renewcommand*{\backref}[1]{}
\renewcommand*{\backrefalt}[4]{%
  \small
  \ifcase #1 %
  [Not cited.]%
  \or
  [Cited on page~#2.]%
  \else
  [Cited on pages~#2.]
  \fi
}
\theoremstyle{plain}
\newtheorem{definition}{Definition}[section]
\newtheorem{theorem}[definition]{Theorem}
\newtheorem{lemma}[definition]{Lemma}
\newtheorem{observation}[definition]{Observation}
\newtheorem{proposition}[definition]{Proposition}
\newtheorem{corollary}[definition]{Corollary}
\theoremstyle{definition}
\newtheorem{remark}[definition]{Remark}
\newenvironment{keywords}{\vspace*{4ex}\centering\textbf{Keywords:}\hspace{1ex}}{\vspace*{3ex}\normalsize}
\renewcommand{\figurename}{Fig.}
\newcommand{\email}[1]{\texttt{#1}}
\newcommand{\pspace}{\ensuremath{\cal PSP\!A\,CE}}
\newcommand{\Vtop}{\ensuremath{V_\mathrm{top}}}
\newcommand{\Vbot}{\ensuremath{V_\mathrm{bot}}}
\newcommand{\Htop}{\ensuremath{\mathcal{H}_\mathrm{top}}}
\newcommand{\Gtop}{\ensuremath{G^+_\mathrm{top}}}
\newcommand{\Gbot}{\ensuremath{G^+_\mathrm{bot}}}
\newcommand{\R}{\ensuremath{\mathbb{R}}}
\newcommand{\eps}{\ensuremath{\varepsilon}}
\newcommand{\shift}{\ensuremath{\mathrm{shift}}}
\newcommand{\fix}{\ensuremath{\mathrm{fix}}}
\newcommand{\clau}{\ensuremath{C}}
\newcommand{\G}{\ensuremath{G}}
\newcommand{\HH}{\ensuremath{H}}
\newcommand{\dG}{\ensuremath{\delta_G}}
\newcommand{\dH}{\ensuremath{\delta_H}}
\renewcommand{\bottomfraction}{.99}
\begin{document}

\maketitle

\begin{abstract}
  A straight-line drawing~$\delta$ of a planar graph~$G$ need not be
  plane, but can be made so by \emph{untangling} it, that is, by
  moving some of the vertices of~$G$.  Let $\shift(G,\delta)$ denote the
  minimum number of vertices that need to be moved to
  untangle~$\delta$.  We show that $\shift(G,\delta)$ is NP-hard to
  compute and to approximate.  Our hardness results extend to a
  version of \textsc{1BendPointSetEmbeddability}, a well-known
  graph-drawing problem.

  Further we define $\fix(G,\delta)=n-\shift(G,\delta)$ to be the
  maximum number of vertices of a planar $n$-vertex graph~$G$ that can
  be fixed when untangling~$\delta$.  We give an algorithm that fixes
  at least $\sqrt{((\log n)-1)/\log \log n}$ vertices when untangling
  a drawing of an $n$-vertex graph~$G$.  If~$G$ is outerplanar, the
  same algorithm fixes at least $\sqrt{n/2}$ vertices.  On the other
  hand we construct, for arbitrarily large~$n$, an $n$-vertex planar
  graph~$G$ and a drawing~$\delta_G$ of~$G$ with $\fix(G,\delta_G) \le
  \sqrt{n-2}+1$ and an $n$-vertex outerplanar graph~$H$ and a
  drawing~$\delta_H$ of~$H$ with $\fix(H,\delta_H) \le 2
  \sqrt{n-1}+1$.  Thus our algorithm is asymptotically worst-case
  optimal for outerplanar graphs.
\end{abstract}

\begin{keywords}
   Graph drawing, straight-line drawing, planarity, NP-hardness,
   hardness of approximation, moving vertices, untangling,
   point-set embeddability.
\end{keywords}

\section{Introduction}
\label{sec:introduction}

A \emph{drawing} of a graph~$G$ maps each vertex of~$G$ to a distinct
point of the plane and each edge~$uv$ to an open Jordan curve
connecting the images of~$u$ and~$v$.  A drawing of~$G$ is
\emph{plane} if no two distinct edges \emph{cross}, that is, intersect.
By the famous theorem of
Wagner~\cite{w-b4-36}, F{\'a}ry~\cite{f-slrpg-48}, and
Stein~\cite{s-cm-51}, any planar graph admits a plane \emph{straight-line}
drawing, that is, a drawing that maps edges to straight-line segments.  
Obviously not every straight-line drawing of a planar graph
is plane.  In this paper we are exclusively interested in such 
straight-line drawings.
Thus by a drawing we will always mean a straight-line drawing.
Since a (straight-line) drawing is completely defined by the position
of the vertices, moving a vertex is a natural operation to modify such
a drawing.  If a drawing is to be made plane---or
\emph{untangled}---by successively moving vertices, it is desirable to
move as few vertices as possible.  The smaller the number of moves, the
less likely it is that an observer gets confused, that is, the more
likely the observer's \emph{mental map}~\cite{plms-lamm-95} is
preserved during a sequence of changes.  A recreational version of the
problem of minimizing the number of moves is given by Tantalo's
popular on-line game \emph{Planarity}~\cite{t-p-07}, where the aim is
to untangle a straight-line drawing as quickly as possible, again by
vertex moves.  Actually, in Tantalo's game an additional difficulty
for the player is the fixed size of the screen; Liske's \cite{l-p-05}
version of the game allows rescaling and hence is fully equivalent to
untangling.

We define the \emph{vertex-shifting distance}~$d$ between two
drawings~$\delta$ and~$\delta'$ of a graph~$G=(V,E)$ to be the number of
vertices of~$G$ whose images under~$\delta$ and~$\delta'$ differ:
\[ d(\delta, \delta') = \big|\{v \in V \mid \delta(v) \neq
\delta'(v)\}\big|.\]
Given our edit operation, $d$ represents the edit distance for
straight-line drawings of graphs (see \figurename~\ref{fig:first-example}
for an example). For a drawing~$\delta$ of a planar graph~$G$, we denote
by $\shift(G,\delta)$ the minimum number of vertices that need to be
moved in order to untangle~$\delta$.  In some sense $\shift(G,\delta)$
measures the distance of~$\delta$ from planarity.  This suggests the
following computational problem.
\begin{quote}
  \textsc{MinShiftedVertices}$(G,\delta)$: given a drawing~$\delta$ of a
  planar graph~$G$, find a plane drawing~$\delta'$ of~$G$ with
  $d(\delta,\delta') = \shift(G,\delta)$.
\end{quote}
The symmetric point of view is often helpful.  Therefore we denote by
$\fix(G,\delta)$ the maximum number of vertices that can be fixed when
untangling~$\delta$; we refer to such vertices as \emph{fixed}
vertices.  Clearly, $\fix(G,\delta) = n -\shift(G,\delta)$, where~$n$ is
the number of vertices of~$G$.  We call the corresponding problem,
that is, finding a plane drawing of a given planar graph~$G$ that
maximizes the number of fixed vertices with a given drawing~$\delta$,
\textsc{MaxFixedVertices}.  We denote by $\fix(G)$ the minimum of
$\fix(G, \delta)$ over all drawings~$\delta$ of~$G$.  Analogously, we
denote by~$\shift(G)$ the maximum of $\shift(G,
\delta)$ over all drawings~$\delta$ of~$G$.

\begin{wrapfigure}[11]{r}{4.2cm}
  \centering %
  \includegraphics{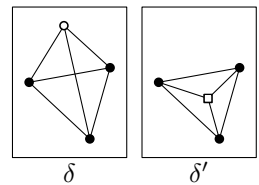}
  \caption{Two drawings of~$K_4$: $\delta$~is not plane, $\delta'$~is
    plane; $d(\delta,\delta')=1$.}
  \label{fig:first-example}
\end{wrapfigure}

Kaufmann and Wiese~\cite{kw-evpfb-02} considered the graph-drawing
problem \textsc{1BendPointSetEmbeddability} that will turn out to be
related to \textsc{MinShiftedVertices}.  They defined a planar graph
$G=(V,E)$ to be \emph{$k$-bend embeddable} if, for any set~$S$
of~$|V|$ points in the plane, there is a one-to-one correspondence 
between~$V$ and~$S$ that can be extended to a plane drawing of~$G$
with at most~$k$ bends per edge.  Kaufmann and Wiese
showed that (a)~every four-connected planar graph is 1-bend embeddable,
(b)~every planar graph is 2-bend embeddable, and (c)~given a planar
graph $G=(V,E)$ and set~$S$ of~$|V|$ points on a line, it is NP-complete
to decide whether there is a correspondence between~$V$ and~$S$ that
makes it possible to 1-bend embed~$G$ on~$S$.

The contributions we present in this paper are three-fold:

\begin{itemize}
\item We prove that the decision versions of \textsc{MaxFixedVertices}
  and \textsc{MinShiftedVertices} are NP-hard
  (Theorem~\ref{thm:hardness-mmv}) and lie in \pspace\
  (Proposition~\ref{pro:vertexmove-pspace}).  We further prove that 
  \textsc{MinShiftedVertices} is hard to approximate in the following
  sense: if there is a real $\eps \in (0,1]$ and a
  polynomial-time algorithm that guarantees to untangle any
  drawing~$\delta$ of any $n$-vertex planar graph~$G$ with at most 
  $(n^{1-\eps}) \cdot (\shift(G,\delta)+1)$ moves, then ${\cal P}
  = \cal NP$ (Theorem~\ref{thm:non-approx-mmv}).  

\item We complement the complexity result of Kaufmann and
  Wiese~\cite{kw-evpfb-02} on \textsc{1BendPointSetEmbeddability} by
  showing that it is NP-hard to decide whether a \emph{given} one-to-one
  correspondence between the vertices of a planar graph~$G$ and a
  planar point set~$S$ extends into a plane drawing of~$G$ with at
  most one bend per edge (Theorem~\ref{thm:hardness-one-bend}). We
  also show that the problem lies in \pspace\
  (Theorem~\ref{thm:1bend-pspace}) and that an optimization version of
  the problem is hard to approximate
  (Corollary~\ref{cor:non-approx-one-bend}).

\item We show that 
  $\fix(H) \geq \sqrt{n/2}$ for any $n$-vertex outerplanar graph~$H$
  (Corollary~\ref{cor:outerplanar}) and that $\fix(G) \ge
  \sqrt{\frac{(\log n) -1}{\log \log n}}$ for any general planar
  graph~$G$ with $n \ge 4$ vertices
  (Theorem~\ref{theorem:final:lower:bound}), where the base of
  logarithms is~$2$.  We also give, for arbitrarily large~$n$,
  examples of an $n$-vertex outerplanar graph~$H$ with $\fix(H) \le 2
  \sqrt{n-1} + 1$ (Theorem~\ref{thm:outerplanar}) and of an $n$-vertex
  planar graph~$G$ with $\fix(G) \le \sqrt{n-2}+1$
  (Theorem~\ref{thm:planar-upper}).  We prove the two bounds by
  using drawings where all vertices lie on a line.
\end{itemize}

\section{Previous and Related Work}

Arguably, one of the earliest results on untangling, for the $n$-path
in the real line, is the Erd\H{o}s--Szekeres theorem, which we state
here for further reference.

\begin{theorem}[Erd\H{o}s and Szekeres~\cite{es-cpg-35}]
  \label{erdos-szekeres} %
  Any sequence of~$n \ge sr+1$ different real numbers has an
  increasing subsequence of~$s+1$ terms or a decreasing subsequence
  of~$r+1$ terms. 
\end{theorem}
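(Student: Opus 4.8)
The plan is to attach to each term of the sequence a pair of positive integers that record the lengths of the longest monotone subsequences terminating at that term, and then to argue by the pigeonhole principle that these pairs cannot all be small. Concretely, write the sequence as $a_1, a_2, \dots, a_n$, and for each index~$i$ let $x_i$ be the length of the longest increasing subsequence ending at~$a_i$ and $y_i$ the length of the longest decreasing subsequence ending at~$a_i$. Note that $x_i \ge 1$ and $y_i \ge 1$ for every~$i$, since the single term $a_i$ is itself both an increasing and a decreasing subsequence of length~$1$.

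The key step is to show that the map $i \mapsto (x_i, y_i)$ is injective. Given two indices $i < j$, I would compare $a_i$ and $a_j$. Since all terms are distinct, either $a_i < a_j$ or $a_i > a_j$. In the first case, any increasing subsequence ending at~$a_i$ can be extended by appending~$a_j$, forcing $x_j \ge x_i + 1 > x_i$; in the second case the same argument applied to decreasing subsequences gives $y_j > y_i$. Either way $(x_i, y_i) \neq (x_j, y_j)$, so the $n$ pairs are pairwise distinct. I expect this injectivity argument to be the crux of the proof; it is short, but it is the only place where the hypothesis that the terms are distinct is used, and it is what converts the counting of terms into a counting of pairs.

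To finish, I would argue by contradiction: suppose the sequence has no increasing subsequence of $s+1$ terms and no decreasing subsequence of $r+1$ terms. Then $x_i \in \{1, \dots, s\}$ and $y_i \in \{1, \dots, r\}$ for every~$i$, so each pair $(x_i, y_i)$ lies in a set of at most~$sr$ possibilities. But the previous step produced $n \ge sr + 1$ pairwise distinct pairs drawn from this set, contradicting the pigeonhole principle. Hence the assumption fails, and at least one of the two monotone subsequences of the claimed length must exist, which is exactly the statement of the theorem.
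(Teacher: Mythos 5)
Your proof is correct and complete. The paper itself offers no proof of this statement: it is quoted as a classical result (Erd\H{o}s--Szekeres, 1935) and used as a black box, so there is no internal argument to compare yours against. What you have written is the standard pigeonhole proof usually attributed to Seidenberg: the pair $(x_i,y_i)$ recording the longest increasing and decreasing subsequences ending at position~$i$, the injectivity of $i \mapsto (x_i,y_i)$ (which, as you correctly note, is the only place distinctness of the terms is used), and the count of at most $sr$ available pairs against $n \ge sr+1$ indices. Each step is sound; in particular the injectivity argument correctly splits into the two cases $a_i < a_j$ and $a_i > a_j$ and in each case strictly increases one coordinate. This is a genuinely different (and shorter) route than Erd\H{o}s and Szekeres's original induction-style argument, and it is the proof one would normally present today; nothing further is needed.
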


The current best bounds on $\fix(G)$, where~$G$ is restricted to
certain classes of planar graphs, are summarized in
\tablename~\ref{table:summary}.  Recall 
that a lower bound of~$f(n)$ means that we can untangle \emph{any}
drawing of \emph{any} $n$-vertex graph~$G$ in the given graph class
while fixing at least~$f(n)$ vertices, whereas an upper bound
of~$g(n)$ means that for arbitrarily large~$n$ \emph{there exists} a
drawing~$\delta$ of an $n$-vertex graph~$G$ in the given graph class
such that at most~$g(n)$ vertices can stay fixed when
untangling~$\delta$.

\begin{table}[ht]
  \centering
  {\def\arraystretch{1.1}
  \def\tabcolsep{2ex}
  \begin{tabular}{@{}l|c|c|c|c@{}}
    Graph class     & \multicolumn{2}{c|}{Lower bound} &
    \multicolumn{2}{c}{Upper bound} \\\hline&&&&\\[-2.5ex]

    Cycles 		& $\Omega(n^{2/3})$ & \cite{c-upg-08} &
    $O((n \log n)^{2/3})$ & \cite{pt-up-02}\\
    Trees  		& $\sqrt{n/2}$ &
    \cite{bdhlmw-pbugp-08}
    & $3\sqrt{n}-3$ & \cite{bdhlmw-pbugp-08}\\ 
    Outerplanar graphs	& $\sqrt{n/2}$ & \cite{rv-cssld-08} \&
    Corollary~\ref{cor:outerplanar} 
                        & $2\sqrt{n-1}+1$ & Theorem~\ref{thm:outerplanar}\\
    Planar graphs & $\sqrt[4]{(n+1)/2}$ & \cite{bdhlmw-pbugp-08} &
    $\sqrt{n-2}+1$ & 
    Theorem~\ref{thm:planar-upper} \\
  \end{tabular}}

  \vspace{1.5ex}
  \caption{Best known bounds for $\fix(G)$, where~$G$ is a graph of
    the given graph class with~$n$ vertices.}   
  \label{table:summary}
\end{table}

Untangling was first investigated for the \emph{$n$-cycle}~$C_n$,
following the question by Watanabe~\cite{w-op-98} of whether 
$\fix(C_n) \in \Omega(n)$. The answer turned out to be negative: Pach
and Tardos~\cite{pt-up-02} showed, by a probabilistic argument, that
$\fix(C_n) \in O((n \log n)^{2/3})$. They also showed that $\fix(C_n)
\ge\lfloor \sqrt{n}+1 \rfloor$ by applying the Erd\H{o}s--Szekeres
theorem to the sequence of the indices of the vertices of~$\delta$ in
clockwise order around some specific point. Cibulka \cite{c-upg-08}
recently improved that lower bound to $\Omega(n^{2/3})$ by applying
the Erd\H{o}s--Szekeres theorem not once but $\Theta(n^{1/3})$ times.

Pach and Tardos~\cite{pt-up-02} extended the question to planar
graphs and asked whether there is a constant $\gamma > 0$ such that
$\fix(G) \in \Omega(n^\gamma)$ for any planar $n$-vertex
graph~$G$. This question was recently answered in the affirmative by
Bose et al.~\cite{bdhlmw-pbugp-08}, who showed that $\fix(G) \ge
\sqrt[4]{n/3}$.  While their bound improves on
our Theorem~\ref{theorem:final:lower:bound}, their algorithm uses our
algorithm as a subroutine (specifically the result in
Corollary~\ref{cor:outerplanar}).  A recent improvement in our
analysis also improves their bound, yielding $\fix(G) \ge
\sqrt[4]{(n+1)/2}$.  Kang et al.~\cite{kprsc-odpg-08} showed that for
arbitrarily large~$n$ there is a 
planar graphs~$G$ with~$n$ vertices and fix$(G) \le 2\sqrt{n} + 1$.  
For our upper bound of $\sqrt{n-2}+1$, see Theorem~\ref{thm:planar-upper}.
Kang et al.~\cite{kprsc-odpg-08} also shed some light on how
upper bounds on $\fix(G)$ are affected by restricting
the possible locations of vertices in the drawings of~$G$.
In particular, they showed that initial drawings
with all vertices on a line, such as our examples in
Theorems~\ref{thm:outerplanar} and~\ref{thm:planar-upper}, are the
worst case in the sense that 
any planar graph~$G$ has such a drawing~$\delta$ with $\fix(G) =
\fix(G,\delta)$, and that their upper bound holds even in the case
where initial drawings are restricted to drawings where vertices
correspond to a set of points on the boundary of a convex set.  (Note
that this generalizes both the vertices-on-a-line case and the
vertices-in-convex-position case.) 

Verbitsky~\cite{v-ocpg-08} investigated planar graphs of higher connectivity.
He proved linear upper bounds on $\fix(G)$ for three- and four-connected
planar graphs.  Cibulka~\cite{c-upg-08} gave, for any planar
graph~$G$, an upper bound on $\fix(G)$ that is a function of the
number of vertices, the maximum degree, and the diameter of~$G$. This
latter bound implies, in particular, that $\fix(G) \in O((n \log
n)^{2/3})$ for any three-connected planar graph~$G$ and that any graph~$H$ such
that $\fix(H) \ge cn$ for some $c>0$ must have a vertex of
degree $\Omega(nc^2/\log^2 n)$.

For the class of trees, Bose et al.~\cite{bdhlmw-pbugp-08}
showed
that $\fix(T) \ge \sqrt{n/2}$ for any tree~$T$ with~$n$
vertices. They further showed that $\fix(T) \le 3\sqrt{n}-3$ for a
collection of stars with~$n$ vertices in total, which, up to adding
one vertex to turn these stars into a single tree, implies that the
previous bound is asymptotically tight.  We have obtained the same
lower bound of $\sqrt{n/2}$ for the larger class of outerplanar graphs
(Corollary~\ref{cor:outerplanar}).  This bound was obtained
independently by Ravsky and Verbitsky~\cite{rv-cssld-08} via a finer
analysis of sets of collinear vertices in plane drawings.

The hardness of computing $\fix(G,\delta)$ given~$G$ and~$\delta$ was
obtained independently by Verbitsky~\cite{v-ocpg-08} by a 
reduction from independent set in line-segment intersection
graphs. While our proof is more complicated than his, it is stronger
as it also yields hardness of approximation and extends
to the problem \textsc{1BendPointSetEmbeddability} with given vertex--point
correspondence. 

Finally, a somewhat related problem is that of \emph{morphing}, or
isotopy, between two plane drawings~$\delta_1$ and~$\delta_2$ of the
same graph~$G$, that is, to define for each vertex~$v$ of~$G$ a
movement from~$\delta_1(v)$ to~$\delta_2(v)$ such that at any time
during the move the drawing defined by the current vertex positions is
plane.  We refer the interested reader to the survey by Lubiw et
al.~\cite{lps-mopgd-06}.

\section{Complexity}
\label{sec:complexity}

In this section, we investigate the complexity of
\textsc{MinShiftedVertices} and of \textsc{1BendPointSetEmbeddability}
with given vertex--point correspondence.

\begin{theorem} 
  \label{thm:hardness-mmv}
  Given a planar graph~$G$, a drawing~$\delta$ of~$G$, and an integer
  $K>0$, it is NP-hard to decide whether $\shift(G,\delta) \le K$.
\end{theorem}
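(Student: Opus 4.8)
The plan is to prove NP-hardness by a polynomial-time reduction from \textsc{3-Sat} (taking a planar variant if one prefers to make the planarity of~$G$ transparent). Given a formula~$\varphi$ with variables $x_1,\dots,x_m$ and clauses $c_1,\dots,c_t$, I would build a planar graph~$G$, a straight-line drawing~$\delta$, and a threshold~$K$ so that $\shift(G,\delta)\le K$ holds exactly when $\varphi$ is satisfiable. The guiding idea is that untangling forces a dichotomy: almost all vertices of~$G$ must stay fixed, while a small, tightly controlled set of \emph{decision} vertices is free to move, and the positions chosen for the decision vertices encode a truth assignment.

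First I would install a rigid \emph{frame}: a large subgraph that~$\delta$ already draws plane and that is so densely interconnected that moving any one of its vertices would, through the crossings it creates against the rest of the frame, force $\omega(m+t)$ further moves and thus exceed the budget. Consequently, in any untangling that uses at most~$K$ moves the entire frame is fixed, and its fixed edges cut the plane into faces that pin down where the remaining non-frame vertices may legally be placed. This containment is what tames the otherwise unlimited freedom of a moved vertex.

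Next come the gadgets. For each variable~$x_i$ I would place, inside a face of the frame, a small configuration containing a single crossing that can be removed only by moving one of two designated vertices $a_i,b_i$; moving~$a_i$ to the one free position that the surrounding fixed edges leave available encodes $x_i=\mathrm{true}$, and moving~$b_i$ encodes $x_i=\mathrm{false}$, each at a cost of exactly one move. Edges run from $a_i$ and~$b_i$ into the gadget of every clause containing the corresponding literal. For each clause~$c_j$ I would place a gadget whose residual crossings can be cleared within the allotted budget precisely when at least one of its three literal-vertices has been moved to its satisfying position. Setting $K$ equal to~$m$ plus the fixed per-clause cost then makes $\shift(G,\delta)\le K$ equivalent to the existence of a satisfying assignment: the forward direction translates an assignment into a legal untangling move by move, and the reverse direction reads an assignment off any untangling that respects the budget.

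I expect the main obstacle to be exactly the interface where the frame meets the gadgets. Because a moved vertex may be repositioned anywhere in the plane, I must design the fixed skeleton so tightly that resolving one gadget's crossing can neither be achieved ``for free'' by parking a vertex in a distant face, nor be exploited to simultaneously resolve crossings in other gadgets, and so that no frame vertex can be sacrificed cheaply. Making the faces of the frame small and their boundaries crossing-sensitive, while still keeping~$G$ planar (so that a global plane drawing genuinely exists), of polynomial size, and equipped with a threshold that truly separates satisfiable from unsatisfiable instances, is the delicate part; ideally the construction is arranged so that the satisfiable/unsatisfiable gap survives the blow-up later used for the inapproximability result.
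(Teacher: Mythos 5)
Your overall architecture matches the paper's: a polynomial reduction from (planar) 3-SAT with variable and clause gadgets, truth values encoded by which of two designated vertices moves to which predestined position, and a budget~$K$ that is met exactly when the formula is satisfiable. However, the linchpin of your soundness argument---the rigid frame---has a genuine gap. You claim one can build a plane, densely interconnected subgraph such that moving \emph{any} of its vertices creates crossings forcing $\omega(m+t)$ further moves, and you conclude that ``in any untangling that uses at most $K$ moves the entire frame is fixed.'' No such frame exists: a plane straight-line drawing remains plane under sufficiently small perturbations of any single vertex, so every frame vertex can be moved (slightly, or anywhere inside its free region) without creating a single crossing. Consequently you cannot force the frame to be pointwise fixed, and the confinement of the decision vertices---on which your entire reverse direction (reading an assignment off a budget-respecting untangling) rests---does not follow. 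A second, related overstatement is that a gadget crossing ``can be removed \emph{only} by moving one of two designated vertices'': a crossing of two segments can in principle be attacked by moving any of the four endpoints involved, so one must argue that the unintended resolutions cost strictly more, not that they are impossible.

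The paper closes exactly this hole with a counting argument in place of rigidity. Each \emph{mobile} vertex is incident to two edges that cross two other edges, the constructed drawing has exactly $2K$ crossings forming $K$ edge-disjoint pairs, and every pair forces at least one vertex move; since the budget equals~$K$, there is zero slack, so wasted or ``parked'' moves are automatically unaffordable without any claim that other vertices are frozen. Indeed, the paper never asserts that immobile vertices cannot move---in the unsatisfiable case they do (see \figurename~\ref{fig:clause}(b) and~(c)); it asserts only that then at least one pair of crossings consumes \emph{two} moves while the remaining $K-1$ pairs still need one each, giving at least $K+1$ in total. Your proposal could be repaired along the same lines: drop the rigidity claim, arrange the gadgets so that the crossings form edge-disjoint pairs whose number equals the budget, and prove that when the formula is unsatisfiable some pair cannot be cleared by its single designated move because all of that vertex's admissible target positions are blocked.
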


\begin{proof}
  Our proof is by reduction from \textsc{Planar3SAT}, which is 
  NP-hard \cite{l-pftu-82}.  An instance of \textsc{Planar3SAT}
  is a 3-SAT formula~$\varphi$ whose variable-clause graph is planar.
  Note that this graph can be laid out (in polynomial time) such that
  variables correspond to rectangles centered on the $x$-axis and
  clauses correspond to non-crossing three-legged ``combs'' completely
  above or completely below the $x$-axis \cite{kr-pcr-92}, see
  \figurename~\ref{fig:planar}.  We refer to this layout of the
  variable-clause graph as~$\lambda_\varphi$.  We now construct a
  graph~$G_\varphi$ with a straight-line drawing~$\delta_\varphi$ such that 
  the following holds: $\delta_\varphi$ can be untangled by moving at 
  most~$K$ vertices if and only if~$\varphi$ is satisfiable.  We
  fix~$K$ later.

  \begin{figure}[ht]
    \centering
    \includegraphics{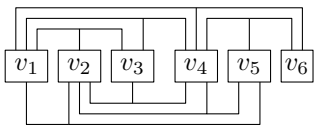}
    \caption{Embedding of a planar 3-SAT formula.}
    \label{fig:planar}
  \end{figure}

  Our graph~$G_\varphi$ consists of two types of substructures (or
  \emph{gadgets}), modeling the variables and clauses of~$\varphi$.
  The overall layout of~$G_\varphi$ follows~$\lambda_\varphi$ (see
  \figurename~\ref{fig:planar}): the variable gadgets are drawn in the
  same order along the $x$-axis as the variable nodes in~$\lambda_\varphi$,
  and the clause gadgets form non-crossing three-legged combs that lie
  on the same side of the $x$-axis as the corresponding clause nodes
  in~$\lambda_\varphi$.

  In our gadgets, see Figs.~\ref{fig:variable} and~\ref{fig:clause},
  there are two types of vertices and edges; those that \emph{may}
  move and those that are \emph{meant} not to move.  We refer to the
  two types as \emph{mobile} and \emph{immobile}.  Each mobile vertex
  (but no immobile vertex) is incident to two edges that cross two
  other edges.  The drawing~$\delta_\varphi$ that we specify in the
  following has~$2K$ crossings; if~$\varphi$ has a satisfying truth
  assignment, $\delta_\varphi$ can be untangled by moving~$K$ mobile
  vertices. Otherwise, at least one immobile vertex must move, and thus
  in total at least $K+1$ vertices need to move.
  In the figures, immobile vertices are marked by black disks, mobile vertices
  by circles, and their predestined positions
  by little squares.  
  Mobile edges---edges incident to a mobile vertex---are drawn as thick 
  solid gray line segments, and their predestined positions 
  as gray line segments that are dashed, dotted, or
  dashed-dotted (and thus not solid).  
  Immobile edges are drawn as solid black line segments.

  Now consider the gadget for some variable~$x$ in~$\varphi$, see the
  shaded area in \figurename~\ref{fig:variable}.  
  \begin{figure}
    \centering %
    \includegraphics{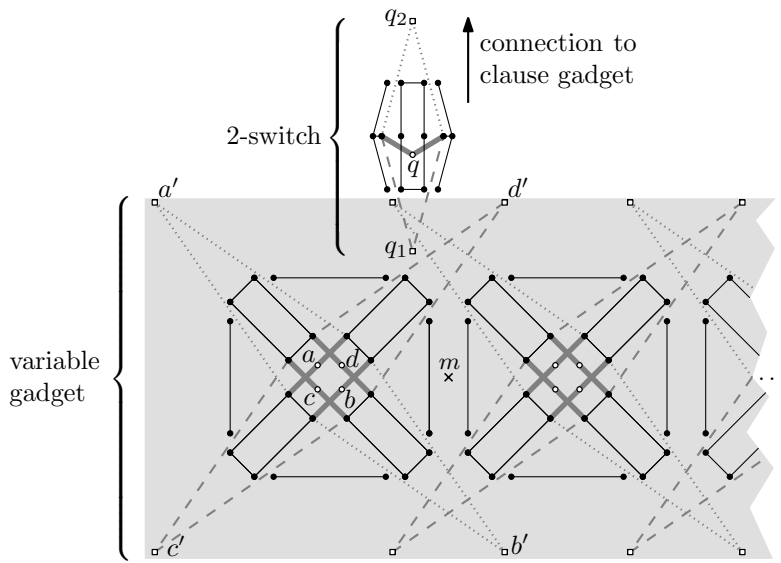}
    \caption{Edges in the variable gadget: immobile (solid
      black) and mobile (thick solid gray).  The predestined
      positions of mobile edges either correspond to \emph{true}
      (dashed gray) or to \emph{false} (dotted gray).}
    \label{fig:variable}
  \end{figure}
  The gadget consists
  of a horizontal chain of a certain number of roughly square
  \emph{blocks}.  Each block consists of 28 vertices (four of which
  are mobile) and 28 edges.
  In \figurename~\ref{fig:variable} the four mobile vertices of
  the leftmost block are labeled in clockwise order~$a$, $d$, $b$, 
  and~$c$.  Note that the gray edges incident to~$a$ and~$b$ intersect
  those incident to~$c$ and~$d$.  Thus either both~$a$ and~$b$ or
  both~$c$ and~$d$ must be moved to untangle the block.  Each mobile vertex 
  $w \in \{a,b,c,d\}$ can move into exactly one position~$w'$ (up to
  small perturbations).  The resulting incident edges are drawn by dotted
  and dashed gray line segments, respectively.  Note that
  neighboring blocks in the chain are placed such that the only way to
  untangle them simultaneously is to move \emph{corresponding} pairs
  of vertices and edges.  Thus either all blocks of a variable gadget
  use the dashed line segments or all use the dotted line segments.  These
  two ways to untangle a variable gadget correspond to the values
  \emph{true} and \emph{false} of the variable, respectively.

  Let~$\clau$ be the numbers of clauses of~$\varphi$.
  For each of the~$3\clau$ literals in~$\varphi$ we connect the gadget of
  the corresponding variable to the gadget of the clause that contains
  the literal.  Each block of each variable gadget is connected to a
  specific clause gadget above or below the variable gadget, thus
  there are~$3\clau$ blocks in total.  Each connection is realized by a
  part of~$G_\varphi$ that we call a \emph{2-switch}.  A 2-switch
  consists of 15 vertices and 14 edges.  The mobile vertex~$q$ of the
  2-switch in \figurename~\ref{fig:variable} is incident to two 
  thick gray edges that intersect two immobile edges of the 2-switch.
  Thus~$q$ must move.  There are (up to small perturbations) two
  possible positions, namely~$q_1$ and~$q_2$, see
  \figurename~\ref{fig:variable}. 

  The 2-switch in \figurename~\ref{fig:variable} corresponds to a positive
  literal.  For negated literals the switch must be mirrored either at
  the vertical or at the horizontal line that runs through the point~$m$.  
  Note that a switch can be stretched vertically in order to reach
  the right clause gadget.  Further note that if a literal is
  \emph{false}, the mobile vertex of the corresponding 2-switch must
  move away from the variable gadget and towards the clause gadget to
  which the 2-switch belongs.  In that case we say that the 2-switch
  \emph{transmits pressure}.  

  A clause gadget consists of three vertical 2-switches and two
  horizontal \emph{3-switches}.  A 3-switch consists of 23 vertices
  and 18 edges plus a small ``inner'' 2-switch, see the shaded
  area in \figurename~\ref{fig:clause}.  
  \begin{figure}
    \centering %
    \includegraphics{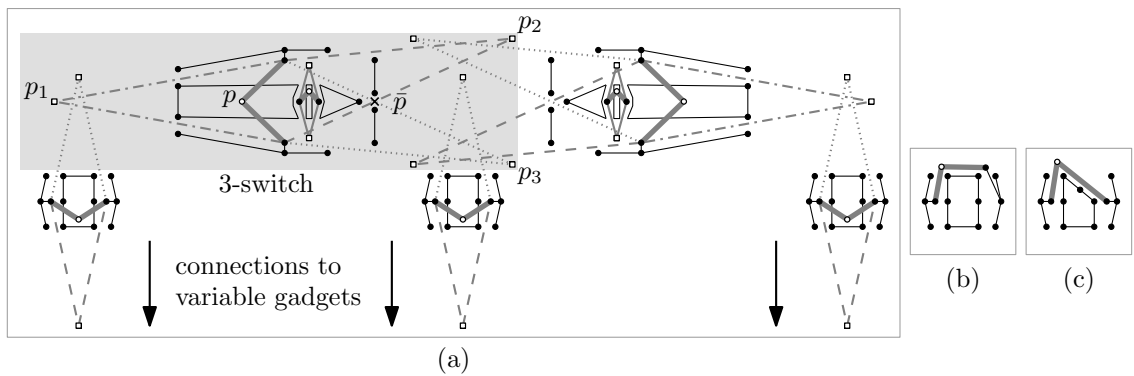}
    \caption{(a) A clause gadget consists of three big 2-switches
      (drawn vertically) and two 3-switches (drawn horizontally; one
      is shaded).  Each 3-switch contains another small 2-switch.
      Note that not all immobile vertices are marked. (b) \& (c) Two
      ways in which originally immobile vertices can move to avoid a
      crossing if~$\varphi$ is not satisfiable.}
    \label{fig:clause}
  \end{figure}
  Independently from the other, each
  of the two 3-switches can be stretched horizontally in order to
  reach vertically above the variable gadget to which it connects via
  a 2-switch.  The mobile vertex~$p$ of the left 3-switch in
  \figurename~\ref{fig:clause} is incident to two thick gray edges
  that intersect two immobile edges of the 3-switch.  Thus~$p$ must move.
  There are (again up to small perturbations) three possible positions, 
  namely~$p_1$, $p_2$, and~$p_3$.  Note that we need the inner 2-switch,
  otherwise there would be a forth undesired position for moving~$p$,
  namely the one labeled~$\bar{p}$ in \figurename~\ref{fig:clause}.  
  By construction, a clause gadget can be made plane by only moving
  the mobile vertices of all switches if and only if at most two of
  the three big 2-switches transmit pressure, that is, if at least one of
  the literals in the clause is \emph{true}.  

  The graph~$G_\varphi$ that we have now constructed has~$O(\clau)$
  vertices, $O(\clau)$ edges, and $X = 26\clau$ crossings; $4 \cdot 3\clau$ in
  blocks and $2 \cdot 7\clau$ in switches.  Recall that any mobile vertex
  is incident to two edges that each cross another edge.  Thus a
  mobile vertex corresponds to a pair of crossings. 
  By moving a mobile vertex to any of its predestined positions, 
  the corresponding pair of crossings  
  disappears.  If~$\varphi$ is satisfiable, $G_\varphi$ can be made
  plane by moving $K=X/2$ mobile vertices since no new crossings are
  introduced.  If~$\varphi$ is not satisfiable, there is at least one
  pair of crossings that cannot be eliminated by moving the
  corresponding mobile vertex alone since all its predestined
  positions are blocked.  Thus at least \emph{two} vertices must be
  moved to eliminate that pair of crossings---and still all the other
  $K-1$ pairs of crossings must be eliminated by moving at least one
  vertex per pair, totaling
  in at least $K+1$ moves.  Thus~$\varphi$ is satisfiable if and only
  if~$G_\varphi$ can be made plane by moving exactly~$K$ (mobile)
  vertices.   

  Recall that~$G_\varphi$ consists of~$O(\clau)$ vertices and edges.  We
  construct~$\delta_\varphi$ step by step, starting with the vertices
  of the variable gadgets and then treating the clauses from innermost
  to outermost.  In order for the 2- and 3-switches to reach far
  enough, note that each desired position of a mobile vertex is
  determined by two pairs of immobile vertices.  By making the
  distances of the two vertex pairs (polynomially) small, the desired
  position can be confined to a region that is small enough to force
  the mobile vertex of the next switch into one of its remaining
  positions.  Now it is clear that it is possible to place vertices at
  coordinates whose representation has size polynomial in the
  length~$L$ of a binary encoding of~$\varphi$.  This implies that our
  reduction is polynomial in~$L$.
\end{proof} 

\begin{remark}
  Our proof can be slightly modified to show that the problem is also
  hard if we are additionally given an axis-parallel rectangle that
  contains the initial graph drawing, and each move is constrained to
  that rectangle---in other words Tantalo's version of the planarity
  game.  In the proof we simply compute from the given planar 3-SAT
  formula a rectangle that is large enough to accommodate not only the
  initial drawing, but also the plane drawing that we get in case the
  formula has a satisfying truth assignment.  Note that this rectangle
  is barely larger than the smallest axis-parallel rectangle that
  contains all vertices of our initial graph drawing.
\end{remark}

We now consider the approximability of \textsc{MinShiftedVertices}.
Since $\shift(G,\delta)=0$ for plane drawings, we cannot use the usual
definition of an approximation factor unless we slightly modify our
objective function.  Let $\shift'(G,\delta)=\shift(G,\delta)+1$ and call the
resulting decision problem \textsc{MinShiftedVertices}$'$.  Now we can
modify the above reduction to get a non-approximability result.

\begin{theorem} \label{thm:non-approx-mmv}
  For any constant real $\eps \in (0,1]$ there is no polynomial-time
  $n^{1-\eps}$-approximation algorithm for
  \textsc{MinShiftedVertices}$'$ unless ${\cal P} = \cal NP$.
\end{theorem}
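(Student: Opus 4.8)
The plan is to reuse the gadget construction from the proof of Theorem~\ref{thm:hardness-mmv} and to amplify its additive gap of~$1$ into a multiplicative gap of order~$n^{1-\eps}$. Recall that the drawing~$\delta_\varphi$ built there satisfies $\shift(G_\varphi,\delta_\varphi)=K=13\clau$ when~$\varphi$ is satisfiable and $\shift(G_\varphi,\delta_\varphi)\ge K+1$ otherwise, the extra unit coming from a single clause whose pressure cannot be absorbed. The obstacle is that an unsatisfied clause forces only \emph{one} additional vertex to move; I would arrange that this forced move triggers a large cascade, so that the unsatisfiable case becomes far more expensive than the satisfiable one. Note that taking disjoint copies does not help here: it scales the satisfiable and unsatisfiable costs by the same factor and leaves the ratio bounded by $1+1/K$.

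To this end I would attach to each clause gadget an \emph{amplifier}: a gadget on~$m$ vertices, anchored at one of the originally immobile vertices that an unsatisfied clause forces to move (see \figurename~\ref{fig:clause}(b),(c)). Its initial drawing is plane and remains plane as long as the clause is satisfied; but as soon as its anchor is displaced, the displacement propagates like a row of dominoes and forces at least~$m$ of its vertices to move. The hard part is precisely the design of this gadget: keeping it plane in the satisfied configuration, guaranteeing that in the unsatisfied configuration \emph{every} way of untangling it moves $\Omega(m)$ vertices (so that there is no shortcut), and placing all~$m$ vertices at coordinates of size polynomial in the encoding of~$\varphi$. This construction, rather than the surrounding argument, is where the work lies.

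With the amplifiers in place the two cases separate sharply. If~$\varphi$ is satisfiable, every amplifier stays untouched and the drawing is untangled by moving exactly the $K=13\clau$ mobile vertices, so the optimum of \textsc{MinShiftedVertices}$'$ is $\shift'(G_\varphi,\delta_\varphi)=13\clau+1$. If~$\varphi$ is unsatisfiable, at least one amplifier fires and $\shift'(G_\varphi,\delta_\varphi)\ge m+1$. Attaching one amplifier per clause gives $n=\Theta(\clau\,m)$ vertices, and I would pick~$m$ as a sufficiently large polynomial in~$\clau$ (depending on~$\eps$) so that $m+1 > n^{1-\eps}\,(13\clau+1)$; concretely it suffices to take $m$ just above $13^{1/\eps}\clau^{(2-\eps)/\eps}$. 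Since $\eps$ is a fixed constant, this choice keeps~$m$, and hence the whole reduction, polynomial in the length of~$\varphi$.

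The contradiction then follows in the usual way. Suppose~$A$ were a polynomial-time $n^{1-\eps}$-approximation algorithm for \textsc{MinShiftedVertices}$'$, and run it on the amplified instance. In the satisfiable case~$A$ returns a value at most $n^{1-\eps}(13\clau+1)$, whereas in the unsatisfiable case any feasible solution, and hence~$A$'s output, has value at least $m+1$. By the choice of~$m$ these two ranges are disjoint, so thresholding the output of~$A$ at $n^{1-\eps}(13\clau+1)$ decides the satisfiability of~$\varphi$ in polynomial time, whence ${\cal P}={\cal NP}$.
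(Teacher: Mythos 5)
Your high-level strategy---amplifying the additive gap of Theorem~\ref{thm:hardness-mmv} into a multiplicative one by making any forced move of an immobile vertex trigger a cascade of many further moves---is exactly the strategy of the paper's proof, and your parameter arithmetic (choosing the amplifier size~$m$ of order $\clau^{(2-\eps)/\eps}$, polynomial in~$\clau$ for constant~$\eps$, so that the satisfiable and unsatisfiable cost ranges separate) is sound, as is your observation that disjoint copies cannot help. However, there are two genuine gaps. First, you explicitly leave the amplifier gadget undesigned (``this construction \ldots\ is where the work lies''); since that gadget is the entire content of the theorem beyond Theorem~\ref{thm:hardness-mmv}, the proposal is not yet a proof. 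The paper's amplifier is concrete: for each edge~$e$ of~$G_\varphi$ it adds $n_\varphi^{(3-\eps)/\eps}$ near-parallel copies of~$e$ in the immediate vicinity of~$e$ (sharing the mobile endpoint of~$e$ if there is one, with the new endpoints placed along rays emanating from the immobile vertices), so that moving a mobile vertex to a predestined position is compatible with the copies, while moving an immobile vertex drags at least $n_\varphi^{(3-\eps)/\eps}$ copy-endpoints along.

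Second, and more fundamentally, your anchoring scheme is flawed. You attach one amplifier per clause, ``anchored at one of the originally immobile vertices that an unsatisfied clause forces to move.'' But unsatisfiability does not force any \emph{particular} immobile vertex to move; it only forces \emph{some} immobile vertex to move, and which one is chosen by the untangling solution, not by the reduction. \figurename~\ref{fig:clause}(b) and~(c) already exhibit two different ways in which immobile vertices can absorb the pressure, and nothing restricts a solution to those. An adversarial untangling would simply move an immobile vertex to which no amplifier is attached, paying only $O(1)$ extra moves, and your gap between $13\clau+1$ and $m+1$ collapses. This is precisely why the paper amplifies \emph{uniformly}---copies are attached along every edge of~$G_\varphi$---so that whichever immobile vertex the solution moves, many new vertices must move with it (with a case distinction according to whether the moved vertex is incident to a mobile edge). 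Any correct completion of your approach would have to do the same: protect every immobile vertex, not one designated vertex per clause.
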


\begin{proof}
  Let~$n_\varphi$ be the number of vertices of the graph~$G_\varphi$
  with drawing~$\delta_\varphi$ that we constructed above.  We add
  to~$G_\varphi$ for each edge~$e$ $n_\varphi^{(3-\eps)/\eps}$ copies,
  half of them on each side of~$e$, in the close vicinity of~$e$.  
  If one of the endpoints of~$e$ is a mobile vertex, then all copies
  are incident to that vertex.  In the following we detail where to
  place the other (new) endpoints of these edges.

  We go through each immobile vertex~$v$ of~$G_\varphi$.  Let
  $\deg_\varphi v$ be the degree of~$v$ in~$G_\varphi$.  Note that $1
  \le \deg_\varphi v \le 3$.  If $\deg_\varphi v = 1$, we place the
  endpoints of the copies of the edge~$e$ incident to~$v$ on the two
  rays that are orthogonal to~$e$ in~$v$.  On each ray we place half
  of the endpoints and connect them by new edges along the ray,
  starting with~$v$, see vertex~$v_1$ in
  \figurename~\ref{fig:variable-modified}.

  Otherwise, if $\deg_\varphi v > 1$, let~$e$, $e'$ be two edges that
  are incident to~$v$ and consecutive in the circular ordering
  around~$v$.  Now we add half of the endpoints of~$e$ and~$e'$
  on a ray between~$e$ and~$e'$ emanating from~$v$, in the same manner
  as above.  The position of the ray depends on whether both~$e$
  and~$e'$ are immobile or one of them is mobile.
  (Being immobile,
  vertex~$v$ is incident to at most one mobile edge.)  In the first
  case we place the new vertices on the angular bisector of~$e$
  and~$e'$, see vertex~$v_2$ in
  \figurename~\ref{fig:variable-modified}.  In the second case where
  one of the edges, say~$e$, is mobile, note that the original and all
  predestined positions of~$e$ lie in an open halfplane bounded by a
  line~$\ell$ through~$v$.  So we place the new vertices on~$\ell$,
  half on each side of~$v$, see vertex~$v_3$ in
  \figurename~\ref{fig:variable-modified}.

  Let~$G$ be the resulting graph, $\delta$ its drawing, and $n \le
  \big(3/2 \cdot n_\varphi^{(3-\eps)/\eps}+1\big) \cdot n_\varphi$ the number of
  vertices of~$G$.  Note that~$\varphi$ is satisfiable if and only if
  $\shift'(G,\delta)= \shift'(G_\varphi,\delta_\varphi) = K+1$.
  Otherwise, in the original graph~$G_\varphi$ at least one immobile
  vertex has to move.  This vertex either is incident to a mobile edge
  or it is not, see \figurename~\ref{fig:clause}(b) and~(c),
  respectively.  In the new graph~$G$, which contains~$G_\varphi$,
  also at least one (original) immobile vertex~$v$ has to move.
  If~$v$ is not incident to a mobile edge, in order to make space,
  all new vertices in the vicinity of~$v$ have to move, too.  If~$v$
  is incident to a mobile edge, a new vertex in the vicinity of~$v$
  has to move only if it is incident to a new copy of the mobile edge.
  That is, in both cases, at least
  $n_\varphi^{(3-\eps)/\eps}$ vertices have to move.  In other words,
  $\shift'(G,\delta) \ge K+2+n_\varphi^{(3-\eps)/\eps}$.  Note that~$G$
  can be constructed in polynomial time since we have assumed~$\eps$
  to be a constant. 

  \begin{figure}
    \centering %
    \includegraphics[scale=.9]{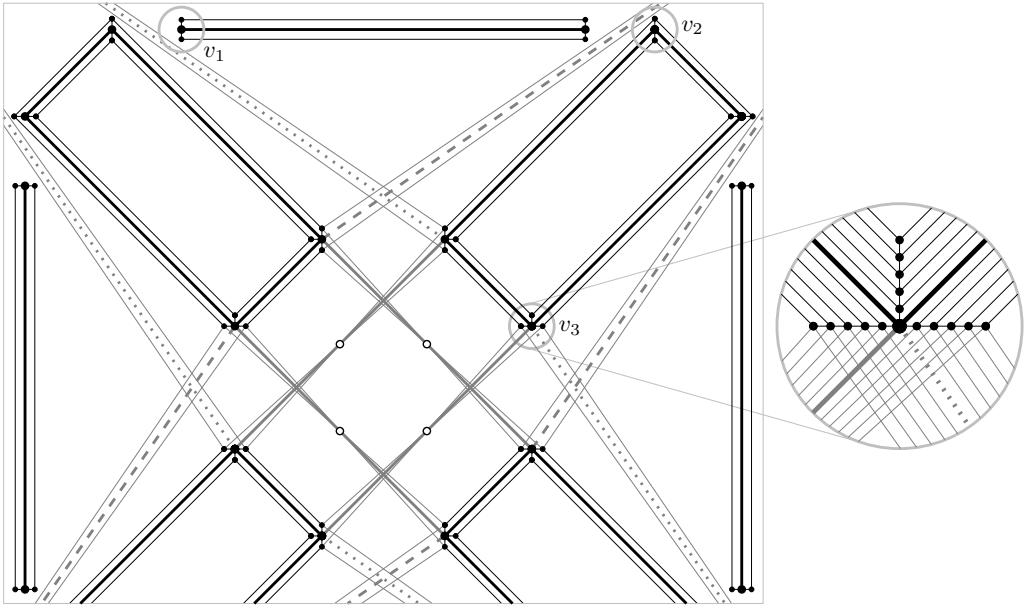}
    \caption{Clipping of the modified variable gadget for the proof of
      Theorem~\ref{thm:non-approx-mmv}.  The old vertices and edges
      are drawn thicker than the new ones.  Each old edge has
      $n_\varphi^{(3-\eps)/\eps}$ new copies.}
    \label{fig:variable-modified}
  \end{figure}

  Suppose there was a polynomial-time $n^{1-\eps}$-approximation
  algorithm~$\cal A$ for \textsc{MinShiftedVertices}$'$.  We can bound
  its approximation factor by
  $n^{1-\eps} \le
  \big((3/2 \cdot n_\varphi^{(3-\eps)/\eps}+1) \cdot n_\varphi\big)^{1-\eps}
  \le \big(2 n_\varphi^{(3-\eps)/\eps} \cdot n_\varphi\big)^{1-\eps}
  = 2^{1-\eps} n_\varphi^{(3-3\eps)/\eps}
  \le 2n_\varphi^{(3-3\eps)/\eps}$.
  Now let~$M$ be the number of moves that~$\cal A$ needs to
  untangle~$\delta$.  If~$\varphi$ is satisfiable, then $M \le 
  \shift'(G,\delta) \cdot n^{1-\eps} = (K+1) \cdot n^{1-\eps} \le
  (n_\varphi+1) \cdot 2n_\varphi^{(3-3\eps)/\eps}
  = 2n_\varphi^{(3-2\eps)/\eps} + 
  O\big(n_\varphi^{(3-3\eps)/\eps}\big)$.  On the other hand, if
  $\varphi$ is unsatisfiable, then $M \ge \shift'(G,\delta) >
  n_\varphi^{(3-\eps)/\eps}$.  Since we can assume that~$n_\varphi$ is
  sufficiently large, the result of algorithm~$\cal A$ (that is, the
  number~$M$) tells us whether~$\varphi$ is satisfiable.  So either
  our assumption concerning the existence of~$\cal A$ is wrong, or we
  have shown the NP-hard problem \textsc{Planar3SAT} to lie in~$\cal
  P$, which in turn would mean that ${\cal P} = \cal NP$.
\end{proof}

We now state a hardness result that establishes a connection between
\textsc{MinShiftedVertices} and the well-known graph-drawing problem
\textsc{1BendPointSetEmbeddability}.  We define the problem
\textsc{1BendPointSetEmbeddabilityWithCorrespondence} as follows.
Given a planar graph $G=(V,E)$, a set~$S$ of points in the plane with
rational coordinates and a one-to-one correspondence~$\zeta$
between~$V$ and~$S$, decide whether~$\zeta$ can be
extended to a plane 1-bend drawing of~$G$, that is, whether~$G$ has a
plane drawing~$\delta$ such that $\delta(v) = \zeta(v)$ for all $v \in
V$ and such that~$\delta$ maps each edge of~$G$ to a 1-bend polygonal chain.

\begin{theorem} \label{thm:hardness-one-bend} %
  \textsc{1BendPointSetEmbeddabilityWithCorrespondence} is NP-hard.
\end{theorem}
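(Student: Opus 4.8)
The plan is to reduce from \textsc{Planar3SAT} by reusing the gadget construction from the proof of Theorem~\ref{thm:hardness-mmv}. The guiding idea is that, once every vertex is pinned to a fixed point, a single bend on an edge can play the role that a vertex move plays in \textsc{MinShiftedVertices}: bending the two edges incident to a (former) mobile vertex toward one of its predestined positions emulates moving that vertex there. First I would take the graph $G_\varphi$ and the drawing $\delta_\varphi$ built from a formula $\varphi$, let $S$ be the image $\delta_\varphi(V)$ of the vertex set (whose coordinates are rational and of polynomial size, as argued at the end of that proof, which is exactly what the problem requires), and let $\zeta$ be the identity correspondence, so that every vertex is pinned at its initial position. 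The claim to establish is that $(G_\varphi,S,\zeta)$, after a small augmentation described below, admits a plane $1$-bend drawing if and only if $\varphi$ is satisfiable. Note that pinning makes every vertex literally immovable, so the only new freedom compared with \textsc{MinShiftedVertices} is that edges may now bend; in particular the machinery that penalized moving immobile \emph{vertices} is no longer needed.

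For the forward direction I would start from the plane straight-line drawing that \textsc{MinShiftedVertices} produces in the satisfiable case, where exactly the $K$ mobile vertices are displaced to predestined positions. To turn this into a plane $1$-bend drawing with all vertices pinned, I keep every immobile edge straight and, for each mobile vertex $w$ that was moved to $w'$, I route its two incident edges with a single bend placed near $w'$: each such edge follows essentially the segment from its far endpoint to $w'$ and then a short stub from $w'$ back to the pinned position of $w$. Because the moved drawing was plane and the stubs can be confined to an arbitrarily small clear corridor around $w'$, the resulting $1$-bend drawing is plane.

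The backward direction is where the real work lies, because a $1$-bend edge is far more flexible than either a straight edge or a vertex move, and I must prevent this flexibility from creating spurious solutions. The key step is to make the immobile skeleton rigid: I would augment each gadget with additional pinned vertices and immobile edges that pack the immobile part so tightly that bending any immobile edge forces it to cross one of its neighbours, so that in any plane $1$-bend drawing the immobile edges are effectively straight. With the skeleton frozen, the crossings of $\delta_\varphi$ can only be removed by the bends on the mobile edges, and the rigid walls of the blocks, $2$-switches and $3$-switches channel each such bend into exactly one of the predestined regions. I then have to verify the same two combinatorial consequences as in the proof of Theorem~\ref{thm:hardness-mmv}: that the two edges at a mobile vertex must bend consistently, so that a block realizes a genuine \emph{true}/\emph{false} choice and a switch either transmits pressure or does not, and that a clause gadget becomes plane precisely when at most two of its big $2$-switches transmit pressure. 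Reading off the bend choices then yields a satisfying assignment.

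The main obstacle, therefore, is confining the bend freedom. Concretely I expect the delicate points to be (i) designing the rigidity frame so that no immobile edge gains a useful bend, (ii) ruling out ``diagonal'' $1$-bend routings of a mobile edge that a straight segment could not take but that nonetheless avoid every crossing, and (iii) forcing the two mobile edges at a vertex to agree on a common predestined region. Each of these I would handle by placing extra pinned obstacle vertices so that the only crossing-free homotopy class available to a mobile edge is the intended one; once this is in place, the equivalence with satisfiability, and hence NP-hardness, follows as above. Since the augmentation adds only a polynomial number of vertices and edges with rational, polynomially-bounded coordinates, the reduction runs in polynomial time.
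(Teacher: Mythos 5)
Your reduction reuses the right gadgets but makes one design decision that the paper deliberately avoids, and it is fatal as stated: you keep each mobile vertex $v$ in the graph and pin it (via $\zeta$) at its original position $\delta_\varphi(v)$, whereas the paper's proof \emph{deletes} the mobile vertices, replacing each length-2 path $(u,v,w)$ through a mobile vertex $v$ by a single ``new'' edge $uw$. In the paper's construction the one allowed bend of $uw$ plays exactly the role of the moved vertex: the predestined positions of $v$ become the only crossing-free locations for the bend, and since no vertex can move, every group of obstacle edges stays in place, so the correctness argument of Theorem~\ref{thm:hardness-mmv} carries over almost verbatim. Moreover, because each former mobile vertex is represented by a \emph{single} edge with a \emph{single} bend, the ``consistency'' issue (your delicate point~(iii), that the two edges at a mobile vertex must agree on one predestined region) disappears automatically.

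The concrete gap in your version is the forward direction. Your intended routing of an edge $u_i v$ is $u_i \to b_i \to \delta_\varphi(v)$ with $b_i$ near a predestined position $v'$, so its second leg is a straight ``stub'' from near $v'$ back to the pinned point $\delta_\varphi(v)$. But $\delta_\varphi(v)$ is precisely the point that the gadgets engineer to be blocked: the immobile (or crossing partner) edges of the gadget cross \emph{both} straight segments from the neighbours to $\delta_\varphi(v)$, forming a barrier between $\delta_\varphi(v)$ and the side on which the neighbours and the predestined positions lie. Your stub starts on that outer side and ends behind the barrier, so there is every reason to expect it crosses the very obstacles the gadget consists of; the gadgets of Theorem~\ref{thm:hardness-mmv} were never designed to leave a crossing-free corridor from $v'$ back to $\delta_\varphi(v)$, because in the vertex-moving problem nothing ever connects the old and new positions. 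Hence your intended drawing need not be plane even when $\varphi$ is satisfiable, and the reduction risks outputting ``no'' for every formula. Your proposed remedy---adding further pinned obstacle vertices to rigidify the skeleton---only aggravates this, since extra obstacles can block the corridor your forward direction needs but can never create it; fixing the construction this way would require redesigning the gadgets themselves, which is exactly what the paper's replace-the-path-by-an-edge trick renders unnecessary.
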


\begin{proof}
  The proof uses nearly the same gadgets as in the proof of
  Theorem~\ref{thm:hardness-mmv}: set~$G'_\varphi$ to a copy
  of~$G_\varphi$ where each length-2 path $(u,v,w)$ containing a
  mobile vertex~$v$ is replaced by the edge~$uw$.  We refer to
  this type of edges as \emph{new} edges.  The vertices
  of~$G'_\varphi$ are mapped to the corresponding vertices
  in~$\delta_\varphi$.  We claim that~$G'_\varphi$ has a 1-bend
  drawing if and only if the given planar-3SAT formula~$\varphi$ is
  satisfiable.

  In order to see that the claim holds, note the two differences to
  the proof of Theorem~\ref{thm:hardness-mmv}.  First, in 
  \textsc{1BendPointSetEmbeddabilityWithCorrespondence} all vertices
  \emph{are} fixed.  This makes it even easier to argue correctness.
  Second, any edge can bend, not only new edges, which are meant to
  bend.  Due to the fact that vertices cannot move, however, all
  groups of edges that are meant to be obstacles will remain obstacles
  to the bending of the new edges.  The only way to embed the new
  edges is to route them around the obstacles exactly as in
  Figs.~\ref{fig:variable} and~\ref{fig:clause}(a).
\end{proof}

Now suppose that we already know that~$G$ has a plane drawing with at most
one bend per edge.  Then it is natural to ask for a drawing with as
few bends as possible.  Let~$\beta(G)$ be 1 plus the minimum number of
bends over all plane 1-bend drawings of~$G$.  The following
corollary shows that it is hard to approximate $\beta(G)$ efficiently.

\begin{corollary} 
  \label{cor:non-approx-one-bend}
  Given a planar graph $G=(V,E)$, a set $S \subset \mathbb{Q}^2$, a
  one-to-one correspondence~$\zeta$ between~$V$ and~$S$ that can be
  extended to a plane 1-bend drawing of~$G$, and a constant $\eps \in
  (0,1]$, it is NP-hard to approximate $\beta(G)$ within a factor 
  of~$n^{1-\eps}$.
\end{corollary}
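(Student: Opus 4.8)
The plan is to combine the fixed-vertex 1-bend reduction of Theorem~\ref{thm:hardness-one-bend} with the gap-amplification used in the proof of Theorem~\ref{thm:non-approx-mmv}. I start from the instance $G'_\varphi$ produced from a \textsc{Planar3SAT} formula~$\varphi$ in Theorem~\ref{thm:hardness-one-bend}: every vertex is fixed at its position in~$\delta_\varphi$ and each former mobile length-2 path has become a single \emph{new} edge that is meant to bend. To amplify, I add for each \emph{immobile} (obstacle) edge $n_\varphi^{(3-\eps)/\eps}$ near-parallel copies placed in its immediate vicinity, exactly as in the proof of Theorem~\ref{thm:non-approx-mmv}; no copies of the new edges are needed. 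Let $(G,S,\zeta)$ be the resulting instance and $n$ its number of vertices. As in Theorem~\ref{thm:non-approx-mmv} one gets $n = O\big(n_\varphi^{3/\eps}\big)$, and the construction is polynomial for fixed~$\eps$.

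The first point to settle is that the reduction respects the promise: $(G,S,\zeta)$ must extend to \emph{some} plane 1-bend drawing regardless of whether $\varphi$ is satisfiable, as otherwise $\beta(G)$ would be undefined on the instances we output. I would lay out the gadgets so that any obstacle built from immobile edges can be ``opened'' by bending each of its edges once, creating a gate through which a blocked new edge can pass with its own single bend. This makes every instance feasible, but opening a gate is expensive, since it forces all $n_\varphi^{(3-\eps)/\eps}$ copies of each participating obstacle edge to bend as well.

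It then remains to bound $\beta(G)$ in the two cases. If $\varphi$ is satisfiable, the clean routing of Figs.~\ref{fig:variable} and~\ref{fig:clause}(a) yields a plane drawing in which only the new edges bend (one bend each) while every immobile edge and every copy stays straight; hence $\beta(G) = 1 + (\text{number of new edges}) = O(n_\varphi)$. If $\varphi$ is unsatisfiable, then by the analysis of Theorem~\ref{thm:hardness-one-bend} at least one obstacle still blocks a new edge, so by the previous paragraph every plane 1-bend drawing must open a gate and therefore bend at least $n_\varphi^{(3-\eps)/\eps}$ copy-edges, giving $\beta(G) \ge n_\varphi^{(3-\eps)/\eps}$. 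Running the arithmetic of Theorem~\ref{thm:non-approx-mmv}, a hypothetical polynomial-time $n^{1-\eps}$-approximation would, on a satisfiable instance, report at most $O(n_\varphi)\cdot n^{1-\eps} = O\big(n_\varphi^{(3-2\eps)/\eps}\big)$, but on an unsatisfiable instance at least $n_\varphi^{(3-\eps)/\eps}$; since $(3-2\eps)/\eps < (3-\eps)/\eps$ for every $\eps\in(0,1]$, these ranges are disjoint once $n_\varphi$ is large, so the algorithm would decide \textsc{Planar3SAT} and force ${\cal P}={\cal NP}$.

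I expect the main obstacle to be the geometric design underlying the feasibility and lower-bound steps: placing the point set~$S$ (with rational coordinates) and shaping the obstacle gadgets so that, simultaneously, a valid 1-bend drawing always exists, the satisfiable case costs only $O(n_\varphi)$ bends, and every 1-bend drawing of an unsatisfiable instance is genuinely forced to bend $\Omega\big(n_\varphi^{(3-\eps)/\eps}\big)$ of the copy-edges with no cheaper escape route. The remaining parts are a direct transcription of the two reductions already established.
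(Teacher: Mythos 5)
Your overall strategy (keep the fixed-vertex 1-bend reduction, add an expensive-but-always-available escape, then run the arithmetic of Theorem~\ref{thm:non-approx-mmv}) is the right shape, and your arithmetic at the end matches the paper's. But there is a genuine gap at the heart of the argument: the feasibility mechanism. You postulate that the obstacles can be redesigned to be ``openable'' by bending their own edges, creating a gate for a blocked new edge, while still functioning as obstacles otherwise. This postulate is in direct tension with the property on which the correctness of Theorem~\ref{thm:hardness-one-bend} rests: that proof explicitly argues that, because vertices are fixed, ``all groups of edges that are meant to be obstacles will remain obstacles to the bending of the new edges,'' i.e.\ obstacles are \emph{not} openable by 1-bend deformations. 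If you weaken the obstacles so that gates exist, you can no longer simply invoke ``the analysis of Theorem~\ref{thm:hardness-one-bend}'' for the unsatisfiable case; you must re-establish, for the weakened gadgets, that whenever no gate is opened the only routings of the new edges are the intended ones encoding a truth assignment, and simultaneously that a gate always \emph{can} be opened (feasibility) and that opening one forces all $s$ copies to bend (the copies must geometrically seal every possible 1-bend escape path, which the copy placement of Theorem~\ref{thm:non-approx-mmv}, designed to block vertex relocation rather than edge routing, does not obviously do). None of this is constructed in your proposal; you yourself flag it as ``the main obstacle,'' but it is precisely the content of the corollary.

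The paper resolves this tension differently and more cleanly: it does not touch the obstacles at all. Instead it modifies the \emph{clause} gadget, replacing the two 3-switches by two 4-switches and adding two stacks of $s={n'_\varphi}^{(3-\eps)/\eps}$ edges (see \figurename~\ref{fig:non-approx}). The fourth routing direction of a 4-switch guarantees that the instance $G''_\varphi$ always admits a plane 1-bend drawing, so the promise holds unconditionally; and each stack is placed so that using that fourth direction (which is forced exactly when the clause evaluates to \emph{false}) makes all $s$ stack edges bend. This keeps the obstacles absolute, so the correctness analysis of Theorem~\ref{thm:hardness-one-bend} carries over verbatim, and the expensive escape is a designed-in extra routing rather than a breach of an obstacle. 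To repair your proof you would either have to carry out the delicate gate construction you sketch, or adopt the paper's localized 4-switch-plus-stack modification.
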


\begin{proof}
  We slightly change the clause gadget in the proof of
  Theorem~\ref{thm:hardness-one-bend}.  Apart from the three vertical
  2-switches, the clause gadget now consists of two 4-switches and of
  two stacks of~$s$ edges each, see \figurename~\ref{fig:non-approx}.
  Let~$G''_\varphi$ be the resulting graph, which depends on the given
  planar 3SAT formula~$\varphi$.  The 4-switches make sure
  that~$G''_\varphi$ always has a drawing with at most one bend per edge. 
  Each stack is placed in the vicinity of a 4-switch such that all
  stack edges have to bend if the central switch edge is forced to
  bend into the direction of the stack.  (In
  \figurename~\ref{fig:non-approx}, the central switch edges in the left and
  right 4-switch are labeled~$e_C$ and~$e_C'$, respectively.)
  If~$\varphi$ is not satisfiable, at least one clause evaluates to
  \emph{false} and in the corresponding gadget all~$s$ edges in the
  left or all~$s$ edges in the right stack need to bend.

  The number~$s$ of edges per stack can be set
  to~${n'_{\varphi}}^{(3-\eps)/\eps}$, where~$n'_{\varphi}$ is the
  number of vertices of the graph~$G'_{\varphi}$ defined in the proof
  of Theorem~\ref{thm:hardness-one-bend}.  Then, the remaining
  calculations for proving hardness of approximation are similar to
  those in the proof of Theorem~\ref{thm:non-approx-mmv}.
\end{proof}

\begin{figure}
  \centering
  \includegraphics[width=\textwidth]{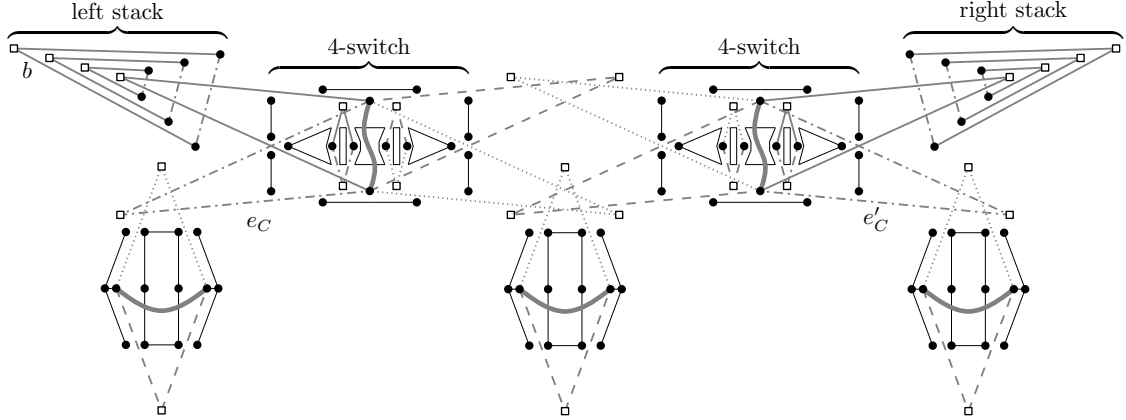}
  \caption{Gadget of clause~$C$ adapted for the proof of
    Corollary~\ref{cor:non-approx-one-bend}.  Edges~$e_C$ and~$e'_C$
    each belong to a 4-switch, that is, they can be drawn in \emph{four}
    combinatorially different ways (drawn in gray; solid vs.\
    dashed-dotted vs.\ dotted vs.\ dashed).  Note that not all
    vertices are marked.}
  \label{fig:non-approx}
\end{figure}

We do not know whether
\textsc{1BendPointSetEmbeddabilityWithCorrespondence} or
\textsc{MinShiftedVertices} lie in~$\cal NP$, but it is not hard to
show the following.

\begin{theorem}
  \label{thm:1bend-pspace}
  \textsc{1BendPointSetEmbeddabilityWithCorrespondence} is in \pspace.
\end{theorem}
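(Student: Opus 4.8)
The plan is to phrase the existence of a valid placement of bend points as a sentence in the existential fragment of the first-order theory of the reals, and then to invoke the theorem of Canny that this theory can be decided in polynomial space. Since the vertices are pinned to the points of~$\s$ by the correspondence~$\zeta$, and since these points have rational coordinates, the only freedom in a plane $1$-bend drawing extending~$\zeta$ is the location of a single bend point per edge. Accordingly, I would introduce for every edge $e \in E$ a pair of real variables $(x_e, y_e)$ standing for the coordinates of its bend point~$b_e$, giving $2|E|$ variables in total. Once the $b_e$ are fixed, an edge $e=uv$ is drawn as the two segments $[\zeta(u),b_e]$ and $[b_e,\zeta(v)]$, one of which may be degenerate (so that straight, bend-free edges are included as a special case).

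Next I would write down a single quantifier-free formula $\Phi$ over these variables that holds exactly when the induced drawing is plane. The drawing is plane precisely when, for every pair of distinct edges $e,e'$, none of the (at most four) pairs formed by one segment of~$e$ and one segment of~$e'$ meet, except possibly at a vertex that $e$ and~$e'$ share (recall that edges are open curves, so edges incident to a common vertex may touch there). The assertion that two segments with endpoints $p_1,q_1$ and $p_2,q_2$ do not meet, or meet only at a prescribed common endpoint, is a Boolean combination of sign conditions on the orientation determinants of triples taken from $\{p_1,q_1,p_2,q_2\}$; each such determinant is a polynomial of degree at most two in the coordinates involved. As the vertex coordinates are rational constants and only the $b_e$ are variables, every one of these conditions is a bounded-size quantifier-free formula in the variables $(x_e,y_e)$, with rational coefficients. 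Taking the conjunction over all $O(|E|^2)$ unordered pairs of edges yields $\Phi$, whose size is polynomial in the input.

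With $m=|E|$ and $e_1,\dots,e_m$ enumerating the edges, the given instance is a yes-instance if and only if the sentence
\[ \exists x_{e_1}\,\exists y_{e_1}\cdots\exists x_{e_m}\,\exists y_{e_m}\colon \Phi \]
is true over~$\R$. This is a polynomial-size instance of the existential theory of the reals (polynomially many atomic predicates, each of constant degree and with rationally bounded coefficients), and it admits a solution with real bend-point coordinates exactly when a plane $1$-bend drawing extending~$\zeta$ exists. By Canny's theorem the truth of such a sentence can be decided in polynomial space, which places \textsc{1BendPointSetEmbeddabilityWithCorrespondence} in~\pspace.

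The step that requires the most care is the exact semialgebraic encoding of planarity. Because edges are open curves, one must allow edges sharing a vertex to touch only at that vertex while forbidding overlaps or crossings elsewhere, and one must also rule out degeneracies such as a bend point coinciding with a vertex or with another bend point, or a vertex lying in the relative interior of another edge's segment. Each of these is again a Boolean combination of polynomial (in)equalities in the bend-point coordinates, so folding them into $\Phi$ does not change its asymptotic size; the only real work is to enumerate the cases of the segment-intersection predicate correctly so that $\Phi$ captures exactly the planar drawings.
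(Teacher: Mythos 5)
Your proposal is correct and takes essentially the same route as the paper's proof: both introduce $2|E|$ real variables for the bend points, express planarity of the resulting half-edge configuration as a polynomial-size Boolean combination of polynomial inequalities (including the degenerate cases: bends meeting vertices or other bends, and overlapping segments sharing an endpoint), and decide non-emptiness of the resulting semialgebraic subset of $\R^{2|E|}$ in \pspace\ via the existential theory of the reals (Canny/Renegar). The only differences are cosmetic, namely your use of orientation determinants versus the paper's separating-line polynomial $P(A,B,C,D)$, and your treatment of bend-free edges via a degenerate segment versus the paper's convention of placing the bend in the relative interior of the segment.
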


\begin{proof}
  Let $G=(V,E)$ be a planar graph, $S$ a set of~$n$ points in the
  plane with rational coordinates, and~$\zeta$ a one-to-one
  correspondence between~$V$ and~$S$.  Any 1-bend drawing of~$G$ that
  extends~$\zeta$ is uniquely determined by choosing, for each
  edge~$e$, the position $(x_e,y_e)$ of the bend~$b_e$ of~$e$.  (If an
  edge~$uv$ is to be drawn without bend, any point in the relative
  interior of the line segment connecting~$\zeta(u)$ and~$\zeta(v)$
  can be chosen.)  Thus, the set of all plane 1-bend drawings of~$G$
  that extend~$\zeta$ can be represented by a subset of~$\R^{2|E|}$.  The 
  bend~$b_e$ splits (the drawing of) the edge~$e$ into two relative
  open line segments to which we refer as \emph{half-edges}.

  In order to decide the existence of a plane 1-bend drawing,
  we specify a predicate in polynomial inequalities with integer
  coefficients and with variables in the set ${\cal E} = \{x_e, y_e
  \mid e \in E\}$.  We do this by first expressing the condition that
  no two half-edges with distinct endpoints may intersect.
  Given four distinct points~$A$, $B$, $C$, and~$D$, the requirement
  that points~$C$ and~$D$ lie in different half-planes determined by
  the line through~$A$ and~$B$ can be expressed by an inequality
  $P(A,B,C,D)<0$, where~$P$ is a degree-4 polynomial with integer
  coefficients and with variables representing the coordinates of the
  four points~\cite{km-igs-94}.  The requirement that the line
  segments~$AB$ and~$CD$ are disjoint is described by the disjunction
  $(P(A,B,C,D)>0) \vee (P(C,D,A,B)>0)$.

  Second, we add conditions that guarantee that no bend~$b_e$
  coincides with a point in~$S$, that all bends are distinct, and that
  no two half-edges overlap if they share an endpoint.  All these
  conditions can also be described as Boolean combinations of
  polynomial inequalities with integer coefficients and with variables
  from~$\cal E$.  As a consequence, deciding whether~$\zeta$ extends
  to a 1-bend drawing of~$G$ recasts into deciding the non-emptiness
  of a set in~$\R^{2|E|}$ defined by a predicate whose atomic formulas
  are polynomial inequalities with integer coefficients, a problem
  that is in \pspace~\cite{c-sagcp-88,r-ccgfo-92}.
\end{proof}

For \textsc{MinShiftedVertices} and \textsc{MaxFixedVertices} an
additional trick is needed. 

\begin{proposition} 
  \label{pro:vertexmove-pspace}
  \textsc{MinShiftedVertices} and \textsc{MaxFixedVertices} are in
  \pspace.
\end{proposition}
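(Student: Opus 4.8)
The plan is to reduce the decision version of the problem — given $G$, $\delta$, and an integer $K$, decide whether $\shift(G,\delta) \le K$ — to a family of semialgebraic non-emptiness tests of exactly the kind already used in the proof of Theorem~\ref{thm:1bend-pspace}, one test per candidate set of moved vertices. The additional trick, compared with the 1-bend setting where \emph{all} vertices are pinned by~$\zeta$, is that here we do not know in advance which vertices are allowed to move, and the number of candidate sets is exponential. The observation that rescues us is that \pspace\ lets us examine these candidates one at a time, reusing the same work tape for each.

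Concretely, I would iterate over every subset $M \subseteq V$ with $|M| \le K$ (the set of vertices permitted to move). For a fixed~$M$, the vertices in $V \setminus M$ are pinned to their positions~$\delta(v)$, which are rational constants given in the input, while each vertex in~$M$ is assigned a pair of free real variables; this yields $2|M| \le 2K$ variables. Following the proof of Theorem~\ref{thm:1bend-pspace}, I then write down a predicate over these variables asserting that the resulting straight-line drawing is plane: all $n$ vertex positions are pairwise distinct, no vertex lies in the relative interior of a non-incident edge, every pair of independent edges is disjoint, and edges sharing an endpoint meet only there. Each such condition is a Boolean combination of polynomial inequalities, and since the pinned coordinates are rational while the remaining coordinates are the variables, the polynomials have integer coefficients after clearing denominators. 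Deciding whether this predicate is satisfiable over $\R^{2|M|}$ is precisely a semialgebraic non-emptiness test, which lies in \pspace~\cite{c-sagcp-88,r-ccgfo-92}.

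The enumeration over subsets~$M$ can be carried out in polynomial space: we maintain only an index into the candidate subsets together with the currently considered subset, both of polynomial size, and we reuse the work space of the non-emptiness subroutine for each~$M$. We accept if and only if some~$M$ yields a non-empty set. Since a polynomial-space loop that repeatedly calls a \pspace\ subroutine (reusing its space) stays in \pspace, the whole decision procedure is in \pspace, which establishes the claim for \textsc{MinShiftedVertices}. For \textsc{MaxFixedVertices} we invoke the identity $\fix(G,\delta) = n - \shift(G,\delta)$: deciding $\fix(G,\delta) \ge K'$ is the same as deciding $\shift(G,\delta) \le n - K'$, so the same procedure applies.

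The step I expect to require the most care is the exponential number of candidate subsets. A naive approach that branches on all subsets at once, or that encodes the ``at most~$K$ moved vertices'' constraint as a single giant disjunction inside one semialgebraic formula, would incur exponential size and break the space bound. The key to the trick is that the subsets need only be inspected sequentially, so their number affects the running time but not the space; each individual semialgebraic instance has only $O(K)$ variables and polynomially many atomic inequalities, keeping every stage within polynomial space.
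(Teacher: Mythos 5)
Your proof is correct, but it takes a genuinely different route from the paper. Where you enumerate all candidate sets $M$ of moved vertices outside the algebraic machinery and run one semialgebraic non-emptiness test per candidate (relying on the fact that a polynomial-space loop reusing the space of a \pspace\ subroutine stays in \pspace), the paper encodes the discrete choice \emph{inside} a single polynomial-size formula: for each vertex $v$ it introduces a real variable $z_v$ forced to be binary by the equation $z_v(z_v-1)=0$, writes the target coordinate of $v$ as the interpolation $(1-z_v)X_v + z_v x_v$ between the original position and a free position, and expresses the budget as the single inequality $\sum_{v\in V} z_v \le K$; planarity is then expressed exactly as in Theorem~\ref{thm:1bend-pspace}. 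The paper's encoding buys a one-shot argument---membership in \pspace\ follows from a single call to the decision procedure for existential polynomial systems, with no reasoning about composing loops and subroutines---and, because the reduction produces one polynomial-size existential formula, it in fact yields a polynomial-time reduction to the existential theory of the reals, which is stronger than a bare \pspace\ upper bound. Your approach buys smaller individual instances (only $O(K)$ variables each, and no algebraic encoding of discrete choices), at the cost of exponential running time and the extra (standard, and correctly identified by you) closure argument; your observation that a giant disjunction over subsets would break the size bound is exactly the difficulty that the paper's $z_v$-trick is designed to sidestep. Your reduction of \textsc{MaxFixedVertices} to \textsc{MinShiftedVertices} via $\fix(G,\delta)=n-\shift(G,\delta)$ matches the paper's remark that the two problems have the same optimal solutions.
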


\begin{proof}
  Obviously, both problems have the same optimal solutions, so it is
  enough to treat one of them, say \textsc{MinShiftedVertices}.  We
  build on the formulation sketched in the proof of
  Theorem~\ref{thm:1bend-pspace}.  Additionally, we introduce a binary
  variable~$z_v$ for each vertex~$v$ that encodes whether we move
  vertex~$v$ ($z_v = 1$) or not ($z_v = 0$).  In order to
  restrict~$z_v$ to these two values, we introduce
  the quadratic equation $z_v (z_v - 1) = 0$.  The
  $x$-coordinate of vertex~$v$ in the plane target drawing can then be
  described by $(1-z_v) X_v + z_v x_v$, where~$X_v$ is the original
  $x$-coordinate of~$v$ and~$x_v$ is the $x$-coordinate of~$v$ after a
  possible movement.  The $y$-coordinate of~$v$ is treated
  analogously.  The intersection of edges can be expressed as in the
  proof of Theorem~\ref{thm:1bend-pspace}.  To bound the number of
  moved vertices by~$K$, we introduce the inequality $\sum_{v \in V}
  z_v \le K$.
\end{proof}

\section{Planar Graphs: Lower Bound} \label{sec:planar:lowerbound}

Any drawing of a planar graph with $n \ge 3$ vertices, other
than~$K_3$ or~$K_4$, can be untangled while fixing at least three 
vertices~\cite{v-ocpg-08}. In this section, we give an algorithm
proving that
\[\fix(G) \ge f(n) = \sqrt{\frac{(\log n) -1}{\log \log
n}}\]
for any planar graph~$G$ with $n \ge 4$ vertices. Note that~$f$
actually grows, albeit \emph{very} slowly: $f(n) > 3$ only for some
$n \approx 6 \cdot 10^{15}$.  Partially building on our algorithm, Bose et
al.~\cite{bdhlmw-pbugp-08} showed that $\fix(G) \ge \sqrt[4]{(n+1)/2}$, a
bound greater than~$3$ for $n>161$. 

We first give some definitions (Section~\ref{sub:definitions}) and
sketch the basic idea of our algorithm (Section~\ref{sub:basic}).
Then we describe our algorithm
(Section~\ref{sub:description:algorithm}) and prove its correctness
(Section~\ref{sub:main:theorem}).  The bound $\fix(G) \ge f(n)$
depends on finding a plane embedding of~$G$ that contains a long
simple path with an additional property.  We show how to find such an
embedding in Section~\ref{sub:finding:a:path}.

\subsection{Definitions and notation}
\label{sub:definitions}

\renewcommand{\textfraction}{0.01} \renewcommand{\topfraction}{0.99}
\renewcommand{\bottomfraction}{0.99}

Recall that a \emph{plane embedding} of a planar graph is given by the
circular order of the edges around each vertex and by the choice of
the outer face.  A plane embedding of a planar graph can be computed
in linear time \cite{ht-ept-74}.  If~$G$ is triangulated, a plane
embedding of~$G$ is determined by the choice of the outer face.
Further recall that an edge of a graph is called a \emph{chord} with
respect to a path (or cycle)~$\Pi$ if the edge does not lie on~$\Pi$
but both its endpoints are vertices of~$\Pi$.

For a point $p \in \R^2$, let~$x(p)$ and~$y(p)$ be the $x$- and
$y$-coordinates of~$p$, respectively.  We say that~$p$ lies
\emph{vertically below} $q \in \R^2$ if $x(p)=x(q)$ and $y(p) \le
y(q)$.  For a polygonal path $\Pi=v_1,\dots,v_k$, we denote by
$V_\Pi=\{v_1,\dots,v_k\}$ the set of vertices of~$\Pi$ and by $E_\Pi =
\{ v_1v_2, \dots, v_{k-1}v_k \}$ the set of edges of~$\Pi$.  We call a
polygonal path $\Pi=v_1,\dots,v_k$ \emph{$x$-monotone} if $x(v_1) <
\dots < x(v_k)$.  In addition, we say that a point $p \in \R^2$ lies
\emph{below} an $x$-monotone path~$\Pi$ if~$p$ lies vertically below a
point~$p'$ (not necessarily a vertex!) on~$\Pi$.  Analogously, a line
segment $\overline{pq}$ lies below~$\Pi$ if every point $r \in
\overline{pq}$ lies below~$\Pi$.  We do not always strictly
distinguish between a vertex~$v$ of~$G$ and the point~$\delta(v)$ to
which this vertex is mapped in a particular drawing~$\delta$ of~$G$.
Similarly, we write~$vw$ both for the edge of~$G$ and the
straight-line segment connecting~$\delta(v)$ with~$\delta(w)$.

\subsection{Basic idea}
\label{sub:basic}

Note that in order to establish a lower bound on~$\fix(G)$, we can
assume that the given graph~$G$ is triangulated.  Otherwise we can
triangulate~$G$ arbitrarily (by fixing an embedding of~$G$ and adding
edges until all faces are 3-cycles) and work with the resulting
triangulated planar graph.  A plane drawing of the latter
yields a plane drawing of~$G$.  So let~$G$ be a triangulated planar
graph, and let~$\delta_0$ be any drawing of~$G$.
It will also be convenient to assume
that in the given drawing $\delta_0$, the vertices of $G$
are mapped to points with pairwise distinct $x$-coordinates.
By slightly rotating the drawing $\delta_0$ we can always achieve this.

The basic idea of our algorithm is to find a plane embedding~$\beta$
of~$G$ such that there exists a long simple path~$\Pi$ connecting two
vertices~$s$ and~$t$ of the outer triangle~$stu$ with the property
that all chords of~$\Pi$ lie on one side of~$\Pi$ (with respect
to~$\beta$) and~$u$ lies on the other.  For an example of such an
embedding~$\beta$, see \figurename~\ref{figure:steps:basic:idea}(b).  We
describe how to find~$\beta$ and~$\Pi$ depending on the maximum degree
and the diameter of~$G$ in Section~\ref{sub:finding:a:path}.  For
the time being, we assume they are given.  Now our goal is to
produce a drawing of~$G$ according to the embedding~$\beta$ and at the
same time keep many of the vertices of~$\Pi$ at their positions
in~$\delta_0$. Having all chords on one side is the crucial property
of~$\Pi$ that we use to achieve this.  We allow ourselves to move all other
vertices of~$G$ to any location we like, a process we will occasionally
refer to as \emph{drawing} certain subgraphs of~$G$. This gives us a lower bound
on $\fix(G,\delta)$ in terms of the number~$l$ of vertices of~$\Pi$.
Our method is illustrated in \figurename~\ref{figure:steps:basic:idea}; we
give the details in the next subsection.

\begin{figure}[t]
  \centering
  \includegraphics{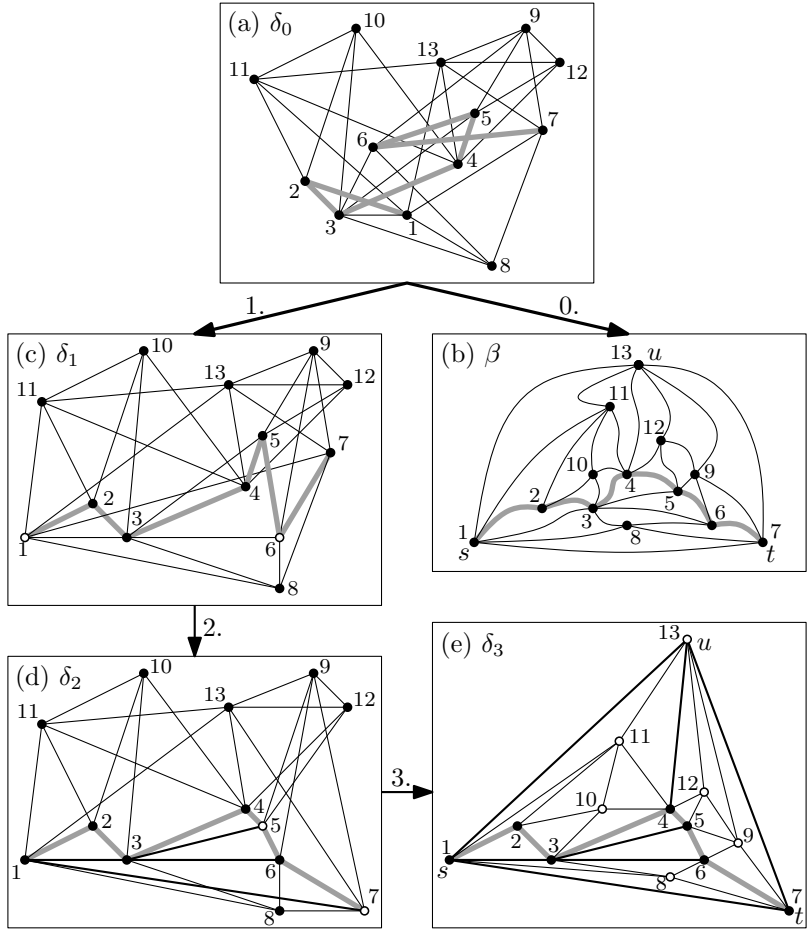}
  \caption{An example run of our algorithm: (a)~input: the given
    non-plane drawing~$\delta_0$ of a triangulated planar graph~$G$.
    (b)~Plane embedding~$\beta$ of~$G$ with path~$\Pi$ (drawn in gray)
    that connects two vertices on the outer face.  To untangle~$\delta_0$
    we first make~$\Pi$ $x$-monotone~(c), then we bring all
    chords (bold segments) to one side of~$\Pi$~(d), move~$u$ to a
    position on the other side of~$\Pi$ where~$u$ sees all vertices
    in~$V_\Pi$, and finally move  
    the vertices in $V \setminus (V_\Pi \cup \{u\})$ to suitable
    positions within the faces bounded by the bold gray and black
    edges~(e).  Vertices that move from~$\delta_{i-1}$ to~$\delta_i$
    are marked by circles; those that do not move are marked by black
    disks.}
  \label{figure:steps:basic:idea}
\end{figure}

\subsection{Description of the algorithm}
\label{sub:description:algorithm}

Let~$C$ denote the set of chords
of~$\Pi$.  We assume that these chords lie to the right of~$\Pi$ 
when we traverse this path from~$s$ to~$t$ in
the embedding~$\beta$.  (Note that ``below'' is not defined in an
embedding.)  Let \Vbot\ denote the set of vertices of~$G$ that lie to
the right of~$\Pi$ in~$\beta$ and let $\Vtop = V \setminus (V_\Pi \cup
\Vbot)$.  Note that~$u$ lies in \Vtop.  Let~$I$ be a subset of the
vertices of~$\Pi$ such that no two vertices in~$I$ are connected by a
chord of~$\Pi$.  We will choose~$I$ such that $|I| \ge
(l+1)/2$, and our method tries to fix many of the vertices in~$I$.

In step~1 of our algorithm we move some of the vertices in~$V_\Pi$ from the
position they have in~$\delta_0$ to new positions such that
the resulting ordering of the vertices in~$V_\Pi$ according to
increasing $x$-coordinates is the same as the ordering along~$\Pi$ in~$\beta$.
This yields a new (usually non-plane) drawing~$\delta_1$ of~$G$ that
maps~$\Pi$ on an $x$-monotone polygonal path~$\Pi_1$.  By
Theorem~\ref{erdos-szekeres} we can choose~$\delta_1$ such that 
at least~$\sqrt{|I|}$ of the vertices in~$I$
remain fixed.  Let $F \subseteq I \subseteq V_\Pi$ be the set of the
fixed vertices.  Note that $\delta_1(v)=\delta_0(v)$ for all $v \in V
\setminus V_\Pi$, see \figurename~\ref{figure:steps:basic:idea}(c).

Once we have constructed~$\Pi_1$, we have to find suitable positions
for the vertices in $\Vtop \cup \Vbot$.  This is simple for the
vertices in \Vtop: if we move vertex~$u$, which lies on the outer
face, far enough above~$\Pi_1$, then the polygon~$P_1$ bounded
by~$\Pi_1$ and by the edges~$us$ and~$ut$ will be \emph{star-shaped}. 
Recall that a polygon~$P$ is called \emph{star-shaped} if the interior
of its kernel is non-empty, and the \emph{kernel} of a
clockwise-oriented polygon~$P$ is the intersection of the right
half-planes induced by the edges of~$P$.  Now if~$P_1$ is star-shaped,
we have fulfilled one of the assumptions of the following result of
Hong and Nagamochi~\cite{hn-cdgnc-08} for drawing \emph{triconnected}
graphs, that is, graphs that cannot be disconnected by removing two
vertices.  We will use their result in order to draw
into~$P_1$ the subgraph~\Gtop\ of~$G$ induced by $\Vtop \cup V_\Pi$
\emph{excluding} the chords in~$C$.

\begin{theorem}[\cite{hn-cdgnc-08}]
  \label{thm:hong-nagamochi}
  Given a triconnected plane graph~$H$, every drawing~$\delta^*$ of
  the outer facial cycle of~$H$ on a star-shaped polygon~$P$ can be
  extended in linear time to a plane drawing of~$H$.
\end{theorem}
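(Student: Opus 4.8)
The plan is to adapt Tutte's barycentric (``spring'') embedding theorem, which establishes exactly this statement in the special case that~$P$ is \emph{convex}: if the outer facial cycle of a triconnected plane graph is pinned to a convex polygon and every interior vertex is placed at the barycenter of its neighbors, the resulting straight-line drawing is plane. The entire difficulty here is that~$P$ is only star-shaped, so the convexity on which Tutte's no-foldover argument rests is unavailable. First I would reduce to placing the interior vertices, since the outer cycle is already drawn by~$\delta^*$ on the simple polygon~$\partial P$. Then I would exploit star-shapedness by fixing a point~$o$ in the interior of the kernel of~$P$; by definition~$o$ sees every point of~$\partial P$, so the triangles $o\,v_i\,v_{i+1}$ over the edges $v_i v_{i+1}$ of the outer cycle tile~$P$, and every ray from~$o$ meets~$\partial P$ exactly once. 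This radial frame, in which~$\partial P$ is \emph{angularly monotone}, is the feature I want to use as a substitute for convexity.

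For the main construction I would set up the equilibrium system in which the outer vertices are held at their prescribed positions~$\delta^*$ and each interior vertex~$v$ satisfies $\delta(v) = \sum_{w \sim v} \lambda_{vw}\,\delta(w)$ with positive weights $\lambda_{vw}$ summing to~$1$. This is a sparse, diagonally dominant linear system with a unique solution. The correctness then rests on a Tutte-type lemma: one can choose the weights (for instance mean-value or cotangent-type weights tied to the radial structure around~$o$) so that no interior face degenerates or reverses orientation, and triconnectivity rules out the separating pairs along which a face could otherwise flip. Proving planarity face by face, using that each interior vertex stays strictly inside the ``wheel'' of its neighbors, would give a plane straight-line drawing inside~$P$ matching~$\delta^*$ on the boundary.

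The \emph{linear-time} claim, however, forbids a generic numerical solve of this system, so I would obtain the running time combinatorially rather than algebraically. The idea is to replace the global equilibrium computation by an incremental sweep that uses the star-shapedness directly: process the vertices of~$H$ in a canonical/shelling order (which exists for triconnected planar graphs), adding one vertex at a time and placing it inside the star-shaped visible region it inherits from the already-drawn part, with each placement realized in constant amortized time. Correctness of this sweep reduces to the same local argument as above, namely that the newly placed vertex lands in the correct pocket of~$P$ and preserves the local rotation system around its neighbors.

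The hard part will be the planarity guarantee near the reflex features of the kernel. Tutte's argument that every interior vertex lies strictly inside the convex hull of its neighbors breaks down where~$\partial P$ is non-convex, since a naive barycentric or greedy placement can fold an interior face over an edge of~$\partial P$. Making the weight choice (or, equivalently, the incremental placement) robust against exactly these reflex pockets -- certifying that each vertex stays inside the intended region and that no face inverts -- is where the real work lies, and it is precisely what forces the use of triconnectivity to exclude the separating pairs that would permit such a flip.
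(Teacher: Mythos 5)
The first thing to note is that the paper does not prove this statement at all: Theorem~\ref{thm:hong-nagamochi} is imported verbatim from Hong and Nagamochi~\cite{hn-cdgnc-08} and used as a black box (the surrounding text only verifies that its hypotheses are met in each application --- triconnectivity via Lemma~\ref{lemma:wheellike:implies:triconnected}, star-shapedness of the drawing regions via Lemmas~\ref{lemma:moving:downwards:is:okay:for:kernel} and~\ref{lemma:modification:procedure}). So there is no proof in the paper to compare yours against; what can be judged is whether your sketch would prove the theorem, and in its present form it would not.

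The gap is that the one step carrying all the difficulty is asserted rather than proved. For a non-convex boundary, even a star-shaped one, a Tutte equilibrium system with positive weights is \emph{not} guaranteed to produce an embedding: there are standard counterexamples (e.g.\ uniform weights with a star-shaped boundary polygon) in which interior faces fold over the reflex pockets. Hence ``one can choose the weights \dots\ so that no interior face degenerates or reverses orientation'' is precisely the theorem to be proved, and no argument is offered for it. Worse, the candidate weights you name are circular: mean-value and cotangent weights are defined from the geometry of a straight-line realization, which is exactly the object you are trying to construct (and cotangent weights can be negative, destroying the convex-combination structure the whole approach rests on). The appeal to triconnectivity ``ruling out separating pairs along which a face could flip'' is not a mechanism; in Tutte's argument triconnectivity enters only after convexity of the boundary has already excluded foldovers. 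Your fallback linear-time sweep inherits the same problem: with the outer cycle pinned in advance to a non-convex polygon, a canonical-ordering placement has no freedom to rescale, and certifying that each newly placed vertex leaves room for the \emph{remaining} graph is again the entire content of the theorem, deferred rather than discharged. The actual proof of Hong and Nagamochi is a combinatorial induction that recursively subdivides the star-shaped region along suitably drawn internal paths, not a spring embedding; if the goal is to prove the statement rather than cite it, that is the route to follow.
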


Observe, however, that \Gtop\ is \emph{not} necessarily triconnected:
vertex~$u$ may be adjacent to vertices on~$\Pi$ other than~$s$
and~$t$.  In order to fix this, we split~\Gtop\ into smaller units
along the edges incident to~$u$.  Let $(s=)w_1,w_2,\dots,w_l(=t)$ be
the sequence of vertices of~$\Pi$.  Let~$(i,k)$ be a pair of
integers such that $1 \le i < k \le l$, vertices~$w_i$ and~$w_k$ are
adjacent to~$u$ and any vertex~$w_j$ with $i<j<k$ is not adjacent
to~$u$.  Consider the subgraph of~\Gtop\ induced by the vertices that
lie (with respect to~$\beta$) inside of or on the cycle
$u,w_i,w_{i+1},\dots,w_j$.  In the following we convince ourselves
that this subgraph is actually triconnected.
Let~\Htop\ be the family of all such subgraphs.

Recall that a planar graph~$H$ is called a
\emph{rooted triangulation} \cite{a-grtwr-96} if in every plane
drawing of~$H$ there exists at most one facial cycle with more than
three vertices.  According to Avis \cite{a-grtwr-96},
the following lemma is well known.  

\begin{lemma}[\cite{a-grtwr-96}]
  \label{lemma:wheellike:implies:triconnected}
  A rooted triangulation is triconnected if and only if no facial
  cycle has a chord.
\end{lemma}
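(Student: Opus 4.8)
The plan is to prove both implications after first reducing to a single special face. Observe that any triangular facial cycle has no chord at all, since all three of its vertices are pairwise adjacent along the cycle; hence among all facial cycles only the (at most one) non-triangular one---call its boundary cycle~$R$---can possibly carry a chord. Thus \emph{no facial cycle has a chord} is equivalent to \emph{$R$ has no chord}. Since the statement speaks of facial cycles, I take~$H$ to be $2$-connected, so that every face is bounded by a simple cycle; triconnectivity then fails exactly when some pair~$\{a,b\}$ is a separation pair, so it suffices to relate $2$-cuts of~$H$ to chords of~$R$. I would dispose of the trivial small cases (e.g.\ $H=K_3$ or $H=K_4$, where all faces are triangles and no chord exists) directly.

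For the direction \emph{chord $\Rightarrow$ not triconnected}, suppose $ab$ is a chord of~$R$. I draw~$H$ with the non-triangular face as the outer face, so that~$R$ bounds a disk containing all remaining vertices and edges and~$ab$ is drawn inside this disk. Since~$a$ and~$b$ are non-consecutive on~$R$, the two arcs of~$R$ between them each contain at least one internal vertex, and the chord~$ab$ splits the disk into two sub-disks, one for each arc. Any path between a vertex interior to one arc and a vertex interior to the other would have to cross the chord, which is impossible in a plane drawing unless it passes through~$a$ or~$b$. Hence $H-\{a,b\}$ is disconnected and~$H$ is not triconnected.

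For the converse I would argue by contraposition: assuming~$\{a,b\}$ is a separation pair, I produce a chord of~$R$. The key planarity tool is the standard fact that a separation pair~$\{a,b\}$ of a $2$-connected plane graph is witnessed by a simple closed curve~$\gamma$ through~$a$ and~$b$ that meets~$H$ only in~$\{a,b\}$ and has a vertex of~$H$ strictly inside and strictly outside. Each of the two open arcs of~$\gamma$ lies in a single face, determining two \emph{distinct} faces~$f_1,f_2$ (distinct because otherwise~$\gamma$ would bound a vertex-free region inside that single face), each incident to both~$a$ and~$b$. Now I split on adjacency of~$a,b$. If~$a$ and~$b$ are non-adjacent, then on each~$\partial f_i$ they are non-consecutive, so each~$f_i$ has at least four boundary vertices and hence equals the unique non-triangular face~$R$, contradicting $f_1\neq f_2$. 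If~$a$ and~$b$ are adjacent and some~$\partial f_i$ has them non-consecutive, then~$ab$ is an edge joining two non-consecutive vertices of that facial cycle, i.e.\ a chord; as this face has length at least four it is~$R$, and we are done. The only remaining possibility is that~$a,b$ are consecutive on both~$\partial f_1$ and~$\partial f_2$, forcing $f_1,f_2$ to be the two faces flanking the edge~$ab$; but then~$\gamma$ hugs~$ab$ and bounds only a thin sliver containing no vertex, contradicting the choice of~$\gamma$.

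I expect the main obstacle to be the converse direction, and within it the careful bookkeeping of the adjacency cases together with clean justifications of the separating-curve fact and of the claim that the two witnessing faces must be distinct. Everything else---the reduction to~$R$ via the triviality of chords in triangular faces and the disk-splitting argument for the easy direction---is routine once the plane embedding is fixed with~$R$ as the outer face.
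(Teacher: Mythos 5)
There is nothing in the paper to compare your proof against: the paper states this lemma without proof, importing it from Avis~\cite{a-grtwr-96} with the remark that it is well known. Judged on its own, your argument is correct in outline, and it is essentially the classical correspondence between separation pairs of a $2$-connected plane graph and faces containing both cut vertices. The reduction to the single non-triangular facial cycle~$R$ is valid (in a simple graph a triangular face cannot carry a chord), the Jordan-curve argument for the direction chord~$\Rightarrow$ not triconnected is sound, and the converse---separating curve~$\gamma$ through the cut pair $\{a,b\}$, two \emph{distinct} faces $f_1\neq f_2$ containing its arcs, then the three-way case split on how $a,b$ sit on $\partial f_1,\partial f_2$---does force a chord of~$R$. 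Three points deserve care in a full write-up. First, your standing assumption of $2$-connectivity is not cosmetic but necessary: two triangles glued at a single vertex satisfy the paper's literal definition of a rooted triangulation and have no facial chord, yet are not triconnected; Avis's definition builds $2$-connectivity in, so you are right to do the same, but it should be stated as part of the definition rather than as an interpretive choice. Second, the ``standard fact'' producing~$\gamma$ is the load-bearing step of the whole converse; it is indeed standard (every component of $H-\{a,b\}$ attaches to both $a$ and~$b$ by $2$-connectivity, and one routes the two arcs of~$\gamma$ through faces lying between different components in the rotation at~$a$), but a self-contained proof must include it. Third, your case-C dismissal (``$\gamma$ hugs $ab$'') is phrased too geometrically, since the arcs $\alpha_i$ may wander far from the edge~$ab$; the correct justification is the same region argument you used to prove $f_1\neq f_2$: the part of $f_i$ cut off by $\alpha_i$ on the $ab$-side, together with the open edge~$ab$ and the corresponding part of the other face, is exactly one side of~$\gamma$ and is vertex-free, contradicting the choice of~$\gamma$. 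With those repairs your proof is complete and has the merit of making the paper self-contained where it currently relies on a citation.
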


Now it is clear that we can apply Theorem~\ref{thm:hong-nagamochi} to
draw each of the subgraphs in~\Htop.
By the placement of~$u$, each drawing region is star-shaped,
and by construction, each subgraph is chordless and thus triconnected.
However, to draw the graph \Gbot\ induced by $\Vbot \cup V_\Pi$
(\emph{including} the chords in~$C$), we must work a little harder.

In step~2 of our algorithm we once more change the drawing of~$\Pi$.
Let $V^* = V_{\Pi} \setminus I$.  Note that every chord of~$\Pi$ has at
least one of its endpoints in~$V^*$.  Now we go through the vertices
in~$V^*$ in a certain order, moving each vertex vertically down as far
as necessary (see vertices~5 and~7 in
\figurename~\ref{figure:steps:basic:idea}(d)) to achieve two goals: (a)~all
chords in~$C$ move below the resulting polygonal path~$\Pi_2$, and
(b)~the faces bounded by~$\Pi_2$, the edge~$st$, and the chords become
star-shaped polygons.  This defines a new drawing~$\delta_2$, which
leaves all vertices in~$F$ and all vertices in $V \setminus V_\Pi$
fixed.

In step~3 we use the fact that~$\Pi_2$ is still $x$-monotone.  This allows us
to move vertex~$u$ to a location above~$\Pi_2$ where it can see every
vertex of~$\Pi_2$.  Now~$\Pi_2$, the edges of type~$uw_i$ (with
$1<i<l$) and the chords in~$C$ partition the triangle~$ust$ into
star-shaped polygons with the property that the subgraphs of~$G$ that
have to be drawn into these polygons are all rooted triangulations,
and thus triconnected.  This means that we can apply
Theorem~\ref{thm:hong-nagamochi} to each of them.  The result is our
final---and plane---drawing~$\delta_3$ of~$G$, see
\figurename~\ref{figure:steps:basic:idea}(e).

\subsection{Correctness of the algorithm}
\label{sub:main:theorem}

We now show that our algorithm indeed produces a plane drawing where
many of the vertices on the chosen path~$\Pi$ are fixed. To this end,
recall that $F \subseteq I$ is the set of vertices in~$V_\Pi$ we fixed
in step~1, that is, in the construction of the $x$-monotone polygonal
path~$\Pi_1$.  Our goal is to fix the vertices in~$F$ when we
construct~$\Pi_2$, which also is an $x$-monotone polygonal path but
has two additional properties: (a)~all chords in~$C$ lie below~$\Pi_2$
and (b)~the faces induced by~$\Pi$, $w_1 w_l$, and the chords in~$C$
are star-shaped polygons.  The following lemmas form the basis for the
proof of the main theorem of this section
(Theorem~\ref{theorem:main}), which shows that this can be achieved.

\begin{lemma}
  \label{lemma:moving:downwards:is:okay:for:kernel}
  Let $\Pi = v_1,\dots,v_k$ be an $x$-monotone polygonal path such
  that (i)~the segment~$v_1 v_k$ lies below~$\Pi$ and (ii)~the
  polygon~$P$ bounded by~$\Pi$ and~$v_1 v_k$ is star-shaped.  
  Let~$v_k'$ be any point vertically below~$v_k$.  Then the polygon $P'=
  v_1,\dots,v_{k-1},v_k'$ is also star-shaped.
\end{lemma}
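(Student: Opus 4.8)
The plan is to prove this by examining how the kernel of the polygon changes when we push the last vertex $v_k$ straight down to $v_k'$. Recall that the kernel of a clockwise-oriented polygon is the intersection of the right half-planes induced by its (oriented) edges, and $P$ is star-shaped means this intersection has nonempty interior. Moving $v_k$ to $v_k'$ affects only the two edges incident to $v_k$, namely $v_{k-1}v_k$ and $v_kv_1$ (the closing segment). So the strategy is: take a point $p$ in the interior of the kernel of $P$, and argue that $p$ still lies in the correct (right) half-plane of both modified edges $v_{k-1}v_k'$ and $v_k'v_1$, hence $p$ lies in the kernel of $P'$, which is therefore nonempty as well.

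First I would set up the orientation bookkeeping carefully. Since $\Pi$ is $x$-monotone with $x(v_1) < \dots < x(v_k)$ and the closing segment $v_1v_k$ lies below $\Pi$, traversing $v_1,\dots,v_k,v_1$ goes left-to-right along the top then back along the bottom; I would check this is the clockwise orientation so that ``kernel = intersection of right half-planes'' is the correct formulation, and note that $v_k'$ lies vertically below $v_k$, so $x(v_k') = x(v_k)$ remains the largest $x$-coordinate and $P'$ is still a valid $x$-monotone-plus-base polygon. Then for a kernel point $p$ of $P$, the fact that $p$ is right of the old edges $v_{k-1}v_k$ and $v_kv_1$ must be leveraged. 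The key geometric observation is that lowering $v_k$ to $v_k'$ rotates these two boundary edges in a way that only \emph{enlarges} the corresponding right half-planes relative to the region where the kernel point sits; intuitively, since $p$ already lies below $\Pi$ and to the right of the edge $v_kv_1$, dropping the apex $v_k$ keeps $p$ on the correct side.

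The main step, and the place I expect the real work, is verifying the half-plane condition for the edge $v_{k-1}v_k'$. I would argue as follows: the kernel point $p$ lies in the right half-plane of $v_{k-1}v_k$, and it also lies below $\Pi$, hence below the segment $v_{k-1}v_k$ and in particular at $x$-coordinate at most $x(v_k)$ with $y$-value bounded by the path. As $v_k$ slides down to $v_k'$, the line through $v_{k-1}$ and the apex rotates clockwise about the fixed point $v_{k-1}$ (because the apex moves straight down while $x(v_{k-1}) < x(v_k)$), and I claim this rotation sweeps the boundary line \emph{away} from $p$, so $p$ remains on the right side. I would make this precise by writing the signed-area (cross-product) test $P(v_{k-1}, v_k', p)$ and comparing it with $P(v_{k-1}, v_k, p)$, showing the sign is preserved; the monotonicity of the cross product as $y(v_k)$ decreases, together with $x(p) > x(v_{k-1})$ or the analogous positional fact guaranteeing $p$ is not above the pivot, gives the result. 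The edge $v_k'v_1$ case is handled symmetrically (or more easily, since lowering $v_k$ pulls the closing edge downward, keeping everything above it), and combining both yields that $p$ lies in the kernel of $P'$, completing the proof.

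The potential obstacle is that the naive monotonicity claim for the cross product can fail if the kernel point $p$ happens to sit at an $x$-coordinate greater than $x(v_{k-1})$ but still in an unfavorable position relative to the pivot; so I would want to first establish, from star-shapedness and the hypothesis that $v_1v_k$ lies below $\Pi$, that every kernel point necessarily satisfies $x(p) \le x(v_{k-1})$ or lies below the relevant edges, pinning down the pivot direction of the rotation. Getting this positional constraint on the kernel right is the crux; once it is in hand, the two cross-product sign comparisons are routine and the conclusion that $\mathrm{kernel}(P) \subseteq \mathrm{kernel}(P')$ (hence $P'$ star-shaped) follows immediately.
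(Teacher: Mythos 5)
Your overall strategy---fix a witness point $p$ in the interior of the kernel of $P$ and show that the \emph{same} point $p$ lies in the kernel of $P'$---cannot work, because the inclusion $\mathrm{kernel}(P) \subseteq \mathrm{kernel}(P')$ is simply false. Concrete counterexample: take $v_1=(0,0)$, $v_2=(1,10)$, $v_3=(2,10.1)$, so that $P$ is a triangle (hence equal to its own kernel, which contains $p=(1.9,10)$), and drop $v_3$ to $v_3'=(2,0)$. The new edge $v_2v_3'$ lies on the line $y=20-10x$, which has height $1$ at $x=1.9$; so $p$ lies strictly above it, i.e.\ $p$ is not even inside $P'$, let alone in its kernel. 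This also kills the fallback you propose in your last paragraph: in this example the kernel of $P$ extends all the way to $x=x(v_k)$, so no positional claim of the form ``every kernel point has $x(p)\le x(v_{k-1})$ or lies below the relevant edges'' can be established---and since $v_k'$ may be arbitrarily far below $v_k$, the line through $v_{k-1}$ and $v_k'$ can be pushed arbitrarily close to vertical, so no constraint intrinsic to $P$ (independent of $v_k'$) can guarantee that a fixed point stays below it. Your analysis of the closing edge $v_k'v_1$ is correct (that half-plane constraint only relaxes when $v_k$ drops), but the constraint from $v_{k-1}v_k'$ genuinely \emph{tightens} on the region $x > x(v_{k-1})$, and that is exactly where kernel points of $P$ may live.

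The missing idea, which is how the paper argues, is that the witness point must be allowed to move as well. Let $K$ be the intersection of the right half-planes of the $k-2$ \emph{unchanged} edges $v_1v_2,\dots,v_{k-2}v_{k-1}$. Because the path is $x$-monotone and traversed left to right, each of these half-planes is of the form ``below a non-vertical line,'' so $K$ is closed under moving points vertically downward. Now take $q$ in the interior of the kernel of $P$ (so $q \in \mathrm{int}\,K$) and let $q'$ be the point vertically below $q$ lying just above the new bottom edge $v_1v_k'$ (the vertical line through $q$ does cross that segment, since $x(v_1)<x(q)<x(v_k)$). Then $q' \in \mathrm{int}\,K$ by downward closure, $q'$ lies above the line through $v_1v_k'$ by construction, and $q'$ lies below the line through $v_{k-1}v_k'$ because that line passes through $v_k'$ and through $v_{k-1}$, which lies above the line through $v_1v_k'$; hence the two lines cross only at $v_k'$ and the former stays strictly above the latter for all $x<x(v_k)$, so a point sufficiently close to the segment $v_1v_k'$ is below it. Thus $q'$ witnesses that $P'$ is star-shaped. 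In short: the correct proof trades your cross-product monotonicity (which fails) for the observation that the ``old'' constraints are downward-closed, and then re-positions the witness near the new bottom edge.
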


\begin{proof}
  Only two edges change when we move vertex~$v_k$ to its new
  position~$v_k'$, namely~$v_1 v_k$ and~$v_{k-1} v_k$.  Consider the remaining 
  $k-2$ edges that do not change and let~$K$ be the intersection of
  the corresponding right half-planes.  Since the~$k-2$ edges form an
  $x$-monotone path, $K$ is not bounded.  Let~$q$ be a point in the
  interior of the kernel of~$P$.  Then~$q$ lies in the interior of~$K$
  and, moreover, every point that is vertically below~$q$ also lies in
  the interior of~$K$. Let~$q'$ be a point vertically below~$q$ and
  sufficiently close to the edge~$v_1 v_k'$. 
  Then~$q'$ lies in the interior of the kernel of~$P'$, and
  therefore~$P'$ is star-shaped by definition. 
\end{proof}

\begin{lemma}
  \label{lemma:moving:downwards:is:okay:for:chords}
  Let $\Pi=v_1,\dots,v_k$ be an $x$-monotone polygonal path, and
  let~$D$ be a set of pairwise non-crossing straight-line segments with 
  endpoints in~$V_\Pi$ that all lie below~$\Pi$.  Let~$v_k'$ be a
  point vertically below~$v_k$, let $\Pi'=v_1,\dots,v_{k-1},v_k'$, and
  finally let~$D'$ be a copy of~$D$ with each segment $v_i v_k \in D$
  replaced by~$v_i v_k'$.

  Then the segments in~$D'$ are pairwise non-crossing and all lie
  below~$\Pi'$.
\end{lemma}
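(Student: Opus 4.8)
The plan is to prove Lemma~\ref{lemma:moving:downwards:is:okay:for:chords} by carefully tracking what happens to each segment when the single vertex~$v_k$ is pushed vertically downward to~$v_k'$. The lemma makes two claims: that the segments in~$D'$ remain pairwise non-crossing, and that they all still lie below~$\Pi'$. I would treat these two claims separately, and in each case the key observation is that the only segments affected by the move are those incident to~$v_k$, together with the two path edges $v_{k-1}v_k$ and (if present) $v_1 v_k$; every segment of~$D$ not incident to~$v_k$ is literally unchanged in~$D'$, and likewise the portion of~$\Pi$ strictly to the left of~$v_{k-1}$ is unchanged.

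\emph{Non-crossing.} First I would argue that no new crossing is introduced among the segments of~$D'$. Fix a segment $v_i v_k \in D$ (so it gets replaced by $v_i v_k'$), and let $v_j v_m$ be any other segment of~$D$. If $v_j v_m$ is not incident to~$v_k$, it is unchanged, and I would show that sliding the endpoint of $v_i v_k$ straight down from~$v_k$ to~$v_k'$ cannot create a crossing with~$v_j v_m$: since all segments of~$D$ lie below the $x$-monotone path~$\Pi$ and have endpoints on~$V_\Pi$, their $x$-extents are controlled by the $x$-coordinates of path vertices, and the downward sweep of the right endpoint keeps $v_i v_k'$ within the vertical slab $[x(v_i),x(v_k)]$. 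The cleanest way to see this is to note that the region swept by the segment as $v_k$ descends to $v_k'$ is the triangle (or the closed region) $v_i, v_k, v_k'$, which lies in the same vertical slab as the original segment and below it; any segment $v_j v_m$ that was disjoint from $v_i v_k$ and lies below~$\Pi$ cannot enter this swept region without having already crossed~$v_i v_k$ or without crossing the vertical line $x=x(v_k)$ above height $y(v_k)$, both of which are excluded. Two segments that both end at~$v_k$ share that endpoint before the move and share~$v_k'$ after, so they remain non-crossing by the same slab/ordering argument applied to their left endpoints on the monotone path.

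\emph{Below $\Pi'$.} Second I would show every segment of~$D'$ lies below~$\Pi'$. For a segment not incident to~$v_k$, neither it nor the relevant portion of the path changed, so it stays below. Here the delicate point, and what I expect to be the main obstacle, is that lowering~$v_k$ to~$v_k'$ also lowers the path itself near its right end (the edge $v_{k-1}v_k$ becomes $v_{k-1}v_k'$), so one must check that a segment $v_i v_k$ does not poke \emph{above} the new, lower edge $v_{k-1}v_k'$. I would argue this by exploiting that $v_k' $ is strictly vertically below~$v_k$ with the same $x$-coordinate, so at the rightmost point $x=x(v_k)$ both the segment $v_i v_k'$ and the path edge now terminate at the same lowered point~$v_k'$; for $x$-values to the left of $x(v_{k-1})$ the path is unchanged and the segment was already below it, and on the interval $[x(v_{k-1}),x(v_k)]$ both the segment and the path edge $v_{k-1}v_k'$ are straight lines sharing the right endpoint~$v_k'$, so I would compare their slopes. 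Since originally $v_i v_k$ lay below the edge $v_{k-1}v_k$ and both right endpoints drop by the same amount in the $y$-direction, the ``below'' relation on this last interval is preserved. Assembling the three $x$-ranges (left of $v_{k-1}$, the last edge's interval, and the shared endpoint) gives that $v_i v_k'$ lies below~$\Pi'$ everywhere, completing the proof.
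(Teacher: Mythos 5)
Your overall strategy is the same as the paper's: both proofs argue that the region uncovered when $v_k$ drops to $v_k'$ (you use the swept triangle $v_i v_k v_k'$ for each segment separately; the paper uses the union $B$ of the regions below \emph{all} segments incident to $v_k$, path edge and chords alike) cannot be reached by the unchanged segments. Your slope comparison for the ``below $\Pi'$'' claim is correct---since $x(v_i) \le x(v_{k-1})$, dropping both right endpoints by the same amount decreases the chord's slope by less than it decreases the slope of the last path edge, so the chord stays below that edge---and this part is actually more explicit than anything in the paper. However, your non-crossing argument has a genuine gap: the claim that a segment $v_j v_m$ ``cannot enter this swept region without having already crossed $v_i v_k$ or without crossing the vertical line $x=x(v_k)$'' silently assumes that $v_j v_m$ starts \emph{outside} the region, i.e., that neither of its endpoints lies in the interior of the triangle $v_i v_k v_k'$. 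If a vertex of $\Pi$ could sit strictly inside that triangle, a segment of $D$ ending at that vertex could cross $v_i v_k'$ while crossing neither $v_i v_k$ nor the vertical line, and your argument would collapse. Ruling this out is precisely the central observation of the paper's proof: the interior of the region below the segments incident to $v_k$ contains no vertex of $\Pi$, because a vertex $v_j$ with $x(v_i) < x(v_j) < x(v_k)$ lying strictly below $v_i v_k$ would contradict either the $x$-monotonicity of $\Pi$ or the hypothesis that $v_i v_k$ lies below $\Pi$ (the path passes through $v_j$ at abscissa $x(v_j)$, yet the segment $v_i v_k \in D$ must stay weakly below the path at every abscissa, hence weakly below $v_j$). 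You mention the below-$\Pi$ hypothesis in passing, but you never use it for this purpose, and without this step the proof is incomplete.

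A second, smaller slip: the right side of your swept triangle is the vertical segment from $v_k$ \emph{down} to $v_k'$, so the entry you need to exclude is a crossing of the line $x=x(v_k)$ at heights between $y(v_k')$ and $y(v_k)$, not ``above height $y(v_k)$'' as written. The correct exclusion is easy and stronger than what you state: every segment of $D$ not incident to $v_k$ has both endpoints among $v_1,\dots,v_{k-1}$, which by $x$-monotonicity all have $x$-coordinate strictly smaller than $x(v_k)$, so such a segment never meets the line $x=x(v_k)$ at all. With the missing no-vertex-inside observation added and this boundary condition stated correctly, your proof goes through and coincides in substance with the paper's.
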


\begin{proof}
  Let $v_{i_1} v_k, \dots, v_{i_m} v_k$ be the straight-line segments incident
  to~$v_k$ (both on the monotone path~$\Pi$ and in~$D$), sorted
  clockwise around~$v_k$ such that $v_{i_m} = v_{k-1}$.  Note that,
  since the straight-line segments in~$D$ are below~$\Pi$, the
  vertices $v_{i_1},\dots,v_{i_m}$ are also sorted according to increasing
  $x$-coordinates, and all of them have smaller $x$-coordinate than~$v_k$.  
  Hence, the situation is as depicted in
  \figurename~\ref{figure:moving:downwards:is:okay:for:chords}.
  
\begin{figure}[h]
  \centering
  \includegraphics{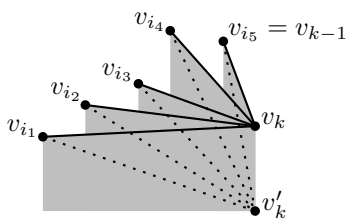}
  \caption{Illustration of the proof of
    Lemma~\ref{lemma:moving:downwards:is:okay:for:chords}.}
  \label{figure:moving:downwards:is:okay:for:chords}
\end{figure}

  For $1 \leq j \le m$, let~$B_j$ denote the set of points 
  that lie below the straight-line segment $v_{i_j} v_k$ and define 
  $B = \bigcup_{j=1}^m B_j$, see the shaded region
  in \figurename~\ref{figure:moving:downwards:is:okay:for:chords}.
  Note that the interior
  of~$B$ cannot contain any vertices of~$\Pi$ since this
  would contradict the fact that~$\Pi$ is $x$-monotone or the fact that
  all straight-line segments in~$D$ are below~$\Pi$.
  But this implies that none of the straight-line segments~$v_{i_j} v_k'$
  (drawn dotted in \figurename~\ref{figure:moving:downwards:is:okay:for:chords})
  is crossed by a straight-line segment in~$D'$ since this would
  yield a contradiction 
  to the fact that the straight-line segments in~$D$ are non-crossing or
  to the fact that the interior of~$B$ does not contain a vertex of~$\Pi$.
  No other crossings can occur in~$D'$ since the straight-line
  segments in~$D$ are non-crossing.  This finishes the proof.
\end{proof}

Recall that we aim at finding a large set $I \subseteq V_\Pi$ such
that no two vertices in~$I$ are connected by a chord of~$\Pi$.  The
set~$F$ of fixed vertices will be a subset of~$I$.  Note that~$I$ may
contain vertices connected by an \emph{edge} of~$\Pi$.  
In the following lemma, the set~$V^*$ contains all vertices of~$\Pi$
that we have to move in order to draw the chords of~$\Pi$
straight-line; clearly such a set must cover all chords of~$\Pi$.  
Thus the set~$V^*$ plays the role of the complement of~$I$.

\begin{lemma}
  \label{lemma:modification:procedure}
  Let $\Pi=v_1,\dots,v_k$ be an $x$-monotone polygonal path.
  Let~$C_\Pi$ be a set of chords of~$\Pi$ that can be drawn as
  non-crossing curved lines below~$\Pi$. 
  Let~$V^*$ be a vertex cover of~$C_\Pi$.  Then there is a way to
  modify~$\Pi$ by decreasing the $y$-coordinates of the vertices
  in~$V^*$ such that the resulting straight-line drawing~$\delta^*$ 
  of~$G_\Pi = (V_\Pi, E_\Pi \cup C_\Pi)$ is plane, the bounded faces
  of~$\delta^*$ are 
  star-shaped, and all edges in~$C_\Pi$ lie below the modified
  polygonal path, which is also $x$-monotone.  The coordinates of
  the vertices of the modified path have bit length~$O(nL)$, where~$L$
  is the maximum bit length of the vertex coordinates of~$\Pi$.
\end{lemma}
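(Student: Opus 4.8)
The plan is to prove Lemma~\ref{lemma:modification:procedure} constructively, by giving an explicit order in which to push the vertices of~$V^*$ downward and then verifying the three required properties (planarity, star-shapedness of bounded faces, and that all chords end up below the path) by invoking the two preceding lemmas. The natural processing order is from right to left, that is, treating the vertices of~$V^*$ in order of \emph{decreasing} $x$-coordinate. First I would set up the picture: since the chords in~$C_\Pi$ can be drawn as non-crossing curves below the $x$-monotone path~$\Pi$, they form an outerplanar-type nesting structure, and I would record that combinatorial nesting so that the straight-line realizations I produce respect it. The key point is that moving a single vertex~$v_k$ vertically downward is exactly the operation analyzed in Lemmas~\ref{lemma:moving:downwards:is:okay:for:kernel} and~\ref{lemma:moving:downwards:is:okay:for:chords}: it preserves star-shapedness of the relevant bounded polygon and keeps the chords non-crossing and below the (still $x$-monotone) path.

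The core of the argument is an induction that processes~$V^*$ one vertex at a time. When I move the current rightmost unprocessed vertex $v \in V^*$ downward, I push it far enough that every chord incident to~$v$ (and lying below~$\Pi$) is drawn as a straight segment that stays below the current path, and far enough that the bounded face(s) it newly bounds become star-shaped. Lemma~\ref{lemma:moving:downwards:is:okay:for:chords} guarantees that this single downward move does not create crossings among the chords already realized and keeps them below the current path; Lemma~\ref{lemma:moving:downwards:is:okay:for:kernel} guarantees that the faces that were already star-shaped remain star-shaped after the move. The processing order matters: because~$V^*$ is a vertex cover of~$C_\Pi$, every chord has at least one endpoint in~$V^*$, so by the time I have processed all of~$V^*$ every chord has been realized straight-line; and by handling vertices right-to-left I ensure that when I move~$v$, the only chords whose geometry is being fixed are those incident to~$v$ whose other endpoint lies to the left, so the hypotheses of the two lemmas (the segment~$v_1v_k$ below~$\Pi$, the incident segments sorted by $x$-coordinate) are met at each step.

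The main obstacle I expect is not any single move but the global consistency across moves: I must choose the vertical displacement of~$v$ large enough to clear \emph{all} obstacles simultaneously (to bring every incident chord below the path and to make every incident face star-shaped) while simultaneously small enough that I do not disturb the star-shapedness or non-crossing properties established in earlier steps. I would handle this by observing that each constraint is a strict inequality defining an open downward ray of admissible positions for~$v$ (this is precisely the content of the two lemmas applied to the half-planes and the region~$B$ of Figure~\ref{figure:moving:downwards:is:okay:for:chords}), so a sufficiently large downward displacement satisfies all of them at once; the finiteness of the vertex and chord sets then yields a single valid choice. The remaining subtlety is that pushing~$v$ down must not destroy star-shapedness of faces lying to its left that were already finalized, and this is exactly guaranteed by Lemma~\ref{lemma:moving:downwards:is:okay:for:kernel}, whose kernel-preservation conclusion is invariant under further downward moves of vertices to the right.

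Finally, for the bit-length bound I would track the arithmetic explicitly. Each admissible displacement can be chosen as a rational number whose numerator and denominator are bounded by a polynomial in the coordinates computed so far; since each of the $O(n)$ moves increases the coordinate bit length by an additive $O(L)$ term (the displacement needed to clear a constant number of linear constraints determined by previously placed points), after all moves the coordinates have bit length~$O(nL)$. I would make this precise by exhibiting, at each step, a concrete target $y$-coordinate for~$v$ as a simple rational expression in the current coordinates and arguing that its bit length grows by at most a constant times~$L$ per step, so that the total over the at most~$n$ vertices of~$V^*$ is~$O(nL)$ as claimed.
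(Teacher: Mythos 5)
Your processing order is where the proof breaks. You sweep the vertices of~$V^*$ from right to left and finalize, at each move, all chords incident to the moved vertex. This order ignores the \emph{nesting} structure of the chords, and your central claim---that every constraint on the dropped vertex ``is a strict inequality defining an open downward ray of admissible positions'', so a deep enough drop satisfies everything---becomes false as soon as an outer chord has been drawn straight before a chord nested inside it. Concretely, take $\Pi$ with $v_1=(0,0)$, $v_2=(1,0)$, $v_3=(2,-100)$, $v_4=(3,-100)$, $v_5=(4,0)$, $v_6=(5,0)$, chords $C_\Pi=\{v_1v_6,\;v_2v_5\}$ (nested, hence drawable as non-crossing curves below~$\Pi$), and vertex cover $V^*=\{v_2,v_6\}$. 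Your algorithm processes~$v_6$ first and straightens $v_1v_6$; the constraints at that step are indeed downward rays and only require $y(v_6)$ below roughly $-250$, so the algorithm may legitimately stop at, say, $y(v_6)=-260$. Now process~$v_2$: to bring the segment $v_2v_5$ below~$v_4$ you need $y(v_2)<-300$, but to keep the path edge $v_2v_3$ from crossing the already-drawn segment $v_1v_6$ you need $v_2$ to stay \emph{above} that segment, i.e.\ $y(v_2)>y(v_6)/5=-52$. The admissible set is a bounded interval, here empty; the constraint imposed by the straightened outer chord is a floor, not a downward ray. No local notion of ``far enough'' at the time $v_6$ is moved can prevent this, because the slack needed there depends on chords not incident to~$v_6$. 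Note also that in this configuration $v_2$ is an \emph{interior} vertex of the bounded face delimited by $v_1v_6$ and the path, so Lemma~\ref{lemma:moving:downwards:is:okay:for:kernel}---which only covers moving the extreme (leftmost or rightmost) vertex of a face---cannot be invoked to preserve star-shapedness, contrary to what you assert.

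The paper avoids exactly this trap by inducting on the \emph{chords} rather than sweeping the vertices: it selects a chord $vw$ whose span is contained in no other chord's span (an outermost chord), first realizes all the remaining chords by induction, and only then straightens $vw$ by dropping its covered endpoint. Straightening innermost chords before outermost ones buys the two facts your order lacks: (i)~when a vertex is dropped, no already-drawn edge passes below it (any such edge would cross or enclose the outermost chord), so every constraint genuinely is a downward ray and a sufficiently deep drop works; and (ii)~the dropped vertex is the leftmost or rightmost vertex of every bounded face it lies on---this is exactly where outermost-ness is used---which is the hypothesis needed to apply Lemmas~\ref{lemma:moving:downwards:is:okay:for:kernel} and~\ref{lemma:moving:downwards:is:okay:for:chords}. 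Your right-to-left sweep could conceivably be repaired by quantifying ``far enough'' with a lookahead over all chords nested inside the ones being finalized, but that is a substantive new argument (essentially re-deriving the nesting-aware recursion); as written, the proposal's satisfiability claim is false and the construction can get stuck.
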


\begin{proof}
  We use induction on the number~$m=|C_\Pi|$ of chords.  If $m=0$, we need not
  modify~$\Pi$.  So, suppose that $m > 0$.  We first choose a chord $vw \in
  C_\Pi$ with $x(v) < x(w)$ such that there is no other edge $v'w' \in
  C_\Pi$ with the property that $x(v') \le x(v)$ and $x(w') \ge x(w)$.
  Clearly, such an edge always exists.  Then we apply the induction
  hypothesis to $C_\Pi \setminus \{vw\}$.  This yields a
  modification~$\Pi'$ of~$\Pi$ such that $\Pi'$ is $x$-monotone,
  all edges in the resulting straight-line 
  drawing~$\delta'$ of $G_\Pi - vw$ lie below~$\Pi'$, and all
  bounded faces in this drawing are star-shaped.  

  Now consider the chord~$vw$ and, without
  loss of generality, assume that $v \in V^*$. 
  Let~$Z$ denote the set of those vertices $z \in V_\Pi$
  with the property that no point vertically below~$z$ and distinct
  from~$z$ is contained in an edge of the drawing~$\delta'$.
  Note that, since~$\Pi'$ is $x$-monotone, there must exist a
  point~$p$ vertically  
  below~$v$ such that for no vertex~$z \in Z$, the straight-line
  segment~$pz$ crosses any edge in the drawing~$\delta'$ of $G_\Pi - vw$.

  Let $i \in \{1,\dots,k\}$ be such
  that $v = v_i$. We move vertex~$v$ to the point~$p$ to obtain
  a new drawing~$\delta''$ of $G_\Pi - vw$.  
  Then we apply Lemma~\ref{lemma:moving:downwards:is:okay:for:chords}
  to the rightmost vertex of the $x$-monotone subpath
  of~$\Pi'$ with vertices $v_1,\dots,v_i$ and, similarly,
  we apply Lemma~\ref{lemma:moving:downwards:is:okay:for:chords} to
  the leftmost vertex of the $x$-monotone subpath 
  of~$\Pi'$ with vertices $v_i,\dots,v_k$.  It follows that
  this does not produce any crossings among the edges in the
  drawing~$\delta''$.  
  Moreover, by our choice of the chord~$vw$, for each face in the
  drawing~$\delta'$ of $G_\Pi - vw$ that has vertex~$v$  
  in its facial cycle, $v$ must be the leftmost or the rightmost
  vertex in this facial cycle. 
  Hence, we can apply
  Lemma~\ref{lemma:moving:downwards:is:okay:for:kernel} to these
  faces.  This yields that they remain star-shaped in~$\delta''$.
  By the choice of~$p$ we ensure that the bounded face that results 
  from adding the straight-line segment~$pw$ to the
  drawing~$\delta''$ is also star-shaped.

  Concerning the size of the coordinates we argue as follows.  Without
  loss of generality we can assume that all vertices of~$\Pi$ have
  negative $y$-coordinates.  Now consider the addition of the $i$-th
  chord~$vw$.  Let~$y_{i-1}$ be the minimum $y$-coordinate of a vertex in
  the drawing before moving vertex~$v$ down.  Then it is not hard to
  check that, in order to add the chord~$vw$ without introducing any
  crossings, it suffices to move~$v$ down to a point with 
  $y$-coordinate $(2R_x)y_{i-1}$, where~$R_x$ is the ratio
  of the maximum over the minimum difference between the
  $x$-coordinates of any two distinct vertices in~$V_\Pi$.  
  Solving the recurrence for~$y_i$ yields $|y_i| \le |(2 R_x)^i y_0|$.
  Therefore,
  since there are only $O(n)$ chords, the $y$-coordinates in
  the resulting $x$-monotone path can be encoded using $O(nL)$
  bits.
\end{proof}

\begin{remark}
  Unfortunately, there are indeed instances where our algorithm
  actually needs $\Theta(n^2)$ bits for representing all $y$-coordinates
  of the modified path.  Let~$k>0$ be an odd integer, and let~$\Pi$ be
  a path with $n=2k+1$ vertices 
  $v_1,\dots,v_n$, where $v_i = (i,0)$ for $1 \le i \ne k+1 \le n$ and
  $v_{k+1} = (k+1,-1)$, see the thick light-gray path in
  \figurename~\ref{fig:coordinates}.  We set $C_\Pi = \{v_1v_n,
  v_2v_{n-1}, \dots, v_kv_{k+2}\}$ (drawn with dotted arcs in
  \figurename~\ref{fig:coordinates}) and $V^* = \{ v_2, v_4, \dots,
  v_{k-1}, v_{k+2}, \dots, v_{n-2}, v_n \}$ (marked with
  circles in \figurename~\ref{fig:coordinates}).

  Our algorithm straightens the chords in the order from innermost to
  outermost, that is, vertices are moved in the order $v_{k+2},
  v_{k-1}, v_{k+4}, \dots, v_2, v_n$.  To simplify presentation, let
  $w_1, w_2, \dots, w_k$ denote the vertices of~$V^*$ in this order,
  and let $w_0 = v_{k+1}$.  For $i=0,\dots,k$, denote the final position of
  $w_i$ by $(x_i,-y_i)$.  Then
  clearly $|x_i - x_{i-1}| = 2i-1$ for $i=1,\dots,k$.  The edges
  incident to~$w_{i-1}$ have slope $\pm y_{i-1}$ (with the exception
  of the irrelevant edge $w_0w_1$), thus $y_i > y_{i-1} + y_{i-1}
  \cdot |x_i - x_{i-1}| = y_{i-1} \cdot 2i$.  The recursion solves to
  $y_i > 2^i i!$.
\end{remark}

\begin{figure}
  \centering
  \includegraphics{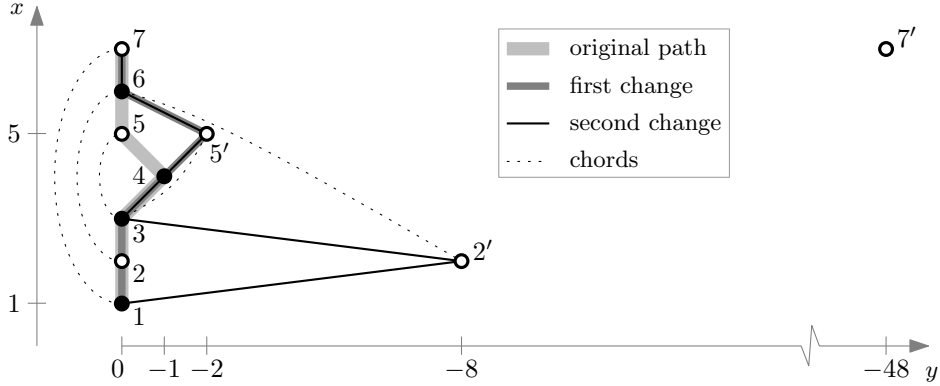}
  \caption{When applying the algorithm that proves
    Lemma~\ref{lemma:modification:procedure} to the thick light-gray
    $n$-vertex path with the dotted chords indicated on the left and
    with the vertex cover~$V^*$ indicated by circles, some of the
    $y$-coordinates of the resulting path need more than $n$ bits.
    Note that the $x$-axis is vertical.}
  \label{fig:coordinates}
\end{figure}

Now suppose that we have modified the $x$-monotone path~$\Pi_1$ according
to Lemma~\ref{lemma:modification:procedure}.  Then the resulting
$x$-monotone path~$\Pi_2$ admits a straight-line drawing of the chords
in~$C$ below~$\Pi_2$ such that the bounded faces are star-shaped
polygons, see the example in
\figurename~\ref{figure:steps:basic:idea}(d).  Recall that $u \in \Vtop$ is
the vertex of the outer triangle in~$\beta$ that does not lie
on~$\Pi$.  We now move vertex~$u$ to a position above~$\Pi_2$ such
that all edges $uw \in E$ with $w \in V_\Pi$ can be drawn without
crossing~$\Pi_2$ and such that the resulting faces are star-shaped
polygons.  Since~$\Pi_2$ is $x$-monotone, this can be done.  As an
intermediate result, we obtain a plane straight-line drawing of a
subgraph of~$G$ where all bounded faces are star-shaped.  It remains to
find suitable positions for the vertices in $(\Vtop \setminus \{u\})
\cup \Vbot$.  For every star-shaped face~$f$, there is a unique
subgraph~$G_f$ of~$G$ that must be drawn inside this face.  Note that
by our construction every edge of~$G_f$ that has both endpoints on the
boundary of~$f$ must actually be an edge of the boundary.  Therefore,
$G_f$ is a rooted triangulation where no facial cycle has a chord.
Now Lemma~\ref{lemma:wheellike:implies:triconnected} yields that~$G_f$
is triconnected.  Finally, we can use the result of Hong and Nagamochi
\cite{hn-cdgnc-08} (see Theorem~\ref{thm:hong-nagamochi}) to draw each
subgraph of type~$G_f$ and thus finish our construction of a plane
straight-line drawing of~$G$, see the example in
\figurename~\ref{figure:steps:basic:idea}(e).  We summarize.

\begin{theorem}
  \label{theorem:main}
  Let~$G$ be a triangulated planar graph that contains a simple path
  $\Pi=w_1,\dots,w_l$ and a face $uw_1w_l$.  If~$G$ has an
  embedding~$\beta$ such that $uw_1w_l$ is the outer face, $u$ lies on
  one side of~$\Pi$, and all chords of~$\Pi$ lie on the other side,
  then $\fix(G) \ge \sqrt{(l+1)/2}$.
\end{theorem}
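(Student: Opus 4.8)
The plan is to assemble the machinery developed in the preceding lemmas into a single chain of drawing steps, while simultaneously tracking how many vertices of $\Pi$ we keep fixed. First I would set up the combinatorial skeleton. Since $G$ is triangulated and we are given the embedding $\beta$ with outer face $uw_1w_l$, vertex $u$ on one side of $\Pi$ and all chords $C$ on the other, the chords of $\Pi$ form a set of non-crossing arcs that can all be drawn on a single side. I would then invoke the key combinatorial fact that a set of non-crossing chords admits an independent set $I \subseteq V_\Pi$ of size at least $(l+1)/2$: the chords, being non-crossing and all on one side, form an outerplanar structure whose ``conflict'' relation is itself essentially a forest-like structure, so a simple greedy or two-coloring argument gives a vertex set of size $\lceil (l+1)/2 \rceil$ no two of which are joined by a chord. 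This $I$ is exactly the set from which $F$ will be drawn.

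Next I would run through the three algorithmic steps already described, but now as a proof that they terminate correctly. In step~1, I rotate $\delta_0$ so the $x$-coordinates are distinct, then reorder $V_\Pi$ left-to-right to match the order along $\Pi$ in $\beta$, producing the $x$-monotone path $\Pi_1$. The crucial quantitative input here is Theorem~\ref{erdos-szekeres}: the indices of the vertices of $I$ in their original $\delta_0$-order form a sequence of length $|I| \ge (l+1)/2$, and by Erd\H{o}s--Szekeres this sequence contains a monotone subsequence of length at least $\sqrt{|I|} \ge \sqrt{(l+1)/2}$. Taking the increasing one (matching the target left-to-right order) lets me leave those vertices fixed; call this set $F$. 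The only subtlety to check is that fixing a monotone subsequence is compatible with making the whole path $x$-monotone while moving only the non-fixed vertices, which follows because the fixed vertices already appear in the correct relative horizontal order.

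For step~2 I would apply Lemma~\ref{lemma:modification:procedure} with $C_\Pi = C$ and vertex cover $V^* = V_\Pi \setminus I$; since $I$ is independent in the chord graph, $V^*$ indeed covers every chord, and the lemma only lowers $y$-coordinates of vertices in $V^*$, so every vertex of $F \subseteq I$ stays exactly where step~1 placed it. This produces $\Pi_2$, still $x$-monotone, with all chords drawn straight below it and all bounded faces star-shaped. Then step~3 places $u$ high above $\Pi_2$ so that all edges $uw_i$ can be drawn without crossing $\Pi_2$ and the resulting regions are star-shaped; this is possible precisely because $\Pi_2$ is $x$-monotone. At this point $\Pi_2$, the boundary edge $w_1w_l$, the chords in $C$, and the edges $uw_i$ partition the outer triangle into star-shaped faces, and I must verify that each subgraph $G_f$ to be drawn inside such a face is a rooted triangulation with no chord on any facial cycle, so that Lemma~\ref{lemma:wheellike:implies:triconnected} makes it triconnected and Theorem~\ref{thm:hong-nagamochi} draws it into the star-shaped region. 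Filling in every $G_f$ yields a plane straight-line drawing in which all of $F$ is fixed, so $\fix(G) \ge |F| \ge \sqrt{(l+1)/2}$.

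The main obstacle I anticipate is the geometric bookkeeping that guarantees the three steps are mutually consistent: I must ensure that the downward moves of $V^*$ in step~2 never disturb a fixed vertex and never destroy $x$-monotonicity (handled by Lemma~\ref{lemma:modification:procedure}), and that after placing $u$ the interior subgraphs really are chordless rooted triangulations rather than merely triconnected components. Establishing this last point requires arguing that, because $G$ is triangulated and every edge with both endpoints on a face boundary is forced by construction to be a boundary edge, no spurious chord survives inside any $G_f$; this is the step where the hypothesis ``all chords of $\Pi$ lie on the other side from $u$'' is genuinely used, since it is what prevents a chord from cutting across one of the star-shaped faces on the $u$-side. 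Once that is pinned down, the counting via Erd\H{o}s--Szekeres is routine and the bound follows.
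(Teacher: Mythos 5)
Your geometric pipeline (make $\Pi$ $x$-monotone via Erd\H{o}s--Szekeres, pull the chord-cover $V^* = V_\Pi \setminus I$ downward via Lemma~\ref{lemma:modification:procedure}, lift $u$ above the resulting path, and fill each star-shaped face via Lemma~\ref{lemma:wheellike:implies:triconnected} and Theorem~\ref{thm:hong-nagamochi}) is exactly the paper's argument, and that part of your write-up is sound. The genuine gap is the combinatorial claim you dispose of in one sentence: that the chord graph $G_C=(V_\Pi,C)$ has an independent set of size at least $(l+1)/2$. This claim is the actual content of the paper's proof of the theorem, and the justifications you offer for it do not hold up. The conflict relation is \emph{not} forest-like and not bipartite: nested chords sharing endpoints, e.g.\ $w_1w_3$, $w_3w_5$, $w_1w_5$, form a triangle in $G_C$, so any two-coloring argument fails outright. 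Likewise, the naive greedy bound is too weak: $G_C$ is outerplanar, so a minimum-degree greedy always picks a vertex of degree at most~$2$ and removes at most three vertices per step, which yields only $|I|\ge l/3$, hence only $\fix(G)\ge\sqrt{(l+3)/3}$ --- weaker than the stated bound.

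To reach $(l+1)/2$ the paper runs that same greedy but with a refined count: writing $n_i$ for the number of picks made at degree~$i$, one has $|I|=n_0+n_1+n_2$ and $l=n_0+2n_1+3n_2$; each degree-2 pick destroys exactly one bounded face of $G_C$, so $n_2=f$ where $f$ is the number of bounded faces; and --- the key lemma, proved by induction on the spans of chords and faces --- $G_C$ always contains at least $f+1$ isolated vertices, i.e.\ $n_0\ge n_2+1$. Adding the inequality $1\le n_0-n_2$ to the expression for $l$ gives $l+1\le 2|I|$. Some argument of this strength (exploiting that chords join non-consecutive path vertices and are non-crossing, not merely that they form an outerplanar graph) is indispensable; without it your proof does not establish the claimed constant. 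A second, smaller issue: Erd\H{o}s--Szekeres only guarantees a monotone subsequence of length $\sqrt{|I|}$ that may be \emph{decreasing}, so you cannot simply ``take the increasing one.'' The fix is standard --- traverse $\Pi$ in the opposite direction (swap the roles of $w_1$ and $w_l$) or reflect the drawing --- but it needs to be said.
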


\begin{proof}
  We continue to use the notation introduced earlier in this section.
  Recall that we aim at finding a large set~$I \subseteq V_\Pi$ such
  that no two vertices in~$I$ are connected by a chord of~$\Pi$.  The
  complement $V_\Pi \setminus I$ of this set~$I$ is the vertex
  cover~$V^*$ of~$C$ that we need for applying
  Lemma~\ref{lemma:modification:procedure}.

  Further, $F \subseteq I$ is the set of vertices that we 
  fixed in the first step, that is, in the construction of the
  $x$-monotone path~$\Pi_1$.  It follows from Proposition~1 in the
  paper by Pach and Tardos \cite{pt-up-02} that we can make sure that
  $\fix(G)=|F| \ge \sqrt{|I|}$.  Consider the graph~$G_C$ with vertex
  set~$V_\Pi$ and edge set~$C$.  An independent set
  in~$G_C$ has exactly the property that we want for~$I$.  Thus it
  suffices to show that the $l$-vertex graph~$G_C$ has 
  an independent set~$I$
  of size at least $(l+1)/2$.  We do this by giving a simple algorithm.

  Our algorithm is greedy: we always take a vertex~$v$ of smallest
  degree, put it in the independent set~$I$ under construction,
  remove~$v$ and the neighbors of~$v$ from~$V_\Pi$, and remove the
  edges incident to these vertices from~$C$.  We repeat this until~$G_C$ is
  empty.

  Note that~$G_C$ initially has at least one isolated (that is,
  degree-0) vertex, and that the bound is obvious if~$G_C$ is a
  forest---the algorithm first picks all isolated vertices and then
  repeatedly picks leaves.  Even if~$G_C$ contains cycles, the
  algorithm always picks vertices of degree at most~2.  This is due to
  the fact that all chords lie on one side of~$\Pi$, and thus~$G_C$ is
  and remains outerplanar, and any outerplanar graph has a vertex of
  degree at most~2.

  Let~$n_i$ be the number of vertices that have degree~$i$ when they
  are put in~$I$.  As observed above, $|I| = n_0 + n_1 + n_2$.  Whenever we put
  a vertex of degree~$i$ into~$I$, we remove $i+1$ vertices
  from~$V_\Pi$, thus $l = n_0 + 2 n_1 + 3 n_2$.  Let~$f$ be the number
  of bounded faces of~$G_C$.  Whenever the algorithm removes a
  degree-2 vertex, the number of bounded faces of~$G_C$ decreases by
  one, thus $f = n_2$.  We claim---and will prove below---that $f+1
  \le n_0$.  Now adding $1 \le n_0 - n_2$ to the above expression
  for~$l$ yields $l+1 \le 2 n_0 + 2 n_1 + 2 n_2 = 2 |I|$, or $|I| \ge
  (l+1)/2$, which proves the theorem.

  It remains to prove our claim, that is, $n_0 - 1 \ge f$.  In other
  words, we need to show that~$G_C$ contains at least one isolated
  vertex more than bounded faces.  Recall that~$G_C$ does \emph{not}
  include the edges of~$\Pi$.  For a chord $c=w_i w_j$ in~$C$, we
  define $\{w_{i},w_{i+1},\dots,w_{j}\} \subseteq V_\Pi$ to be the
  \emph{span} of~$c$.  Now consider a face~$F$ of~$G_C$ with vertices
  $w_{i_1},w_{i_2},\dots,w_{i_k}$ and $i_1 < i_2 < \cdots < i_k$.  The
  edges of~$F$ are $w_{i_1}w_{i_2}, w_{i_2}w_{i_3}, \dots,
  w_{i_k}w_{i_1}$.  Note that the span of $w_{i_k}w_{i_1}$ contains
  the span of every other edge of~$F$.  We define the span of~$F$ to
  be the span of the edge $w_{i_k}w_{i_1}$.

  We prove our claim by induction on~$f$.  As noted above, $G_C$
  contains at least one isolated vertex.  This establishes the base of
  the induction.  Now suppose that~$f>0$.  Consider the set~$M$ of all
  faces of~$G_C$ whose span is maximal with respect to set inclusion.
  If $|M|>1$, we apply the induction hypothesis to the subgraphs
  of~$G_C$ induced by the spans of the faces in~$M$.  Otherwise,
  let~$F^*$ be the only face in~$M$, and let $e_1,\dots,e_{k-1}$ be the
  edges of~$F^*$ whose span is properly contained in the span
  of~$F^*$.  We apply the induction hypothesis to the subgraphs
  of~$G_C$ induced by the spans of $e_1,\dots,e_{k-1}$.  Since $k \ge
  3$, there are at least two such subgraphs.  Each of them contains at
  least one isolated vertex more than bounded faces.  Taking~$F^*$
  into account, we conclude that~$G_C$ also contains at least one
  isolated vertex more than bounded faces.  This completes the proof
  of our claim.
\end{proof}

\subsection{Finding a Suitable Path}
\label{sub:finding:a:path}

We finally present two strategies for finding a suitable path~$\Pi$.
Neither depends on the geometry of the given
drawing~$\delta_0$ of~$G$.  Instead, they exploit the graph structure of~$G$.
The first strategy works well if~$G$ has a vertex of large degree and,
even though it is very simple, yields asymptotically tight bounds for
outerplanar graphs.

\begin{lemma}
  \label{lem:maxdeg}
  Let~$G$ be a triangulated planar graph with maximum degree~$\Delta$.
  Then $\fix(G) \ge \sqrt{(\Delta+1)/2}$.
\end{lemma}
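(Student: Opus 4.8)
The plan is to apply Theorem~\ref{theorem:main} with a single, well-chosen path, so that $l = \Delta$ and the theorem directly delivers the bound $\sqrt{(\Delta+1)/2}$. The natural choice for the distinguished vertex $u$ is a vertex of maximum degree, and the natural choice for $\Pi$ is the cycle formed by the neighbors of $u$, opened up into a path.

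First I would recall the standard structural fact that in a triangulated (maximal) planar graph on at least four vertices the neighbors of every vertex $u$ induce a cycle. Since every face---including the outer one---is a triangle, the faces incident to $u$ are exactly the triangles $u v_i v_{i+1}$, where $v_1, v_2, \dots, v_\Delta$ are the neighbors of $u$ listed in rotation order around $u$, and consecutive neighbors (cyclically) are joined by an edge. Thus $C_u = v_1 v_2 \cdots v_\Delta v_1$ is a cycle of length $\deg u = \Delta$. I then set $\Pi = v_1, v_2, \dots, v_\Delta$, that is, the cycle $C_u$ with its closing edge $v_\Delta v_1$ removed; this is a simple path on $l = \Delta$ vertices. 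Since $u v_1 v_\Delta$ is a triangular face and maximal planar graphs on at least four vertices are $3$-connected (so, as noted in Section~\ref{sub:definitions}, the embedding is determined by the choice of outer face), I may fix the embedding $\beta$ whose outer face is $u v_1 v_\Delta$.

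It then remains to verify the two side conditions of Theorem~\ref{theorem:main}. The vertex $u$ lies on one side of $\Pi$, namely the side carrying the fan of triangles $u v_1 v_2, \dots, u v_{\Delta-1} v_\Delta$ joining $u$ to $\Pi$. The key observation---and the step I expect to require the most care---is that \emph{every} chord of $\Pi$ lies on the opposite side. This holds because the region between $u$ and $\Pi$ is completely tiled by the fan triangles $u v_i v_{i+1}$: the only vertex on the $u$-side of $C_u$ is $u$ itself, and the only edges there are the spokes $u v_i$ together with the path edges $v_i v_{i+1}$. Consequently a chord $v_i v_j$ (with $j > i+1$ and $(i,j) \neq (1,\Delta)$), being an edge of $G$ with both endpoints on $C_u$ yet distinct from every spoke and path edge, cannot be drawn on the $u$-side and hence lies on the other side of $\Pi$ in $\beta$.

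With both hypotheses established for the path $\Pi$ of length $l = \Delta$, Theorem~\ref{theorem:main} yields $\fix(G) \ge \sqrt{(l+1)/2} = \sqrt{(\Delta+1)/2}$, as claimed. The only situation not covered by the cycle-neighborhood argument is that of very small graphs (such as $K_3$), where $\Delta \le 2$ and the bound $\sqrt{(\Delta+1)/2} \le \sqrt{3/2}$ holds trivially, since at least three vertices can always be fixed.
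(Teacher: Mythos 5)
Your proposal is correct and follows essentially the same route as the paper: take a maximum-degree vertex $u$, use its neighbors in rotation order as the path $\Pi$, choose the embedding so that the triangle $u v_1 v_\Delta$ is the outer face, note that the fan of triangles around $u$ forbids any chord of $\Pi$ on the $u$-side, and apply Theorem~\ref{theorem:main} with $l=\Delta$. Your additional details (the neighbor cycle in a maximal planar graph, the explicit tiling argument, and the trivial small-$\Delta$ case) merely make explicit what the paper's shorter proof leaves implicit.
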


\begin{proof}
  Let~$u$ be a vertex of degree~$\Delta$ and consider a plane
  embedding~$\beta$ of~$G$ where vertex~$u$ lies on the outer face.
  Since~$G$ is planar, such an embedding exists.  Let
  $\{w_1,\dots,w_{\Delta}\}$ be the set of neighbors of~$u$ sorted
  clockwise around~$u$ in~$\beta$.  This gives us the desired polygonal path
  $\Pi=w_1,\dots,w_{\Delta}$ that has no chords on the side that
  contains~$u$.  Thus Theorem~\ref{theorem:main} yields $\fix(G) \ge
  \sqrt{(\Delta+1)/2}$.
\end{proof}

Lemma~\ref{lem:maxdeg} yields a lower bound for outerplanar graphs
that is asymptotically tight as we will see in
Section~\ref{section:upper:bound:outerplanar}.

\begin{corollary}
  \label{cor:outerplanar}
  Let~$G$ be an outerplanar graph with~$n$ vertices.  Then $\fix(G)
  \ge \sqrt{n/2}$.
\end{corollary}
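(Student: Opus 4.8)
The plan is to reduce the statement to Lemma~\ref{lem:maxdeg} by attaching to~$G$ a single \emph{apex} vertex whose degree equals the whole vertex count. First I would fix an outerplanar embedding of~$G$, so that all~$n$ vertices lie on the boundary of the outer face. I then introduce a new vertex~$u$ placed in the outer face and join it to every vertex of~$G$; since all vertices are accessible from the outer face, the resulting graph $G+K_1$ is planar (this is the classical characterization of outerplanarity). Triangulating this graph into a maximal planar graph~$G'$ adds edges only, and because~$u$ is already adjacent to all original vertices, no new edge can be incident to~$u$. Hence in~$G'$ the vertex~$u$ has degree exactly~$n$, and~$G'$ is a triangulated planar graph of maximum degree $\Delta \ge n$, with~$u$ placeable on the outer face.

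Next I would run the argument of Lemma~\ref{lem:maxdeg} on~$G'$, choosing~$u$ as the maximum-degree vertex. Its neighbors, sorted clockwise in an embedding~$\beta$ with~$u$ on the outer face, form a path $\Pi = w_1,\dots,w_n$ (consecutive neighbors of~$u$ are adjacent because every face incident to~$u$ is a triangle), and~$\Pi$ has no chords on the side containing~$u$. Thus Theorem~\ref{theorem:main} applies and shows that \emph{every} drawing of~$G'$ can be untangled while fixing at least $\sqrt{(n+1)/2}$ vertices. The point I would stress here is that the fixed set~$F$ produced by that algorithm satisfies $F \subseteq I \subseteq V_\Pi$, and $V_\Pi$ consists exactly of the neighbors of~$u$, that is, of the original vertices of~$G$; in particular the auxiliary vertex~$u$ is never among the fixed vertices.

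Finally I would transfer the bound from~$G'$ back to~$G$. Given an arbitrary drawing~$\delta$ of~$G$, I extend it to a drawing of~$G'$ by placing~$u$ at any free point and drawing the added edges as straight segments (any initial crossings of these edges are irrelevant, since they are redrawn). Untangling~$G'$ as above produces a plane drawing in which at least $\sqrt{(n+1)/2}$ original vertices keep their $\delta$-positions; deleting~$u$ together with the added edges leaves a plane drawing of~$G$ with exactly those vertices fixed. Hence $\fix(G,\delta) \ge \sqrt{(n+1)/2} \ge \sqrt{n/2}$ for every~$\delta$, which is the claim. The only genuinely non-routine point is this transfer step: it hinges on the algorithm of Theorem~\ref{theorem:main} fixing \emph{only} vertices of~$\Pi$, so that every fixed vertex survives the deletion of the apex. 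Checking that the apex augmentation is planar and that~$u$ can be kept on the outer face is standard and poses no real obstacle.
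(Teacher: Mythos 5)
Your proof is correct, but it takes a different route from the paper, and the difference is instructive. The paper picks the fan center~$u$ \emph{inside}~$G$: it selects an arbitrary vertex~$u$ of the outerplanar graph and triangulates~$G$ itself so that~$u$ becomes adjacent to all other vertices (fanning~$u$ out to the remaining vertices of the outer face), producing a triangulated graph~$G'$ on the \emph{same} vertex set with $\Delta(G') = n-1$. Then Lemma~\ref{lem:maxdeg} can be used as a black box: $\fix(G) \ge \fix(G') \ge \sqrt{((n-1)+1)/2} = \sqrt{n/2}$, where the first inequality is the standard observation (Section~\ref{sub:basic}, also used in Theorem~\ref{theorem:final:lower:bound}) that a plane drawing of a triangulation of~$G$ restricted to the edges of~$G$ is a plane drawing of~$G$. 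Your apex augmentation instead adds a new vertex, so the vertex sets of~$G$ and~$G'$ differ, and this forces you to open up the proof of Theorem~\ref{theorem:main}: a black-box application would only give $\fix(G,\delta) \ge \sqrt{(n+1)/2} - 1$, which is strictly less than $\sqrt{n/2}$, since the apex could a priori be one of the fixed vertices. You correctly spotted this and invoked the white-box property $F \subseteq I \subseteq V_\Pi$, which indeed holds for the algorithm (the fixed set is chosen among the path vertices, and the algorithm explicitly relocates~$u$), so your transfer step is sound; as a small bonus it even yields the marginally stronger bound $\sqrt{(n+1)/2}$. Still, the paper's in-place triangulation avoids the transfer step entirely, which is why its proof is three sentences long.
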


\begin{proof}
  We select an arbitrary vertex~$u$ of~$G$.  Since~$G$ is outerplanar,
  we can triangulate~$G$ in such a way that in the resulting
  triangulated planar graph~$G'$ vertex~$u$ is adjacent to every other
  vertex in~$G'$.  Thus the maximum degree of a vertex in~$G'$ is
  $n-1$, and the result follows by Lemma~\ref{lem:maxdeg}.
\end{proof}

Our second strategy works well if the diameter~$d$ of~$G$ is large.

\begin{lemma}
  \label{lemma:large:diameter}
  Let~$G$ be a triangulated planar graph of diameter~$d$.  Then
  $\fix(G) \ge \sqrt{d}$.
\end{lemma}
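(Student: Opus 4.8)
The plan is to exhibit inside $G$ the combinatorial configuration required by Theorem~\ref{theorem:main}: a simple path $\Pi = w_1,\dots,w_l$ together with a face $uw_1w_l$, embedded so that $u$ and all chords of $\Pi$ lie on opposite sides of $\Pi$, and with $l \ge 2d-1$. Once this is in hand, Theorem~\ref{theorem:main} immediately yields $\fix(G) \ge \sqrt{(l+1)/2} \ge \sqrt{(2d)/2} = \sqrt{d}$. The case $d \le 1$ is trivial (then $\sqrt d \le 1 \le \fix(G)$), so I assume $d \ge 2$.

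First I would fix the apex. Let $u$ be an endpoint of a diametral pair, so that some vertex $t_0$ satisfies $\mathrm{dist}(u,t_0) = d$; thus the eccentricity of $u$ equals the diameter. I choose a plane embedding of the triangulation $G$ in which $u$ lies on the outer face. Since $G$ is triangulated, the outer face is a triangle $ust$, where $s$ and $t$ are the two neighbors of $u$ on the outer boundary and $st$ is an edge. These become the endpoints $w_1 = s$, $w_l = t$ of $\Pi$, and $uw_1w_l = ust$ is exactly the required face, with $u$ on the outer side.

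Next I would route the path deep into the graph and back. Let $P$ be a shortest path from $s$ to $t_0$ and $Q$ a shortest path from $t$ to $t_0$. As $s,t$ are neighbors of $u$, the triangle inequality gives $\mathrm{dist}(s,t_0), \mathrm{dist}(t,t_0) \ge d-1$, so each of $P$ and $Q$ has at least $d$ vertices. Concatenating $P$ with the reverse of $Q$ gives a walk from $s$ through $t_0$ to $t$; after cleaning it up into a simple path $\Pi$ that still reaches a vertex at distance $\ge d-1$ from both $s$ and $t$, we obtain $l \ge 2d-1$. Because $P$ and $Q$ are shortest paths, each is induced (a chord would be a shortcut), so every chord of $\Pi$ must join a vertex of $P$ to a vertex of $Q$.

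The delicate step -- which I expect to be the main obstacle -- is to force all these chords onto the side of $\Pi$ away from $u$, while simultaneously guaranteeing that the two halves are essentially internally disjoint so that the length bound $l \ge 2d-1$ genuinely holds. The idea is to choose $P$ and $Q$ extremally in the fixed embedding: $P$ as the shortest $s$--$t_0$ path hugging the $u$-side as tightly as possible (``leftmost'' relative to the edge $us$), and $Q$ symmetrically ``rightmost'' relative to $ut$. With this choice the region between $\Pi$ and $u$ should contain no chord: a chord drawn on the $u$-side would cross from $P$ to $Q$ through that region and could be used either to reroute $P$ (or $Q$) strictly closer to $u$, contradicting extremality, or to shortcut a shortest path, a contradiction. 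Making this extremal/planarity argument rigorous -- in particular controlling chords near the turning vertex $t_0$ and verifying that the extremal paths meet only deep in the graph, so that the merged path stays long -- is the real work; everything else is bookkeeping, after which Theorem~\ref{theorem:main} completes the proof.
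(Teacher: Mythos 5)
Your reduction to Theorem~\ref{theorem:main} and the arithmetic are fine, but the proposal has a genuine gap, and it is precisely the step you yourself flag as ``the real work.'' Two things go wrong with shortest paths. First, the simplicity-plus-length claim is unsupported: the shortest paths $P$ (from $s$ to $t_0$) and $Q$ (from $t$ to $t_0$) may share vertices close to $s$ and $t$; if $x$ is the first vertex of $P$ lying on $Q$, the only simple path you can extract from $P \cup Q$ without leaving those two paths is $s \to x \to t$, which misses $t_0$ entirely and can be short even when $d$ is large (nothing prevents every shortest path from $s$ and from $t$ to $t_0$ from being funneled through a common vertex $x$ near the outer triangle). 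Any repair that forces the path through $t_0$ must use non-shortest subpaths, and then the ``shortest paths are induced'' argument for chordlessness evaporates. Second, the extremal (leftmost/rightmost) argument is not merely delicate but based on a false dichotomy: a $u$-side chord $pq$ with $p \in P$, $q \in Q$ yields the rerouted walk $s \to p \to q \to t_0$, whose length may be \emph{strictly larger} than $\mathrm{dist}(s,t_0)$; in that case it is not a shortcut, and since it is not a shortest path at all it is not a competitor in the family over which $P$ was chosen extremal, so it contradicts neither shortest-ness nor leftmost-ness. Hence neither of the two properties Theorem~\ref{theorem:main} needs (a simple path of length $\ge 2d-1$, and all chords on the side away from $u$) is actually established.

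The missing idea, and the way the paper's proof works, is to replace shortest paths by the paths of a \emph{Schnyder wood} (realizer)~\cite{s-pgpd-89} of the triangulation. Fix an embedding with a diametral endpoint $s$ on the outer face $stu$, let $v$ satisfy $\mathrm{dist}(s,v)=d$, and take the three Schnyder paths $\pi_s,\pi_t,\pi_u$ from $v$ to the three outer vertices. Schnyder's theorem delivers exactly the two properties your approach struggles to secure simultaneously: the three paths are pairwise vertex-disjoint except at $v$, and each is chordless. Disjointness makes $\Pi$ (follow $\pi_s$ from $s$ to $v$, then $\pi_t$ from $v$ to $t$) a simple path; although Schnyder paths need not be shortest, their lengths are at least the graph distances $d$ and $d-1$, so $\Pi$ has at least $2d$ vertices. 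Chordlessness of $\pi_s$ and of $\pi_t$ rules out chords within either half of $\Pi$, and a chord between the two halves drawn on the $u$-side would have to cross $\pi_u$, which is impossible in a plane embedding. So all chords of $\Pi$ lie on the side away from $u$, and Theorem~\ref{theorem:main} gives $\fix(G) \ge \sqrt{(2d+1)/2} \ge \sqrt{d}$. Without Schnyder woods (or some equally strong structural substitute), the shortest-path plan does not go through.
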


\begin{proof}
  We choose two vertices~$s$ and~$v$ such that a shortest $s$--$v$
  path has length~$d$.  We compute any plane embedding of~$G$ that
  has~$s$ on its outer face.  Let~$t$ and~$u$ be the neighbors of~$s$
  on the outer face.
  Recall that a \emph{Schnyder wood} (or
  \emph{realizer})~\cite{s-pgpd-89} of a triangulated plane graph is a
  (special) partition of the edge set into three spanning trees each
  rooted at a different vertex of the outer face.  Edges can be viewed
  as being directed to the corresponding roots.  The partition is
  special in that the cyclic pattern in which the spanning trees enter
  and leave a vertex is the same for all inner vertices.
  Schnyder~\cite{s-pgpd-89} showed that this cyclic pattern ensures
  that the three unique paths from a vertex to the three roots are
  vertex-disjoint and chordless.  Let~$\pi_s$, $\pi_t$, and~$\pi_u$ be
  the ``Schnyder paths'' from~$v$ to~$s$, $t$, and~$u$, respectively.
  Note that the length of~$\pi_s$ is at least~$d$, and the lengths
  of~$\pi_t$ and~$\pi_u$ are both at least $d - 1$.  Let~$\Pi$ be the
  path that goes from~$s$ along~$\pi_s$ to~$v$ and from~$v$
  along~$\pi_t$ to~$t$.  The length of~$\Pi$ is at least $2d-1$.  Note
  that, due to the existence of~$\pi_u$, the path~$\Pi$ has no chords on
  the side that contains~$u$.  Thus, Theorem~\ref{theorem:main} yields
  $\fix(G,\delta) \ge \sqrt{d}$.
\end{proof}

Next we determine the trade-off between the two strategies above.

\begin{theorem}
  \label{theorem:final:lower:bound}
  Let~$G$ be a planar graph with $n \ge 4$ vertices.  Then 
  $\fix(G) \ge \sqrt{\frac{(\log n)-1}{\log \log n}}$,
  where the base of logarithms is~$2$.
\end{theorem}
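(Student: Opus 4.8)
The plan is to combine the two strategies from Lemmas~\ref{lem:maxdeg} and~\ref{lemma:large:diameter} and balance them against each other. Each lemma gives a lower bound on $\fix(G)$ in terms of a structural parameter of (a triangulation of) $G$: the maximum degree $\Delta$ yields $\fix(G) \ge \sqrt{(\Delta+1)/2}$, and the diameter $d$ yields $\fix(G) \ge \sqrt{d}$. The key observation is that these two parameters cannot both be small simultaneously in a graph with many vertices: if every vertex has small degree and the graph has small diameter, then a breadth-first-search argument bounds the total number of vertices. Concretely, a graph with maximum degree $\Delta$ and diameter $d$ has at most $1 + \Delta + \Delta(\Delta-1) + \dots + \Delta(\Delta-1)^{d-1}$ vertices, which is roughly $\Delta^d$. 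So $n \le \Delta^{d}$ (up to constant-factor adjustments), giving $\log n \lesssim d \log \Delta$.

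First I would fix an embedding and triangulate $G$, noting as in Section~\ref{sub:basic} that a lower bound for the triangulation transfers to $G$; then I may assume $G$ is triangulated with maximum degree $\Delta$ and diameter $d$. From the two lemmas I have $\fix(G) \ge \max\{\sqrt{(\Delta+1)/2},\ \sqrt{d}\} \ge \max\{\sqrt{\Delta/2},\ \sqrt{d}\}$. The counting bound $n \le \Delta^{d}$ (in fact $n \le 1 + \Delta \sum_{i=0}^{d-1}(\Delta-1)^i$, but $\Delta^d$ suffices asymptotically) rearranges to $d \ge (\log n)/(\log \Delta)$, so that $\sqrt{d} \ge \sqrt{(\log n)/\log \Delta}$. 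I now have two lower bounds, one increasing in $\Delta$ and one decreasing in $\Delta$, and the worst case is when they are equal. Balancing $\sqrt{\Delta/2}$ against $\sqrt{(\log n)/\log \Delta}$ shows the minimum of the maximum is attained roughly where $\Delta \log \Delta \approx \log n$, i.e.\ $\Delta \approx (\log n)/\log\log n$, and substituting back into $\sqrt{\Delta/2}$ yields the claimed bound $\sqrt{((\log n)-1)/\log\log n}$.

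The cleanest way to carry this out, rather than explicitly minimizing over $\Delta$, is a short case distinction keyed to a threshold. I would set $\Delta_0 = (\log n)/\log\log n$ (the balance point) and argue: if $\Delta \ge \Delta_0$, then Lemma~\ref{lem:maxdeg} directly gives $\fix(G) \ge \sqrt{\Delta/2} \ge \sqrt{\Delta_0/2}$, which exceeds the target bound; if $\Delta < \Delta_0$, then $\log \Delta < \log\log n$, so the diameter bound gives $\fix(G) \ge \sqrt{d} \ge \sqrt{(\log n)/\log \Delta} > \sqrt{(\log n)/\log\log n}$, again comfortably beating the target. The $-1$ in the numerator and the precise constants come from being careful about the $\Delta^d$ versus $(\Delta-1)^{d}$ counting and from the $(\Delta+1)/2$ rather than $\Delta/2$ in the degree lemma; I would absorb these into the threshold choice and verify the two cases meet at the stated value.

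The main obstacle, and the step requiring the most care, is the counting inequality relating $n$, $\Delta$, and $d$, together with getting the constants to line up so that both cases yield exactly $\sqrt{((\log n)-1)/\log\log n}$ rather than merely something of that order. The BFS/sphere-growth bound $n \le \Delta^{d}$ is elementary, but turning $n \le \Delta^d$ into a clean lower bound on $d$ and then choosing the threshold $\Delta_0$ so that the two cases dovetail—while honestly tracking the $+1$'s in $(\Delta+1)/2$ and in the geometric series—is where the routine-looking algebra must be done precisely. I expect no conceptual difficulty beyond this balancing, since both underlying lemmas are already established; the work is entirely in the optimization of $\Delta$ against the degree/diameter trade-off.
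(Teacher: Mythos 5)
Your route is the paper's route: triangulate, combine Lemma~\ref{lem:maxdeg} (degree) with Lemma~\ref{lemma:large:diameter} (diameter), relate $\Delta$ and $d$ by Moore-type counting, and split into two cases at a degree threshold. But the balancing you propose does not deliver the stated bound, and this is a genuine error rather than a constant that can be ``absorbed'' later. With your threshold $\Delta_0=(\log n)/\log\log n$, Case~1 yields only $\fix(G)\ge\sqrt{\Delta_0/2}=\sqrt{(\log n)/(2\log\log n)}$, and the inequality $(\log n)/(2\log\log n)\ge((\log n)-1)/\log\log n$ is equivalent to $\log n\le 2$, i.e.\ to $n\le 4$. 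So for every $n>4$ your Case~1 bound is strictly \emph{below} the target, with ratio tending to $1/\sqrt{2}$; the claim that it ``exceeds the target bound'' is false. The slip is in the balance computation itself: equating $\sqrt{\Delta/2}$ with $\sqrt{(\log n)/\log\Delta}$ forces $\Delta\log\Delta\approx 2\log n$, so the balance point is $\Delta\approx 2(\log n)/\log\log n$, not $(\log n)/\log\log n$; you dropped the factor $2$ coming from the denominator in $\sqrt{(\Delta+1)/2}$ and then substituted the wrong point back into the degree bound.

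Simply doubling your threshold does not cleanly repair things either: if $\Delta<2(\log n)/\log\log n$, then $\log\Delta<1+\log\log n-\log\log\log n$, which is at most $\log\log n$ only once $n\ge 16$, so small cases and the exact ``$-1$'' still need separate tracking. The paper sidesteps this tightrope by placing the threshold far \emph{above} the balance point, exploiting the asymmetry of the two bounds: since the diameter bound degrades only logarithmically in $\Delta$, one can afford $\Delta_0=(\log n)+2$. In the degree case this gives $\fix(G')\ge\sqrt{((\log n)+3)/2}$, which dominates $\sqrt{((\log n)-1)/\log\log n}$ for all $n\ge 4$ (for $n\ge 16$ because $\log\log n\ge 2$; smaller $n$ can be checked directly), while in the diameter case $\Delta-1\le\log n$ gives $\log(\Delta-1)\le\log\log n$, and the Moore bound in the form $n\le 2(\Delta-1)^j$ (valid since $\Delta\ge 3$ for a triangulation with $n\ge 4$) yields $d\ge((\log n)-1)/\log(\Delta-1)\ge((\log n)-1)/\log\log n$, exactly what Lemma~\ref{lemma:large:diameter} needs. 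So the skeleton of your argument is the right one, but the symmetric ``balance-point'' split must be replaced by this asymmetric threshold to obtain the theorem's explicit constants.
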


\begin{proof}
  Let~$G'$ be an arbitrary triangulation of~$G$.  Note that the
  maximum degree~$\Delta$ of~$G'$ is at least~$3$ since $n \geq 4$
  and~$G'$ is triangulated. 
  To relate~$\Delta$ to the diameter~$d$ of~$G'$, we use a very crude
  counting argument---Moore's bound: starting from an arbitrary vertex
  of~$G$, we bound the number of vertices we can reach by a path of a
  certain length.  Let~$j$ be the smallest integer such that $1 +
  (\Delta-1) + (\Delta-1)^2 + \dots + (\Delta-1)^j \ge n$.  Then $d
  \ge j$.  By the definition of~$j$ we have $n \le
  (\Delta-1)^{j+1}/(\Delta-2)$, which we can simplify to $n \le 2
  (\Delta-1)^j$ since $\Delta \ge 3$.  Hence we have $d \ge j \ge
  \frac{(\log n)-1}{\log (\Delta-1)}$.

  Now, if $\Delta \ge (\log n) + 2$, Lemma~\ref{lem:maxdeg} yields
  $\fix(G') \ge \sqrt{((\log n)+3)/2}$.  Otherwise $d \ge
  \frac{(\log n)-1}{\log \log n}$, and we can apply
  Lemma~\ref{lemma:large:diameter}.  Observing that $\fix(G) \ge
  \fix(G')$ yields the desired bound.
\end{proof}

\begin{remark}
  The proof of Theorem~\ref{theorem:final:lower:bound} (together with
  the auxiliary results stated earlier) yields an $O(n^2)$-time
  algorithm for untangling a given straight-line drawing of a planar
  graph~$G$ with~$n$ vertices by moving some of its vertices to new
  positions.  The first step, that is, computing the $x$-monotone
  path~$\Pi_1$, takes $O(n \log n)$ time~\cite{s-liads-61}.  Moving
  the vertices of~$\Pi_1$ such that the faces induced by the path and
  its chords become star-shaped takes $O(\gamma(n) n)$ time
  (Lemma~\ref{lemma:modification:procedure}), where $\gamma(n)=O(n)$
  is an upper bound on the time needed to perform an elementary
  operation involving numbers of bit length $O(n)$.  The remaining
  steps of our method can be implemented to run in $O(n)$ time.  This
  includes calling the algorithm of Hong and
  Nagamochi~\cite{hn-cdgnc-08} and computing the Schnyder
  wood~\cite{s-pgpd-89}, which we need in the proof of
  Lemma~\ref{lemma:large:diameter}.
\end{remark}

\section{Planar Graphs: Upper Bound} \label{sec:planar}

We now give an upper bound for general planar graphs that is better
than the upper bound $O((n \log n)^{2/3})$ of Pach and
Tardos~\cite{pt-up-02} for cycles.  Our construction uses the following
sequence, which we call~$\sigma_q$ and which we re-use in
Section~\ref{section:upper:bound:outerplanar}:
\[ \big((q-1)q, (q-2)q, \ldots, 2q, q, \underline{0}, 1+(q-1)q,
\ldots, 1+q, \underline{1}, \ldots, q^2-1, \ldots, (q-1)+q,
\underline{q-1}\big).\] %
Note that~$\sigma_q$ can be written as $(\sigma_q^0, \sigma_q^1, \dots
\sigma_q^{q-1})$, where $\sigma_q^i = ((q-1)q+i, (q-2)q+i, \ldots,
2q+i, q+i, i)$ is the subsequence of~$\sigma_q$ that consists of
all~$q$ numbers in~$\sigma_q$ that are congruent to~$i$ modulo~$q$.
To stress this, the last element in each of these subsequences is
underlined in~$\sigma_q$.
Thus~$\sigma_q$ consists of~$q^2$ distinct numbers.  It is not hard to
see the following.

\begin{observation}
  \label{observation:structure:sigma:one}
  The longest increasing or decreasing subsequence of~$\sigma_q$ has
  length~$q$.
\end{observation}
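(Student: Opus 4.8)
The plan is to analyze the structure of $\sigma_q$ by exploiting its decomposition into the blocks $\sigma_q^0, \sigma_q^1, \dots, \sigma_q^{q-1}$ and to establish the claim in two directions: first exhibiting increasing and decreasing subsequences of length exactly $q$, and then arguing that no longer monotone subsequence can exist.

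For the lower bound direction, I would first observe that each block $\sigma_q^i = ((q-1)q+i, (q-2)q+i, \dots, q+i, i)$ is itself strictly \emph{decreasing} and has length $q$, so $\sigma_q$ trivially contains a decreasing subsequence of length $q$. For an increasing subsequence of length $q$, I would pick one element from each block in a way that yields an increasing chain: for instance, take the underlined last elements $\underline{0}, \underline{1}, \dots, \underline{q-1}$, which equal $0, 1, \dots, q-1$ and appear in that order in $\sigma_q$; this is an increasing subsequence of length $q$. So both a longest increasing and a longest decreasing subsequence of length at least $q$ exist.

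The main work is the upper bound: showing no monotone subsequence exceeds length $q$. The key structural fact I would use is that the $q^2$ numbers are exactly $\{0,1,\dots,q^2-1\}$, and each is uniquely written as $aq+i$ with $0 \le a \le q-1$ (the ``level'' within a block) and $0 \le i \le q-1$ (the residue, i.e.\ the block index). Within a fixed block the values decrease as position increases, and the blocks are concatenated in order of increasing residue $i$. I would argue that any monotone subsequence can contain at most one element from any single block (for an increasing subsequence, since blocks are internally decreasing) and at most one element per ``level'' across blocks (for the decreasing case), and then count: there are exactly $q$ blocks and exactly $q$ levels. More precisely, for an increasing subsequence, any two chosen elements from the same block would violate monotonicity, so at most one per block, giving length at most $q$; a symmetric argument using the interaction between residue order and level value handles the decreasing case.

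The step I expect to be the main obstacle is pinning down the decreasing case cleanly, since a decreasing subsequence \emph{can} take multiple elements from the same block (the blocks are internally decreasing, which helps rather than hurts) but must then respect that later blocks have larger residues. The careful bookkeeping is to show that a decreasing subsequence picks at most one element from each of the $q$ levels: if two selected elements $aq+i$ and $a'q+i'$ with the earlier one occurring first satisfy the decreasing condition, their level indices must strictly decrease as we move right, and since there are only $q$ levels this caps the length at $q$. I would verify this by a short invariant argument on how position in $\sigma_q$ determines the pair $(i,a)$ lexicographically, so that monotonicity in value forces strict monotonicity in exactly one of the two coordinates. This reduces the whole claim to counting the $q$ possible values of a single coordinate.
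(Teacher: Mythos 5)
Your proof is correct. The paper gives no proof of this observation at all (it is introduced with ``It is not hard to see the following''), so there is nothing to compare against; your argument---each block $\sigma_q^i$ is internally decreasing, so an increasing subsequence takes at most one element per block, while for a decreasing subsequence the level $a$ of an element $aq+i$ must strictly decrease (within a block by the internal ordering, and across blocks because $i<i'$ and $aq+i>a'q+i'$ force $(a-a')q > i'-i \ge 1$, hence $a>a'$)---is exactly the natural argument the authors leave implicit, and your length-$q$ witnesses (any single block, and the block minima $0,1,\dots,q-1$) settle tightness.
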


We call two subsequences $\Sigma=s_1,s_2,\dots,s_{l}$
and $\Sigma'=s'_1,s'_2,\dots,s'_{l'}$ of~$\sigma_q$ 
\emph{separated} if 
\begin{enumerate}[(i)]
\item $s_{l}$ comes before~$s'_1$ or
      $s'_{l'}$ comes before~$s_1$ in~$\sigma_q$, and
\item $\max(\Sigma)  < \min(\Sigma')$ or 
      $\max(\Sigma') < \min(\Sigma$).
\end{enumerate}

\begin{observation}
  \label{observation:structure:sigma:two}
  Let~$\Sigma$ and~$\Sigma'$ be two separated decreasing or two
  separated increasing subsequences of~$\sigma_q$.  Then
  $|\Sigma \cup \Sigma'| \le q+1$.
\end{observation}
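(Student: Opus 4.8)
The plan is to exploit the two-dimensional grid structure hidden in $\sigma_q$. For an element $e$ of $\sigma_q$ write $e = kq+i$ with $i = e \bmod q \in \{0,\dots,q-1\}$ and $k = \lfloor e/q\rfloor \in \{0,\dots,q-1\}$; here $i$ is exactly the index of the block $\sigma_q^i$ containing $e$, and $k$ is its quotient. By the definition of $\sigma_q$ the position of $e$ is ordered lexicographically by the pair $(i,-k)$: blocks appear in increasing order of $i$, and inside each block the values, equivalently the quotients $k$, strictly decrease. First I would record two structural facts, each obtained by inspecting two consecutive elements $e$ (earlier) and $f$ (later) of a subsequence. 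If the subsequence is \emph{decreasing}, then $k_e > k_f$: when $i_e=i_f$ this follows directly from $e>f$, and when $i_e<i_f$ the alternative $k_e\le k_f$ would give $e<f$, a contradiction. Symmetrically, if the subsequence is \emph{increasing}, then $i_e<i_f$: the only other possibility permitted by the position order, namely $i_e=i_f$ with $k_e>k_f$, would force $e>f$. Thus quotients strictly decrease along any decreasing subsequence and block indices strictly increase along any increasing subsequence; since $k,i\in\{0,\dots,q-1\}$, each subsequence has at most $q$ terms, which re-proves Observation~\ref{observation:structure:sigma:one}. Note also that condition~(ii) makes the value sets of $\Sigma$ and $\Sigma'$ disjoint, so $|\Sigma\cup\Sigma'|=l+l'$ and it suffices to bound this sum by $q+1$.

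For two separated \emph{decreasing} subsequences I would invoke the value-separation condition~(ii). Since the quotient $k=\lfloor e/q\rfloor$ is nondecreasing in the value $e$, the hypothesis $\max\Sigma<\min\Sigma'$ (the reversed orientation being symmetric) yields $a:=\max_{e\in\Sigma}k_e \le b:=\min_{f\in\Sigma'}k_f$. By the first structural fact the $l$ quotients of $\Sigma$ are distinct and lie in $\{0,\dots,a\}$, so $l\le a+1$, while the $l'$ quotients of $\Sigma'$ are distinct and lie in $\{b,\dots,q-1\}$, so $l'\le q-b$. Hence $l+l'\le (a+1)+(q-b)\le q+1$.

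For two separated \emph{increasing} subsequences the value condition is useless, because the block index $i=e\bmod q$ is not monotone in $e$; here I would switch to the position-separation condition~(i). Assume $\Sigma$ ends before $\Sigma'$ begins, i.e.\ $s_l$ precedes $s'_1$ (the reversed orientation is symmetric). By the second structural fact the block indices increase along each subsequence, so $s_l$ carries the largest block index $a'=\max_{e\in\Sigma}i_e$ and $s'_1$ the smallest block index $b'=\min_{f\in\Sigma'}i_f$; because the position order is primary in the block index, the relation ``$s_l$ precedes $s'_1$'' gives $a'\le b'$. As before, $l\le a'+1$ and $l'\le q-b'$, whence $l+l'\le q+1$. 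The one genuine subtlety, and the step I expect to be the crux, is recognizing that the two cases require \emph{different} halves of the separation hypothesis: decreasing subsequences are controlled through values and hence quotients, whereas increasing subsequences are controlled through positions and hence block indices. Once the two monotonicity facts are established, both counting arguments are routine.
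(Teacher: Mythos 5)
Your proof is correct and follows essentially the same route as the paper's: your quotient/remainder coordinates are exactly the paper's two block decompositions (the value blocks $V_i = \{iq+j : 0 \le j \le q-1\}$ and the positional blocks $\sigma_q^i$), and, like the paper, you control the decreasing case via the value condition~(ii) and the increasing case via the positional condition~(i). The only difference is bookkeeping: where the paper argues ``at most one element per block, and at most one block common to both subsequences,'' you phrase the same facts as strict monotonicity of the quotients (resp.\ block indices) and then add the two resulting interval bounds.
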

\begin{proof}
  First consider the case that~$\Sigma$ and~$\Sigma'$ are both
  decreasing.  Since they are separated we can assume without loss of
  generality that $\max(\Sigma) < \min(\Sigma')$.  We define $V_i =
  \{iq + j : 0 \leq j \leq q-1\}$ for $i=0,\dots,q-1$.  Then, since
  $\Sigma$ and~$\Sigma'$ are both decreasing, they can each have at
  most one element in common with every~$V_i$.  Now suppose that they have
  both one element in common with some~$V_{i_0}$.  Then, since
  $\max(\Sigma) < \min(\Sigma')$, $\Sigma$ cannot have an element in
  common with any~$V_i$, $i > i_0$, and $\Sigma'$ cannot have an
  element in common with any~$V_i$, $i < i_0$.  Therefore,~$|\Sigma
  \cup \Sigma'| \leq q + 1$.

  If~$\Sigma$ and~$\Sigma'$ are both increasing, then, similarly as
  above, every subsequence~$\sigma_q^i$ with $0 \leq i \leq q-1$
  can have at most one element in common with each~$\Sigma$
  and~$\Sigma'$. Moreover, at most one subsequence~$\sigma_q^i$ 
  can have an element in common with both~$\Sigma$ and~$\Sigma'$. 
  This implies that $|\Sigma \cup \Sigma'| \leq q + 1$, as required.
\end{proof}

\begin{theorem} \label{thm:planar-upper} %
  For any integer $n_0>0$, there exists a planar graph \G\ with $n
  \ge n_0$ vertices and $\fix(\G) \le \sqrt{n-2}+1$.
\end{theorem}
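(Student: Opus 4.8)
The plan is to realise the extremal Erd\H{o}s--Szekeres permutation $\sigma_q$ geometrically. I would set $n = q^2 + 2$, so that the target bound $\sqrt{n-2}+1$ becomes exactly $q+1$; since $q$ may be taken arbitrarily large, this yields graphs with $n \ge n_0$ for every $n_0$. The graph~$\G$ consists of $q^2$ \emph{chain} vertices $v_1,\dots,v_{q^2}$ together with two \emph{apex} vertices $s$ and~$t$. I join both $s$ and~$t$ to every chain vertex and to each other, and I add the path edges $v_jv_{j+1}$. The resulting graph is a triangulation and is $3$-connected, so by Whitney's theorem its combinatorial embedding is unique up to reflection. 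In the drawing~$\delta_G$ all vertices lie on a single horizontal line~$\ell$: the chain vertex~$v_j$ is placed at $x$-coordinate $\sigma_q(j)$, the $j$-th entry of~$\sigma_q$, and $s,t$ are placed far to the left and far to the right. Thus the left-to-right order of the chain vertices along~$\ell$ is exactly~$\sigma_q$.

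Next I would fix an arbitrary plane drawing~$\delta'$ of~$\G$ and let~$F$ be the set of fixed vertices, all of which then lie on~$\ell$. The heart of the proof is to convert planarity of~$\delta'$ into a statement about monotone subsequences of~$\sigma_q$. I distinguish whether an apex is fixed. If, say, $s$ is fixed on~$\ell$, then every edge $sv_j$ to a fixed chain vertex is a sub-segment of~$\ell$, and two such segments on the same side of~$s$ overlap (the longer one passing through the endpoint of the shorter); hence at most two chain vertices stay fixed and $|F|$ is tiny. Otherwise neither apex is fixed. Using the unique rotation system, the chain vertices occur around~$s$ in their path order $v_1,\dots,v_{q^2}$, and because the fixed chain vertices are collinear while~$s$ lies off~$\ell$, their angular order around~$s$ coincides with their $x$-order along~$\ell$. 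Therefore the fixed chain vertices, read in path order, have monotonically varying $x$-coordinates, i.e.\ they form a monotone subsequence of~$\sigma_q$.

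With this reduction in hand the combinatorics already developed finishes the argument. By Observation~\ref{observation:structure:sigma:one} a single monotone subsequence of~$\sigma_q$ has at most~$q$ terms; and in the residual boundary cases, where the two apices force the fixed chain vertices to split into two \emph{separated} monotone subsequences of the same type, Observation~\ref{observation:structure:sigma:two} bounds their union by $q+1$. In every case at most $q+1$ vertices are fixed, so $\fix(\G) \le \fix(\G,\delta_G) \le q+1 = \sqrt{n-2}+1$, as required.

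The step I expect to be the main obstacle is the second paragraph: making rigorous, via the essentially unique embedding of the triangulation, that planarity of~$\delta'$ forces the fixed chain vertices into a monotone --- or, in the boundary cases, two separated same-type monotone --- subsequence(s) of~$\sigma_q$. The delicate part is the case analysis on the positions of the two apices, in particular ruling out configurations in which many chain vertices on the same side of a fixed apex survive. Once this geometric reduction is secured, Observations~\ref{observation:structure:sigma:one} and~\ref{observation:structure:sigma:two} close the proof at once.
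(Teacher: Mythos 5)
Your construction and overall plan coincide with the paper's: the same graph (a $q^2$-vertex path plus two adjacent apexes joined to everything else), the same collinear drawing realizing $\sigma_q$, and the same two combinatorial observations to finish. The difference lies in the middle, geometric step---and that is exactly where your argument has a genuine gap. From the unique rotation system of this $3$-connected triangulation you can only conclude that the fixed chain vertices, read in path order, appear around $s$ in a \emph{cyclic} order agreeing with their $x$-order on $\ell$. A cyclic rotation of a monotone sequence is not monotone, so your ``therefore they form a monotone subsequence of $\sigma_q$'' does not follow. This is not a removable technicality: the paper's Case~2 exhibits plane drawings in which neither apex is fixed and the fixed chain vertices genuinely split into two monotone runs, so any correct proof must handle that situation. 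Your third paragraph acknowledges such ``boundary cases'' but simply asserts that the two runs are \emph{separated} and of the \emph{same type} (both increasing or both decreasing), which is precisely the hypothesis of Observation~\ref{observation:structure:sigma:two}---and precisely what you never establish. The same-type condition is not cosmetic: two separated monotone subsequences of \emph{opposite} type can have total length about $2q$ (take the decreasing run $(q-1)q,(q-2)q,\dots,q$ from the first block of $\sigma_q$ followed by the increasing run $1,2,\dots,q-1$ formed by the last entries of the later blocks), which would destroy the bound. You flag this step yourself as the main obstacle, and indeed the proposal asserts rather than proves the part that carries all the geometric content.

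The good news is that your rotation idea can be repaired, and once repaired it yields an argument arguably cleaner than the paper's outer-face case analysis. First dispose of degeneracies: if an apex is fixed, \emph{or merely moves to another point of $\ell$} (a case your fixed/off-$\ell$ dichotomy omits), then edges from it to fixed chain vertices are overlapping subsegments of $\ell$, so at most two chain vertices stay fixed. Otherwise both apexes lie off $\ell$; the rays from $s$ to the fixed chain vertices span an angle less than $\pi$, so their angular order around $s$ is a genuine linear order equal to their $x$-order along $\ell$, and by Whitney this linear order must be a rotation (or reflected rotation) of their cyclic path order. Unwinding the rotation: in path order, the $x$-values consist of an increasing run of large values followed by an increasing run of small values (or the reflected, decreasing picture), one run possibly empty. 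That is exactly two separated monotone subsequences of $\sigma_q$ of the same type, so Observations~\ref{observation:structure:sigma:one} and~\ref{observation:structure:sigma:two} give $|F|\le q+1$ in every case. The paper reaches the same two observations by classifying the possible outer faces and using containment in the triangles $\Delta(a,b,i)$; your route, made rigorous in this way, replaces that case analysis with a single rotation-system argument, but as submitted it does not close the decisive step.
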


\begin{proof}
  Let $q = \lceil \sqrt{n_0} \rceil$.  We define the graph~\G\ as a
  path of~$q^2$ vertices~$1, 2, \dots, q^2$ all connected to the two
  endpoints of an edge~$\{a,b\}$ with $a, b \not\in \{1, 2, \dots, q^2\}$,
  see \figurename~\subref*{sfg:stacked}. Hence~$\G$ has $n=q^2+2$
  vertices.  Let~$\dG$ be the drawing of~$\G$ 
  where vertices~$1, 2, \dots, q^2$ are placed on a vertical
  line~$\ell$ in the order given by~$\sigma_q$.  We place vertices~$a$
  and~$b$ below the others on~$\ell$, see 
  \figurename~\subref*{sfg:convex}.
  \begin{figure}[tb]
    \subfloat[plane drawing of case~1\label{sfg:stacked}]%
    {\parbox[b]{.33\textwidth}{\centering
        \includegraphics[scale=.8]{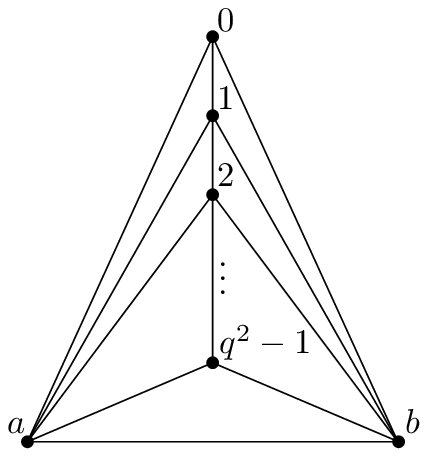}}}
    \hfill %
    \subfloat[drawing~$\dG$ (without edges)\label{sfg:convex}]%
    {\parbox[b]{.31\textwidth}{\centering
        \includegraphics[scale=.8]{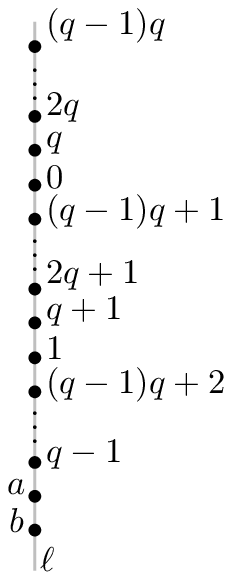}}}
    \hfill %
    \subfloat[plane drawing of case~2\label{sfg:case-two}]%
    {\parbox[b]{.33\textwidth}{\centering
        \includegraphics[scale=.8]{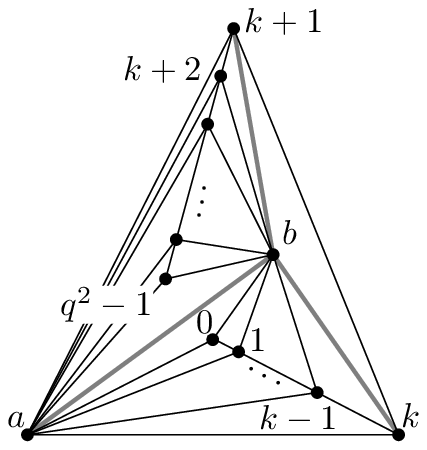}}}
    \caption{Drawings of the graph \G\ that we use in the proof of
      Theorem~\ref{thm:planar-upper}.} 
    \label{fig:planar-upper}
  \end{figure}
  Let~$\dG'$ be an arbitrary plane drawing of~$\G$ obtained
  by untangling~$\dG$.
  Since all faces of~$\G$
  are 3-cycles, the outer face in~$\dG'$ is a triangle.  All faces
  of~$\G$ contain~$a$ or~$b$.  This has two consequences.  First, $a$
  and~$b$ must move to new positions in~$\dG'$, otherwise all
  other vertices would have to move.  Second, at least one of them,
  say~$a$, appears on the outer face.

  \smallskip
  \noindent
  \emph{Case 1:} Vertex~$b$ also lies on the outer face. 

  \noindent
  Then there are just two possibilities for the embedding of~$\G$: as
  in \figurename~\subref*{sfg:stacked} or with the indices of all vertices
  reversed, that is, vertex~$i$ becomes $q^2-i-1$.  
%
  Now let $0 \leq i < j < k \leq q^2-1$ be three fixed vertices.  By
  symmetry we can assume that~$j$ lies in $\Delta(a,b,i)$.  Then~$k$
  also lies in $\Delta(a,b,i)$ since the path connecting~$j$ to~$k$
  does not intersect the sides of this triangle.  Note that~$k$ cannot
  lie between~$i$ and~$j$ on~$\ell$ as otherwise one of the edges
  $\{a,k\}$ and $\{b,k\}$ would intersect the polygonal path
  connecting~$i$ to~$j$.  Thus, each triplet of fixed vertices forms a
  monotone sequence along~$\ell$.  This in turn yields that \emph{all}
  fixed vertices in $\{0, \ldots, q^2-1\}$ form a monotone sequence
  along~$\ell$.  Due to the construction of~$\sigma_q$, such a sequence
  has length at most $q=\sqrt{n-2}$.

  \smallskip
  \noindent
  \emph{Case 2:} Vertex~$b$ does not lie on the outer face.

  \noindent
  Then the outer face is of the form $\Delta(a,k,k+1)$ with $0 \leq k
  \leq q^2-2$. The three edges $\{b,a\}$, $\{b,k\}$, and $\{b,k+1\}$
  incident to~$b$ split $\Delta(a, k, k+1)$ into the three triangles
  $\Delta(a,k,b)$, $\Delta(a,b,k+1)$, and $\Delta(b,k,k+1)$, see
  \figurename~\subref*{sfg:case-two}.  Every vertex of~$\dG'$ lies
  in one of them.  Since~$\dG'$ is plane, vertex~$k-1$ must belong
  to $\Delta(a,k,b)$, and, by induction, so do all vertices $i \leq k$;
  similarly, all vertices $i \geq k+1$ lie in $\Delta(a,b,k+1)$. We
  can thus apply the argument of case~1 to each of the two subgraphs
  contained in $\Delta(a,b,k)$ and $\Delta(a,b,k+1)$.  This yields two
  separated monotone sequences of length at most~$q$ each.  Note,
  however, that both are increasing or both are decreasing since one
  type forces~$a$ to the left and~$b$ to the right of~$\ell$ and the
  other does the opposite. Due to
  Observation~\ref{observation:structure:sigma:two}, the length of two
  separated monotone subsequences of~$\sigma_q$ sums up to at
  most $q+1=\sqrt{n-2}+1$.

  \smallskip\noindent
  To summarize, case~2 yields a larger number of potentially fixed
  vertices, and thus $\fix(\G, \dG) \le q+1=\sqrt{n-2}+1$. 

  Note that actually
  $\fix(\G, \dG) = q+1$ as we can fix, for example, the vertices 
  $0,q,2q,\dots,(q-1)q$, and $(q-1)q+2$.
\end{proof}

\section{An Upper Bound for Outerplanar Graphs}
\label{section:upper:bound:outerplanar}

In this section we show that the lower bound $\fix(\HH) \ge \sqrt{n/2}$
that holds for any outerplanar graph~$\HH$ with~$n$ vertices (see
Corollary~\ref{cor:outerplanar}) is asymptotically tight in the worst case. 

\begin{figure}[tb]
  \hfill
  \subfloat[\label{sfg:outerplanar-1}]{\fbox{\includegraphics{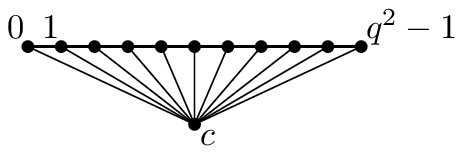}}}
  \hfill%
  \subfloat[\label{sfg:outerplanar-2}]{\fbox{\includegraphics{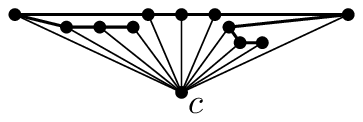}}}
  \hfill\hspace*{0pt}

  \caption{The outerplanar graph~$\HH$ that we use in the proof
    of Theorem~\ref{thm:outerplanar}.}
  \label{fig:outerplanar}
\end{figure}

\begin{theorem}
  \label{thm:outerplanar}
  For any integer~$n_0$, there exists an outerplanar graph~$H$ with 
  $n \ge n_0$ vertices and $\fix(H) \le 2\sqrt{n-1} + 1$.
\end{theorem}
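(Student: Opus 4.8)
The plan is to mirror the construction in the proof of Theorem~\ref{thm:planar-upper}, replacing the two-vertex ``spine'' $\{a,b\}$ by a single apex so that the graph stays outerplanar. Concretely, set $q=\lceil\sqrt{n_0}\rceil$ and let $H$ be the \emph{fan} consisting of a path $1,2,\dots,q^2$ together with one additional vertex~$a$ joined to every path vertex. Then $H$ is (maximal) outerplanar with $n=q^2+1$ vertices, so $\sqrt{n-1}=q$ and the target bound reads $2q+1$; choosing $q$ large gives $n\ge n_0$. For $\dH$ I would place the path vertices on a vertical line~$\ell$ in the order prescribed by~$\sigma_q$ and put $a$ below all of them on~$\ell$, see \figurename~\ref{fig:outerplanar}. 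As in the planar case this is a (degenerate) straight-line drawing, hence a legitimate input, and the whole analysis is about how many path vertices can remain on~$\ell$ in an arbitrary plane untangling~$\delta'$ of~$\dH$.

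First I would dispose of the apex. Since $a$ is adjacent to all path vertices and all of them lie on~$\ell$ in~$\dH$, if $a$ is kept fixed (hence on~$\ell$) then any two fixed path vertices would force the two segments joining them to~$a$ to overlap along~$\ell$. Thus at most one path vertex can stay fixed together with~$a$, giving at most two fixed vertices in total, which is negligible against~$2q$. From now on I may assume that $a$ has moved and count only fixed path vertices, all of which still lie on~$\ell$.

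The heart of the argument is a single geometric fact combined with the structure of the fan. Viewed from the displaced apex~$a$, the points of~$\ell$ subtend an open half-turn of directions that is traversed \emph{monotonically} with position along~$\ell$ (the line does not pass through~$a$). On the combinatorial side, every path vertex is a neighbor of~$a$ and consecutive path vertices are joined by path edges, so in any plane drawing the neighbors of~$a$ occur around~$a$ in the cyclic index order $1,2,\dots,q^2$; the only freedom in~$\delta'$ is the choice of outer face, which is some triangle $a\,k\,(k{+}1)$ and which designates exactly one ``seam'' in this cyclic order (the empty half-turn of directions at~$a$). Reading the fixed path vertices in order of their position on~$\ell$, that is, in angular order around~$a$, they therefore follow the cyclic index order broken at the seam, namely $k{+}1,\dots,q^2,1,\dots,k$ (or its reverse). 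This splits the fixed path vertices into at most \emph{two} blocks of consecutive indices, and within each block increasing index corresponds monotonically to position on~$\ell$. Each such block thus realizes an increasing or a decreasing subsequence of~$\sigma_q$, so by Observation~\ref{observation:structure:sigma:one} it has length at most~$q$. Two blocks fix at most~$2q$ path vertices, and accounting for the single extra vertex that the outer-triangle/apex bookkeeping may contribute yields $\fix(H,\dH)\le 2q+1=2\sqrt{n-1}+1$, as claimed.

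I expect the seam step to be the main obstacle: turning ``the monotone angular order around the moved apex must agree with the cyclic rotation at~$a$, broken at the outer-face seam'' into a rigorous statement. In particular one must argue that the rotation at~$a$ is forced (up to reflection) to be the path order even though the fan is only $2$-connected, rule out placements of~$a$ for which the half-turn of directions to~$\ell$ straddles the seam ambiguously, and verify that the two resulting blocks are genuinely index-contiguous so that each is a single monotone subsequence of~$\sigma_q$. I note in passing that a finer count, using that the two blocks are in fact \emph{separated} in the sense of Observation~\ref{observation:structure:sigma:two}, would even give the stronger bound $q+1$; but the two-block argument above already suffices for the stated theorem.
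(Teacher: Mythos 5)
Your construction is exactly the paper's (the fan on a $q^2$-vertex path plus an apex, drawn on a line in the order $\sigma_q$), and your disposal of the apex matches the paper's first step. The gap is precisely where you suspected it, but it is worse than a missing lemma: the statement you propose to prove is false. The fan is only $2$-connected, and every pair $\{a,i\}$ with $i$ an interior path vertex is a separation pair; flipping the subfan on $\{a,i,i+1,\dots,q^2\}$ into the triangle $a(i-1)i$ yields a plane embedding whose rotation at $a$ reads $1,\dots,i-1,i+1,\dots,q^2,i$, not the path order. (The paper notes this explicitly: $\HH$ ``has many plane embeddings \dots but only two outerplane embeddings'', see \figurename~\subref*{sfg:outerplanar-2}.) What is true is only that for every $k$ the neighbors $k,k+1,\dots,q^2$ of $a$ occupy a contiguous arc of the rotation at $a$---equivalently, the rotation is vertex $1$ followed by a unimodal arrangement of the rest---which leaves room for many cyclic orders other than the path order.

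Moreover, the conclusion you draw from the seam argument is itself false, so no repair of that argument can succeed. The paper's near-tightness example (\figurename~\ref{figure:fixed:vertices:outerplanar}) is a plane untangling of $\dH$ that fixes the $2q-2$ vertices $0,1,\dots,q-1,2q,3q,\dots,(q-1)q$ (in the paper's labeling $0,\dots,q^2-1$ with apex $c$); read along $\ell$ these appear as the \emph{decreasing} run $(q-1)q,(q-2)q,\dots,2q,0$ followed by the \emph{increasing} run $1,2,\dots,q-1$, which is not a subsequence of the cyclic index order cut at a single seam in either direction. Your own closing remark makes the failure quantitative: your two blocks would be separated in the sense of Observation~\ref{observation:structure:sigma:two}, forcing $\fix(\HH,\dH)\le q+2$ or so, which contradicts $\fix(\HH,\dH)\ge 2q-2$. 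The structure that actually holds---and that the paper proves by a topological argument about the simple polygon $R$ bounded by a subpath of the fan and two apex edges, rather than by rotation systems---is a decomposition of the fixed vertices into at most \emph{three} monotone runs: decreasing down to the minimum, increasing up to the maximum, then decreasing, with the first fixed vertex smaller than the last. Observation~\ref{observation:structure:sigma:one} then bounds the increasing run by $q$, and Observation~\ref{observation:structure:sigma:two} bounds the union of the two decreasing runs (which are separated) by $q+1$, giving the claimed $2q+1$.
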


\begin{proof}
  Let $q = \lceil \sqrt{n_0} \rceil$.  We define the outerplanar graph~\HH\
  as a path of~$q^2$ vertices $0,1,\dots,q^2-1$ and an extra vertex $c=q^2$
  that is connected to all other vertices, see
  \figurename~\subref*{sfg:outerplanar-1}.  Hence~\HH\ has $n=q^2+1$ vertices.
  Let~$\dH$ be the drawing of~$\HH$ where all vertices are placed
  on a horizontal line~$\ell$ as follows.
  Vertices~$0,\dots,q^2-1$ are arranged in the
  order~$\sigma_q$ introduced in Section~\ref{sec:planar}, and
  vertex~$c$ can go to an arbitrary (free) spot on~$\ell$.

  In the following we show that 
  $\fix(\HH,\dH) \le 2q+1 = 2\sqrt{n-1} +1$. 
  To this end,  let~$\dH'$ be an arbitrary plane drawing of~$\HH$ 
  obtained by untangling~$\dH$, and let~$F$ be the set of fixed vertices.
  Note that~$\HH$ has many plane
  embeddings---for example, \figurename~\subref*{sfg:outerplanar-2}---but only
  two outerplane embeddings: \figurename~\subref*{sfg:outerplanar-1} and its
  mirror image. 
  Our proof exploits the fact that the simple
  structure of~$\HH$ forces the left-to-right sequence of the fixed
  vertices to also have a very simple structure.

  Consider the drawing~$\dH'$.  If vertex~$c$ lies on~$\ell$ in~$\dH'$, then,
  since~$c$ is connected by an edge to every other vertex of~$\HH$
  and all these vertices lie on~$\ell$ in the drawing~$\dH$,
  at most two of these other vertices can be fixed. Hence, the interesting case
  is that~$c$ does not lie on~$\ell$ in~$\dH'$ and, therefore,
  $c \not \in F$. Hence, $F \subseteq \{0,1,\dots,q^2-1\}$.
  We only consider the interesting case that $|F| \geq 2$.
  Let~$m$ and~$M$ be the minimum and maximum in~$F$, respectively. 
  Without loss of generality we assume that~$c$ lies below~$\ell$
  and that~$m$ lies to the left of~$M$ (otherwise we reflect~$\dH'$ on the
  $x$/$y$-axis).  Let~$a$ and~$b$ be the left- and rightmost
  vertices in~$F$, see \figurename~\subref*{sfg:good}.

  \begin{figure}[htb]
    \centering \subfloat[good
    sequence\label{sfg:good}]{\includegraphics{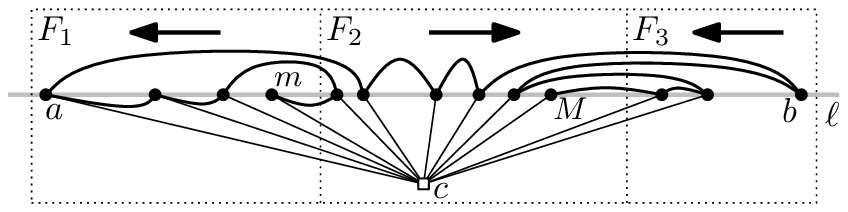}} \hfill
    \subfloat[bad
    sequence\label{sfg:bad}]{\includegraphics{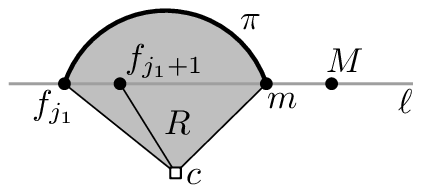}}
    \caption{Analyzing the sequence of fixed vertices along the line
      $\ell$.}
    \label{fig:sequences}
  \end{figure}

  Let $F_0 = f_1,f_2,\dots,f_{|F|}$ be the vertices in~$F$ ordered as we
  meet them along~$\ell$ from left to right.
  Let $F_1 = f_1,f_2,\dots,f_{j_1}$ be the longest subsequence of~$F_0$ 
  starting at~$f_1$ such that $f_{i-1} > f_i$ for $2 \leq i \leq j_1$. 
  Note that by definition $f_1 = a$.  We claim that $f_{j_1}=m$.  
  Assume to the contrary that $f_{j_1} \ne m$. 
  Then $f_{j_1} > m$ and, clearly, $F_1$ does not contain~$m$.  
  Thus~$m$ lies to the right of~$f_{j_1+1}$.

  Consider the path $\pi = f_{j_1}, f_{j_1}-1,\dots,m$ in~$\HH$. 
  Since $f_{j_1+1} > f_{j_1} > m$, $f_{j_1+1}$
  is not a vertex of~$\pi$. Let~$R$ be the polygon bounded
  by~$\pi$ and by the edges~$cf_{j_1}$ and~$cm$. Since~$\dH'$ is plane,
  $R$ is simple. Note that~$f_{j_1+1}$ lies in the interior of~$R$
  and~$M$ lies in the exterior of~$R$, as indicated in
  \figurename~\subref*{sfg:bad},  
  where the interior of~$R$ is shaded. To see this, 
  first note that~$M$, which lies to the 
  right of~$m$, cannot lie in the interior of~$R$ since otherwise the
  path in~$\HH$ with vertices $M,M-1,\dots,f_{j_1}$ would intersect
  $\pi$ or one of the edges that connect a vertex on~$\pi$
  with~$c$. But then, since~$\pi$ can intersect neither edge~$cM$ nor edge 
  $cf_{j_1+1}$, vertex~$f_{j_1+1}$ must lie in the interior of~$R$, as required.
  This yields a contradiction since the path 
  $M,M-1,\dots,f_{j_1+1}$ in~$\HH$ must now intersect the
  boundary of~$R$. Thus our assumption $f_{j_1} \ne m$ is
  wrong, and we have indeed $f_{j_1}=m$.

  Now let $F_2 = f_{j_1+1},f_{j_1+2},\dots,f_{j_2}$ be the longest 
  subsequence of~$F_0$ starting at~$f_{j_1+1}$ such that
  $f_{i-1} < f_i$ for each~$i$ with $j_1+1 \leq i \leq j_2$. 
  With similar arguments as above we can show that~$f_{j_2}=M$.  
  Moreover, let $F_3=f_{j_2+1},f_{j_2+2},\dots,f_{j_3}$ be the subsequence 
  of~$F_0$ starting at~$f_{j_2+1}$ such that
  $f_{i-1} > f_i$ for each~$i$ with $j_2+1 \leq i \leq j_3$. 
  Again, with similar arguments as above we can show that either~$F_3$
  is empty or 
  $f_{j_3} = f_{|F|}$. In addition, we can show in an analogous way that
  $f_1 < f_{|F|}$ holds.

  Thus, the set~$F$ is partitioned into~$F_1$, $F_2$, and~$F_3$. The
  sequence~$F_2$ is increasing, and both~$F_1$ and~$F_3$ are
  decreasing (or empty).  Thus, by
  Observation~\ref{observation:structure:sigma:one}, $|F_2| \le q$ and, 
  by Observation~\ref{observation:structure:sigma:two},  $|F_1|+|F_3| \le q+1$,
  since $f_1 < f_{|F|}$ implies that~$F_1$ and~$F_3$ are separated.
  Hence, $|F| \le 2q+1$, as required. 

  Note that this upper bound is
  almost tight: $\fix(\HH,\dH) \geq 2q - 2$ as indicated in 
  \figurename~\ref{figure:fixed:vertices:outerplanar}.
\end{proof}

\begin{figure}[h]
  \centering
  \includegraphics{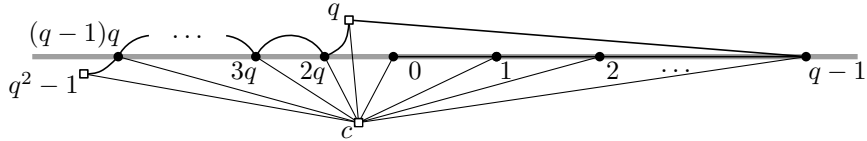}
  \caption{This plane drawing of the graph~$\HH$ 
    defined in the proof of Theorem~\ref{thm:outerplanar}
    shows that $\fix(\HH,\dH) \geq 2q - 2$ since it fixes the vertices
    $0,1,2,\dots,q-1,2q,3q,\dots,(q-1)q$ of~$\dH$.  The curved
    arcs indicate chains of vertices that have been moved.}
  \label{figure:fixed:vertices:outerplanar}
\end{figure}

\section{Conclusions}
\label{section:conclusions}

In this paper, we have presented several new results on the problem of
untangling a given drawing of a graph, a problem originally introduced
by Watanabe~\cite{w-op-98} for the special case of cycles.

On the computational side, we have proved that \textsc{MinShiftedVertices}
is NP-hard and also hard to approximate; we also showed that our proof
technique extends to another graph drawing problem, namely
\textsc{1BendPointSetEmbeddability} with given vertex--point
correspondence.  Related questions that remain
open are the inapproximability of \textsc{MaxFixedVertices} and the
hardness of \textsc{MaxFixedVertices} and \textsc{MinShiftedVertices}
for special classes of graphs such as cycles.  We have shown that all
these problems lie in \pspace, but do they also lie in~$\cal NP$?
Also, we are not aware of any result in the direction of parameterized
complexity.

On the combinatorial side, Table~\ref{table:summary} summarizes the
best currently known worst-case bounds for untangling several
important classes of planar graphs.
It reveals that the gap for general planar graphs is probably the most
interesting remaining open problem in the field.

\addcontentsline{toc}{section}{\numberline{}Acknowledgments}
\section*{Acknowledgments}

We thank Vida Dujmovi\'{c} for her comments on an earlier version of
this paper.  She came up with an idea that directly improved our lower
bounds in Theorem~\ref{theorem:final:lower:bound} and
Corollary~\ref{cor:outerplanar} by a factor of~$\sqrt{3}$.  Thanks to
Tom{\'a}{\v s} Gaven{\v c}iak for interesting discussions about the
results of Kang et al.\ \cite{kprsc-odpg-08} and to Ignaz Rutter
for the idea behind Theorem~\ref{thm:1bend-pspace}.  We thank all
anonymous and the non-anonymous referee for their detailed comments
that have helped us to improve the presentation of the paper.

\addcontentsline{toc}{section}{\numberline{}\refname}
\bibliographystyle{alpha}
\bibliography{abbrv,geometric_representation,packing+covering,np,comb_geometry,graph_drawing,graph_theory,vertex_move,morphing,posets}

\end{document}